 \newtheorem{thm}{Theorem}[section]
 \newtheorem{lemma}[thm]{Lemma}
 \newtheorem{prop}[thm]{Proposition}
 \theoremstyle{definition}
  \newtheorem{remk}[thm]{Remark}
 \theoremstyle{remark}
 \newtheorem{assumption}[thm]{Assumption}
 \numberwithin{equation}{section}
 \def\idtyty{{\mathchoice {\mathrm{1\mskip-4mu l}} {\mathrm{1\mskip-4mu l}} %
{\mathrm{1\mskip-4.5mu l}} {\mathrm{1\mskip-5mu l}}}}
 \def\idty{{\mathchoice {\mathrm{1\mskip-4mu l}} {\mathrm{1\mskip-4mu l}} %
{\mathrm{1\mskip-4.5mu l}} {\mathrm{1\mskip-5mu l}}}}
\newcommand{\bR}{{\mathbb R}}
\newcommand{\Ir}{{\mathbb Z}}
\newcommand{\bN}{{\mathbb N}}
\newcommand{\cA}{{\mathcal A}}
\newcommand{\cB}{{\mathcal B}}
\newcommand{\cF}{{\mathcal F}}
\newcommand{\cG}{{\mathcal G}}
\newcommand{\cH}{{\mathcal H}}
\newcommand{\cT}{{\mathcal T}}
\newcommand{\supp}{\operatorname{supp}}
\newcommand{\cP}{{\mathcal P}}
\newcommand{\cS}{{\mathcal S}}
\newcommand{\caA}{{\mathcal A}}
\newcommand{\caF}{{\mathcal F}}
\newcommand{\caH}{{\mathcal H}}
\newcommand{\caI}{{\mathcal I}}
\newcommand{\cK}{{\mathcal K}}
\newcommand{\caT}{{\mathcal T}}
\newcommand{\gap}{\mathrm{gap}}
\newcommand{\braket}[2]{\left\langle #1 , #2\right\rangle}
\newcommand{\id}{{\rm id}}
\newcommand{\ketbra}[1]{\vert #1\rangle\langle #1\vert}
\newcommand{\kettbra}[2]{\vert #1\rangle\langle #2\vert}
\newcommand{\dom}{\mathop{\rm dom}}
\newcommand{\spec}{\mathop{\rm spec}}
\newcommand{\be}{\begin{equation}}
\newcommand{\ee}{\end{equation}}
\newcommand{\bea}{\begin{eqnarray}}
\newcommand{\eea}{\end{eqnarray}}
\newcommand{\beann}{\begin{eqnarray*}}
\newcommand{\eeann}{\end{eqnarray*}}
\newcommand{\Rl}{\bR}
\newcommand{\eq}[1]{(\ref{#1})}
\title[Stability of the bulk gap]{Stability of the bulk gap for frustration-free topologically ordered quantum lattice systems}
\author[B. Nachtergaele]{Bruno Nachtergaele}
\address{Department of Mathematics and Center for Quantum Mathematics and Physics\\
University of California, Davis\\
Davis, CA 95616, USA}
\email{bxn@math.ucdavis.edu}
\author[R. Sims]{Robert Sims}
\address{Department of Mathematics \\
University of Arizona\\
Tuscon, AZ 85721, USA}
\email{rsims@math.arizona.edu}
\author[A. Young]{Amanda Young}
\address{
Munich Center for Quantum Science and Technology, and\\
Zentrum Mathematik, TU M\"{u}nchen\\
85747 Garching, Germany}
\email{young@ma.tum.de}
\begin{document}
\date{\today }
\begin{abstract}
We prove that uniformly small short-range perturbations do not close the bulk gap above the ground state of frustration-free quantum spin systems that satisfy a standard local topological quantum order condition. In contrast with earlier results, we do not require a positive lower bound for finite-system spectral gaps uniform in the system size. To obtain this result, we extend the Bravyi-Hastings-Michalakis strategy so it can be applied to perturbations of the GNS Hamiltonian of the infinite-system ground state.
\end{abstract}

\maketitle


\section{Introduction}\label{sec:intro}

One of the characteristic properties of gapped topologically ordered ground state phases of quantum many-body systems 
is the stability of the spectral gap above the ground state with respect to small perturbations of the Hamiltonian. 
Stability results for the ground state gap have a long history. The first result that included the stability of the gap of the AKLT chain is due to Yarotsky \cite{yarotsky:2006}. 
The approach we follow in this paper has a much broader range of applicability; it was introduced by Bravyi, Hastings, and Michalakis \cite{bravyi:2010} and further developed in \cite{bravyi:2011,michalakis:2013,hastings:2019,nachtergaele:2022}.
Other approaches have been introduced in recent years \cite{de-roeck:2019, frohlich:2020, del-vecchio:2021, del-vecchio:2022}. 
These new approaches can also treat some cases of models with unbounded on-site Hamiltonians, see \cite[Section 1]{nachtergaele:2022} for a more detailed 
discussion. The Bravyi-Hastings-Michalakis strategy, however, is the only approach that 
handles general cases with non-trivial topological order.

One obstacle to proving spectral gaps for topological insulators is the common occurrence of gapless edge states. 
Spectral analysis for interacting many-body systems is usually carried out for finite systems for which edge states typically imply that there is no spectral gap uniform in the system size. Nevertheless, there may be a bulk gap, 
meaning excitations away from the boundary of the system have energy bounded below uniformly in the system size. The goal of this work is to prove stability for the bulk gap in a way that does not require the assumption of a uniform positive lower bound in the spectrum of finite systems. Previously, it was shown how certain cases can be handled by considering sequences of finite systems with suitable boundary conditions. For example, such an approach may 
work if the edge states are absent in the model considered with periodic boundary conditions \cite{michalakis:2013, nachtergaele:2022}.
In general, however, there may not be a suitable boundary condition that `gaps out' the boundary modes or we may 
not know whether such a boundary condition exists. Systems defined on a quasicrystal structure, for example, may
 be an instance where no simple way of removing gapless edge modes is available \cite{loring:2019}. In our approach here, we only assume that the infinite system described in the GNS representation of the ground state has a gap. 
 Under natural assumptions consistent with the current state-of-the-art for ground state stability, we prove that sufficiently small but extensive perturbations do not close the gap.

We adapt the strategy of Bravyi, Hastings, and Michalakis \cite{bravyi:2010, bravyi:2011,michalakis:2013} and use the techniques we developed in \cite{nachtergaele:2019,nachtergaele:2022} to handle the infinite system setting. From a certain perspective, and apart from the technical aspects to deal with
unbounded Hamiltonians, the infinite system setting  allows for a simplification in the statement of conditions and the main result.  In particular, the local topological quantum 
order (LTQO) condition is simpler to state directly for the infinite system. The LTQO property is well know to hold for one-dimensional systems with MPS ground 
states \cite[Appendix B]{nachtergaele:2022}. It is also well-established for Kitaev's quantum double models \cite{kitaev:2006,cui:2020} and the Levin-Wen string-net model 
\cite{levin:2005,qiu:2020a}. LTQO was recently also shown to hold for the AKLT model on a decorated hexagonal lattice \cite{lucia:2022}.

For concreteness, we work in the quantum spin system setting but, using the arguments of \cite{nachtergaele:2018}, our approach is applicable to lattice fermion systems too.

The assumption that the bulk Hamiltonian has a gap in the spectrum above the ground state appears in several 
important recent works.  For example, the construction of an index for the classification of symmetry-protected topological phases in the works of Ogata and  co-authors makes use of this assumption \cite{moon:2020,ogata:2021b,ogata:2020a,ogata:2021}. Other examples are in the recent work on adiabatic theorems for infinite many-body systems \cite{bachmann:2018b,bachmann:2021,henheik:2022}. All of these works use the same general setting as described here in Section \ref{sec:main}. In addition to the main stability result, we also prove Theorem \ref{thm:bounded_differentiability}, which shows that a differentiability assumption introduced in \cite{moon:2020} and also used in later work \cite{ogata:2019,ogata:2021e,ogata:2020a,ogata:2021} is always satisfied.

\section{Setup and statement of the main results}\label{sec:main}

\subsection{Setup and notation}\label{sec:setup}

The models considered in this work are defined on a $\nu$-regular discrete metric space $(\Gamma,d)$, for some $\nu>0$. This means that there exists $\kappa >0$ so that for all $x\in\Gamma, n\geq 1$, $|b_x(n)| \leq \kappa n^\nu$, where $b_x(n)= \{y\in \Gamma \mid d(x,y)\leq n\}$. For $\Lambda\in \cP_0(\Gamma)$, the finite subsets of $\Gamma$, and
$n\geq 0$, we also define the sets $\Lambda(n)$ by
\be\label{Lambda_ball}
\Lambda(n)=\bigcup_{x\in\Lambda} b_x(n).
\ee  
The algebra of local observables of the system is the usual
$\cA^{\rm loc} = \bigcup_{\Lambda\in\cP_0(\Gamma)} \cA_\Lambda$. Here, $\cA_\Lambda$ is the matrix algebra $\bigotimes_{x\in\Lambda} M_{d_x}$
with $d_x$ the dimension of the spin at $x$. The $C^*$-algebra of quasi-local observables $\cA$ is the completion of $\cA^{\rm loc}$ with respect to the operator norm. For $A\in\cA^{\rm loc}$, the {\em support} of $A$, 
denoted by $\supp A$, is the smallest $X\subset \Gamma$ such that $A\in \cA_X$. 
For any $X\subset\Gamma$, $\Pi_X: \cA\to\cA_X$ is the conditional expectation with respect to the tracial state $\rho$ on $\cA$:
\be\label{cond_exp_a}
\Pi_X = \rho\restriction_{\Gamma\setminus X} \otimes \id_{\cA_X}.
\ee
In particular, for local $A$, $\Pi_X(A)$ is a normalized partial trace.

We are specifically interested in systems defined on {\em infinite} $\Gamma$ and often want to consider
approximations $A_n\in \cA_{\Lambda_n}$ of $A\in\cA$, where $\Lambda_n\in \cP_0(\Gamma)$ 
is an increasing sequence of finite volumes such that $\bigcup_n \Lambda_n =\Gamma$. We call such
a sequence $(\Lambda_n)$ an {\em increasing and absorbing sequence} (IAS). 
It will often be important to have an estimate for the speed of convergence of $A_n\to A$,
in terms of a non-increasing function $g:[0,\infty)\to(0,\infty)$ that vanishes at infinity, which we call a {\em decay function}. 
In this paper we will only use decay functions that satisfy a moment condition of the form
\be\label{moment}
\sum_{n\geq 0} (n+1)^\mu g(n) < \infty,\ \mbox{  for some }\ \mu\geq \nu.
\ee
In particular, we will often work with decay functions of the form
\be \label{class_of_gs}
g(r) = \frac{1}{(1+r)^\xi} e^{-a r^\theta},\quad  \xi\geq 0, a\geq 0, \theta\in [0,1],
\ee
which share the property
\be\label{good_g}
g(n) g(m) \leq C g(n+m), \for{ for all } n,m\geq 0,
\ee
for some constant $C$. 

Consider an IAS $(\Lambda_n)$ in $(\Gamma,d)$, and a decay function $g$.
Define a norm $\Vert \cdot \Vert_{(\Lambda_n),g}$ on $\cA^{\rm loc}$ and a Banach space $\cA^{(\Lambda_n),g}$ by
\begin{equation} \label{g-norm}
	\Vert A\Vert_{(\Lambda_n),g} = \| A \| + \sup_{n \in \mathbb{N}} \left( \frac{\| A - \Pi_{\Lambda_n}(A) \| }{ g(n)} \right),\quad  \cA^{(\Lambda_n),g} =\overline{\cA^{\rm loc}}^{\Vert \cdot\Vert_{(\Lambda_n),g}}.
\end{equation}
For a proof that $\cA^{(\Lambda_n),g}$ is the Banach space of all $A\in\cA$ for which $\Vert A\Vert_{(\Lambda_n),g}<\infty$, see \cite{moon:2020}. In fact, $\cA^{(\Lambda_n),g}$ is a Banach $*$-algebra.

For each $x\in\Gamma$, $\Lambda_n: =b_x(n)$ defines a IAS. In this case we set $\Vert\cdot\Vert_{(b_x(n)),g} = \Vert\cdot\Vert_{x,g}$. Define the set
\be
\cA^g: =   \bigcup_{x\in\Gamma}   \cA^{(b_x(n)),g}.
\ee
For any decay function $g$ satisfying \eq{good_g},
any two norms from $\{\Vert \cdot \Vert_{x,g}\mid x\in\Gamma\}$ are equivalent. Hence, for all $x\in\Gamma$
\be
\cA^{(b_x(n)),g} = \cA^g. 
\ee
In this case, $\cA^g$ is a Banach $*$-algebra. Elements $A\in\cA^g$ are called {\em $g$-local}.

We will often also assume that a decay function $g$ is uniformly summable over $\Gamma$, i.e.,
\begin{equation} \label{uni_sum}
\Vert g\Vert_1 :=\sup_{x \in \Gamma} \sum_{y \in \Gamma} g(d(x,y)) < \infty \, ,
\end{equation}
and additionally, that there is a constant $C>0$ such that
\be \label{conv_const}
\sum_{z\in\Gamma} g(d(x,z))g(d(z,y)) \leq Cg(d(x,y)), \mbox{ for all } x,y\in \Gamma.
\ee
Any decay function $g$ satisfying (\ref{uni_sum}) and (\ref{conv_const}) will be called an $F$-function. For any $\nu$-regular $\Gamma$,
the following are $F$-functions appearing in this work:
\be \label{class_of_Fs}
F(r) = \frac{1}{(1+r)^\xi} e^{-a r^\theta},\quad  \xi>\nu+1, a\geq 0, \theta\in (0,1].
\ee
In the case $\Gamma=\Ir^\nu$, which is $\nu$-regular,  \eq{class_of_Fs}  defines an $F$-function for all $\xi>\nu$.
For a discussion of these examples and some basic inequalities, see \cite[Appendix]{nachtergaele:2019}.

\begin{assumption}[Initial Interaction]\label{ass:initial_model}
We assume the initial model is defined by a finite-range, uniformly bounded, frustration-free interaction $h$ given in terms of a family $h=\{ h_x \}_{x \in \Gamma}$
which satisfies: 
\begin{enumerate}
\item There is a number $R \geq 0$, called the interaction radius, for which
$h_x^* = h_x \in \mathcal{A}_{b_x(R)}$ for all $x \in \Gamma$. 
\item These terms are uniformly bounded in the sense that
\begin{equation} \label{uni_bd}
\| h \|_{\infty} = \sup_{x \in \Gamma} \| h_x \| < \infty \, .
\end{equation}
\item The interaction is {\it frustration-free}, meaning that $h_x \geq 0$ for all $x \in \Gamma$ and for any $\Lambda \in \mathcal{P}_0( \Gamma)$, 
\begin{equation}\label{HLambdan}
{\rm min \, spec}(H_{\Lambda}) = 0 \quad \mbox{where} \quad H_{\Lambda} = \sum_{\stackrel{x \in \Lambda:}{ {\rm supp}(h_x) \subset \Lambda}} h_x \, .
\end{equation}
\end{enumerate}
The frustration-free condition implies that the ground state space is ${\rm ker}(H_{\Lambda})$
for any finite volume $\Lambda$. Moreover, $\psi \in {\rm ker}(H_{\Lambda})$ if and 
only if $\psi \in {\rm ker}(h_x)$ for each $x \in \Gamma$ with ${\rm supp}(h_x) \subset \Lambda$. 
Thus, denoting by $P_{\Lambda}$ the orthogonal projection onto ${\rm ker}(H_{\Lambda})$, for any $\Lambda_0 \subset \Lambda$, one has
\begin{equation}
P_{\Lambda} P_{\Lambda_0} = P_{\Lambda_0} P_{\Lambda} = P_{\Lambda} \, .
\end{equation}
\end{assumption}

For such a model, the derivation $\delta_0$ determining the infinite system dynamics is given by
\be\label{delta0}
\delta_0(A) = \sum_{x\in\Lambda(R)} [ h_x, A]  \quad  \mbox{for any } A\in \cA_\Lambda \mbox{ and } \Lambda\in\cP_0(\Gamma).
\ee
It is a standard result that there is a closed derivation extending $\delta_0$, which we also denote by $\delta_0$,
with domain $\dom(\delta_0)$ for which $\cA^{\rm loc}$ is a core \cite[Theorem 6.2.4]{bratteli:1997} (note that the factor $i$ is absorbed in the definition 
of the derivation in this reference). The system dynamics
is then the strongly continuous one-parameter group of $C^*$-automorphisms $\{\tau^{(0)}_t \mid t\in\Rl\}$ satisfying
\be
\frac{d}{dt} \tau_t^{(0)} (A) = i\tau_t^{(0)} (\delta_0(A)) \quad \mbox{ for all } A\in \cA^{\rm loc}.
\label{dynamics}\ee
In fact, this differential equation holds for all $ A\in \dom (\delta_0)$. Two other general properties are:
\begin{enumerate}
\item $\tau_t^{(0)}(\dom(\delta_0))\subset \dom(\delta_0)$ for all $t \in \mathbb{R}$; 
\item $\tau_t^{(0)}(\delta_0(A)) =\delta_0(\tau_t^{(0)}(A))$ for all $A\in\dom(\delta_0)$ and $t \in \mathbb{R}$.
\end{enumerate}

More generally, quantum spin models can be defined by an interaction on $\Gamma$ which, by definition, is a map
$\Phi:\cP_0(\Gamma)\to\cA^{\rm loc}$, with the property that
$\Phi(X)^* = \Phi(X) \in\cA_X$ for all $X\in \cP_0(\Gamma)$. 
For any decay function $g$, an interaction norm is defined by  
\be
\Vert \Phi \Vert_g = \sup_{x,y\in\Gamma} g(d(x,y))^{-1} \sum_{\substack{X\in\cP_0(\Gamma): \\ x,y\in X}}\Vert \Phi(X)\Vert \, .
\ee
When the above quantity is finite for some interaction $\Phi$, the function $g$ is said to measure the decay of $\Phi$. 
If $g$ is an $F$-function, the norm $\Vert\cdot \Vert_g$ is called an {\em F-norm}.
If $g$ is summable, in the sense of (\ref{uni_sum}), and $\| \Phi \|_g < \infty$, 
then a closable derivation on $\cA^{\rm loc}$ can be defined by setting
\be\label{general_derivation}
\delta(A) = \sum_{Y,Y\cap X\neq \emptyset} [\Phi(Y),A] \quad \mbox{for }  A\in \cA^{\rm loc} \mbox{ with } {\rm supp}(A) \subset X  \in \mathcal{P}_0( \Gamma) \, .
\ee 
One can prove conditions that guarantee that the derivation $\delta$ defined on $\cA^{\rm loc}$ is a generator of a strongly continuous dynamics 
given by automorphisms of $\cA$ \cite{bratteli:1987,bratteli:1997}. In practice, however, one usually directly proves 
the existence of the thermodynamic limit of the Heisenberg dynamics $\{\tau_t\mid t\in\Rl\}$. 
Standard results along these lines
prove the existence of the dynamics for $\Phi$ in a suitable Banach space of
interactions \cite{ruelle:1969,bratteli:1997,simon:1993} starting from a convergent series for small $|t|$. 
An alternative approach, based on Lieb-Robinson bounds \cite{lieb:1972}, was introduced by Robinson \cite{robinson:1976}. 
Lieb-Robinson bounds can be derived for any interaction $\Phi$ with a finite $F$-norm
 \cite{nachtergaele:2006}, and this allows one to extend the results for existence of the dynamics beyond
the Banach spaces of interactions $\cB_\lambda$ introduced by Ruelle \cite{ruelle:1969}. These ideas
are important for the construction of the spectral flow automorphisms \cite{bachmann:2012}.
This and some other generalizations relevant for the present work are discussed in detail in \cite{nachtergaele:2019}.

Recall that infinite-volume ground states associated to $\delta$ are those states $\omega$ on $\cA$ that satisfy
\be\label{defgs}
\omega (A^*\delta(A))\geq 0 \quad \mbox{for all } A\in \cA^{\rm loc}.
\ee
In the case of a frustration-free model as in \eq{HLambdan}, a  state $\omega$ is called a zero-energy ground state, or a frustration free ground state,
if  $\omega(h_x)=0$ for all $x\in\Gamma$. It is easy to see that a zero-energy ground state satisfies \eq{defgs}.

Let $(\cH,\pi,\Omega)$ be the GNS triple of $\omega$.
%
For the GNS representation of a ground state, as in (\ref{defgs}), 
there exists a unique, non-negative self-adjoint operator 
$H$ on $\cH$, with dense domain $\dom H$, satisfying $H \Omega =0$ and 
\be \label{rep_unpert_dyn}
\pi(\tau_t(A)) = e^{itH} \pi(A) e^{-itH} \quad \mbox{for all } A\in\cA \mbox{ and } t\in \Rl.
\ee
The full domain of $H$ is seldom described explicitly. However, for all systems we consider in this paper,
$\pi(\cA^{\rm loc})\Omega$ is a core for $H$.

%

The {\em (GNS) gap} of the model in the state $\omega$
is defined as
\be\label{bulkgap0}
{\rm gap}(H) = \sup \{ \gamma >0 \mid (0,\gamma)\cap \spec (H) =\emptyset\}.
\ee
If the set on the RHS is empty, one defines ${\rm gap}(H)=0$. We say that a ground state $\omega$ is gapped if  ${\rm gap}(H)>0$.

The equivalence of the following two conditions is easy to verify:
\begin{enumerate}
	\item For some $\gamma >0$, $\omega$ satisfies
	\begin{equation} \label{gap_check}
		\omega(A^* \delta(A)) \geq \gamma \omega(A^*A) \quad \mbox{for all } A \in \cA^{\rm loc} \mbox{ with } \omega(A) = 0;
	\end{equation}
	\item The ground state of the GNS Hamiltonian $H$ is unique and ${\rm gap}(H) \geq \gamma$. 
\end{enumerate}

A case of special interest is when $\Gamma$ is infinite and describes the bulk of a physical model while the same system on a subset of $\Gamma$ with a boundary would  describe an edge. In the first situation we will refer to the GNS gap as the {\em bulk gap} of the system.  A model with the same interaction restricted to a subspace of $\Gamma$ describing an edge, may have a vanishing gap while the bulk gap is positive. This is precisely the situation of interest here.

We will use that the GNS representation $\pi$ is an isometry. This follows from the fact that $\cA$ is simple \cite[Theorem 5.1]{glimm:1960}, which implies that $\ker\pi=\{0\}$.

\subsection{Main results}

We now state the assumptions for the main results. 

\begin{assumption}[Bulk gap]\label{ass:bulkgap}
We assume $\gamma_0:={\rm gap}(H_0)>0$, where $H_0$ is the GNS Hamiltonian of an infinite-volume, zero-energy ground state $\omega_0$ of a finite-range,
uniformly bounded, frustration free interaction  $\{h_x\}$ as in Assumption \ref{ass:initial_model}.
\end{assumption}

We also need to impose a condition that the local gaps do not close too fast. There generally is some 
freedom in choosing the family of finite volumes on which to impose this condition. We will assume that there is
a family $\cS = \{ \Lambda(x,n) \mid n\geq 0, x\in\Gamma\} \subset \cP_0(\Gamma)$, with $b_x(n)\subset \Lambda(x,n)$ for all $x$ and $n$, 
and an associated family of partitions of $\Gamma$ which \emph{separates $\cS$ and has at most polynomial growth}. Concretely, this means
there is a family of sets 
$\cT=\{\cT_n \mid n \geq 0\}$ and positive numbers $c$ and $\zeta$, such that for each 
$n\geq 0$, $\caT_n = \{\cT_n^i : i\in \caI_n \}$ is a partition of $\Gamma$ satisfying $|\caI_n|\leq c n^\zeta$ and
\be \label{sep_property}
\Lambda(x,n)\cap\Lambda(y,n) = \emptyset \;\text{  for all } \; x,y\in \cT_n^i \; \text{with}\; x \neq y.
\ee
In such cases, we say that $\cT$ is of $(c,\zeta)$-polynomial growth.

As an example, in the case of $\Gamma = \Ir^\nu$, we may take for $\Lambda(x,n)$ the $\ell^\infty$-ball of radius $n$ centered at $x$,
define $\caI_n = \Lambda(0,n)$ and, for each $i\in\caI_n$, set
\be
\cT^i_n=\{ x\in\Ir^\nu | x_j = i_j \mod 2n+1, i=1,\ldots, \nu\}.
\ee

\begin{assumption}[Local gaps]\label{ass:localgaps}
For an interaction $\{h_x\}$ of range $R$,  we assume there exist families $\cS$ and $\cT$, such that $\cT$
separates $\cS$  and is of $\zeta$-polynomial growth, and an exponent $\alpha\geq 0$ and constant $\gamma_1>0$, and
such that  the finite-volume Hamiltonians satisfy:
\be\label{localgaps}
\spec (H_{\Lambda(x,n)})\subset \{0\} \cup [n^{-\alpha} \gamma_1,\infty) \quad \mbox{for all } x \in \Gamma \mbox{ and } n \geq R.
\ee
\end{assumption}

It is important here that the local gaps are allowed to vanish in the limit of infinite system size. For example, certain types of topologically 
ordered two-dimensional systems are expected to have chiral edge modes with an energy of order $L^{-1}$ on a finite volume of diameter $L$. 
Whether or not such edge modes occur in frustration-free systems, however, is not clear. For the class of systems studied in \cite{lemm:2019b}, 
the authors find that finite-volume gaps of a system with gapless edge modes in the thermodynamic limit would have to decay at least as fast 
as $L^{-3/2}$. Other results of this type are in \cite{gosset:2016,kastoryano:2018,anshu:2020}.
This is consistent with the gapless boundary modes found in a class of toy models called Product Vacua with Boundary States which are of order $L^{-2}$ \cite{bachmann:2015,bishop:2016a}.
In any case, regardless of the possible values of the exponent $\alpha$, we will prove stability of the {\em bulk} gap.

The next assumption was introduced in the form we use here in \cite{michalakis:2013} 
where it is called {\em Local Topological Quantum Order} (LTQO). 

\begin{assumption}[LTQO]\label{ass:LTQO}
There is a decay function $G_0:[0,\infty)\to [0,\infty)$, with
\be\label{LTQO_moment}
\sum_{n\geq 0}n^q G_0(n)<\infty \mbox{ for some } q > 2(\nu + \zeta + \alpha),
\ee
and such that for all $m\geq k\geq 0, x\in\Gamma$, and $A \in \cA_{b_x(k)}$, the ground state projections satisfy
\be \label{LTQOforA}
\|P_{b_x(m)}AP_{b_x(m)}-\omega_0(A)P_{b_x(m)}\| \leq \|A\|  (1+k)^\nu G_0 (m-k).
\ee
\end{assumption}

As explained in detail in \cite[Section 8]{nachtergaele:2022},  if both the initial Hamiltonian and the 
perturbation (see below) have a local gauge symmetry, only observables $A$ that commute with this symmetry 
need to satisfy \eq{LTQOforA}. Other discrete symmetries can be treated similarly  (see
\cite[Section 8]{nachtergaele:2022}). Therefore, the stability results proved here (Theorems \ref{thm:finite_volume}
and \ref{thm:infinite_volume}) will also hold for symmetry-protected topological phases.

It is an interesting observation that the GNS Hamiltonians associated to frustration free models which satisfy Assumption~\ref{ass:LTQO} automatically
have a unique ground state. This is the content of the following propostion.

\begin{prop} \label{prop:unique_gs}
Let $\omega_0$ be an infinite-volume, zero-energy ground state of a frustration-free model satisfying Assumption~\ref{ass:LTQO}.
The kernel of the GNS Hamiltonian $H_0$ is one-dimensional.
\end{prop}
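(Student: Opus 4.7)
The plan is to show $\ker H_0 \subseteq \bigcap_m \mathrm{Ran}\,\pi_0(P_{b_x(m)})$, where $P_{b_x(m)}\in\cA$ is the orthogonal projection onto $\ker H_{b_x(m)}$, and to use LTQO to collapse this intersection to $\bC\Omega$. Fix $x\in\Gamma$. Because $\omega_0(h_y)=0$ and $h_y\geq 0$, one has $\pi_0(h_y)\Omega=0$ and hence $\pi_0(H_{b_x(m)})\Omega=0$; combined with the finite-dimensional inequality $H_{b_x(m)}\geq\gamma_m(\idty-P_{b_x(m)})$ for some $\gamma_m>0$, this forces $\pi_0(P_{b_x(m)})\Omega=\Omega$. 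Since $b_x(m)\subset b_x(m+1)$ implies $P_{b_x(m)}\geq P_{b_x(m+1)}$, the sequence $\{\pi_0(P_{b_x(m)})\}_m$ is a decreasing sequence of orthogonal projections on $\cH$ and converges strongly to a projection $P_\infty$ that fixes $\Omega$.

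To identify $\mathrm{Ran}\,P_\infty$, I apply $\pi_0$ to the LTQO bound \eqref{LTQOforA} and pass to the strong-operator limit, obtaining
\begin{equation*}
P_\infty \pi_0(A) P_\infty = \omega_0(A)\, P_\infty \qquad (A \in \cA^{\rm loc}),
\end{equation*}
which extends to all $A \in \cA$ by density. Then $P_\infty \pi_0(A)\Omega = \omega_0(A)\,\Omega \in \bC\Omega$ for every $A \in \cA$, and the cyclicity of $\Omega$ together with the continuity of $P_\infty$ forces $\mathrm{Ran}\,P_\infty = \bC\Omega$.

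The remaining and delicate step is $\ker H_0 \subseteq \mathrm{Ran}\,P_\infty$. Given $\psi \in \ker H_0$, I use that $\pi_0(\cA^{\rm loc})\Omega$ is a core for $H_0$ to pick $\psi_n = \pi_0(A_n)\Omega$ with $A_n \in \cA_{b_x(k_n)}$, $\psi_n \to \psi$, and $H_0\psi_n \to 0$. A direct computation based on $\pi_0(h_y)\Omega = 0$ and the finite interaction range $R$ gives
\begin{equation*}
\pi_0(H_{b_x(m')})\psi_n = \pi_0(\delta_0(A_n))\Omega = H_0\psi_n \qquad \text{whenever } m' \geq k_n + 2R .
\end{equation*}
For any fixed $m$ and $n$ large with $k_n + 2R \geq m$, the operator monotonicity $\pi_0(H_{b_x(m)}) \leq \pi_0(H_{b_x(k_n+2R)})$ then sandwiches
\begin{equation*}
0 \leq \langle\psi_n, \pi_0(H_{b_x(m)})\psi_n\rangle \leq \langle\psi_n, H_0\psi_n\rangle \leq \|\psi_n\|\,\|H_0\psi_n\| \longrightarrow 0 .
\end{equation*}
Since $\pi_0(H_{b_x(m)})$ is bounded, sending $n\to\infty$ yields $\pi_0(H_{b_x(m)})\psi = 0$, and the finite-dimensional gap inequality then forces $\pi_0(P_{b_x(m)})\psi = \psi$ for every $m$; letting $m \to \infty$ gives $\psi = P_\infty\psi \in \bC\Omega$. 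The delicate point is precisely the mismatch in scales: the identity $\pi_0(H_{b_x(m')})\psi_n = H_0\psi_n$ only holds for $m'$ growing with the support of $A_n$, so it cannot be applied directly at a fixed $m$, and it is operator monotonicity in set containment that lets one transfer the vanishing of $\langle\psi_n, H_0\psi_n\rangle$ down to fixed $m$.
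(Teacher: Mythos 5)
Your proof is correct, and it takes a somewhat different route than the paper, with one substantive addition. The paper argues by contradiction: assuming a unit vector $\psi\in\ker H_0$ with $\langle\psi,\Omega\rangle=0$, it approximates $\psi$ by $\psi_n=\pi_0(A_n)\Omega$ using only the density of $\pi_0(\cA^{\rm loc})\Omega$, invokes the inclusion $\ker H_0\subset\ker\pi_0(H_{b_x(m)})$ (asserted there without proof) to get $\langle\psi,\tilde P_{b_x(m)}\tilde A_n\tilde P_{b_x(m)}\Omega\rangle=\langle\psi,\psi_n\rangle\geq 2/3$ while $|\omega_0(A_n)|\leq 1/3$, and lets $m\to\infty$ to contradict \eqref{LTQOforA}. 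You instead argue directly: you introduce the strong limit $P_\infty$ of the decreasing projections $\pi_0(P_{b_x(m)})$, pass LTQO to the limit to obtain $P_\infty\pi_0(A)P_\infty=\omega_0(A)P_\infty$, and use cyclicity to conclude $\mathrm{Ran}\,P_\infty=\bC\Omega$; and, most notably, you actually prove the inclusion $\ker H_0\subseteq\mathrm{Ran}\,\pi_0(P_{b_x(m)})$, via the core property of $\pi_0(\cA^{\rm loc})\Omega$, the identity $\pi_0(H_{b_x(m')})\pi_0(A_n)\Omega=\pi_0(\delta_0(A_n))\Omega=H_0\pi_0(A_n)\Omega$ for $m'\geq k_n+2R$, and monotonicity of the local Hamiltonians in the volume. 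So your argument is more self-contained on precisely the step the paper leaves implicit, at the price of using the core property (stated in the paper as a standing fact) and the finite interaction range, whereas the paper's contradiction argument needs only density. Two small points: the restriction to ``$n$ large with $k_n+2R\geq m$'' is unnecessary, since when $k_n+2R<m$ your identity already holds with $m'=m$, so the inequality chain is valid for every $n$; and you should close by recalling $H_0\Omega=0$, so that $\ker H_0=\bC\Omega$ is exactly, not merely at most, one-dimensional.
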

\begin{proof}
By way of contradiction, let us assume there is a unit vector $\psi \in {\rm ker}(H_0)$ with
$\langle \psi, \Omega \rangle =0$. To simplify notation let us denote by
$\tilde{A}:=\pi_0(A)$ the representative of  $A \in \cA$ in the GNS space.
The LTQO condition (\ref{LTQOforA}) can be restated in $\cB(\cH)$ as follows: for all $m \geq k \geq 0$, $x \in \Gamma$, and $A \in \cA_{b_x(k)}$, 
\begin{equation} \label{LTQO_in_GNS}
\| \tilde{P}_{b_x(m)} \tilde{A} \tilde{P}_{b_x(m)} - \langle \Omega, \tilde{A} \Omega \rangle \tilde{P}_{b_x(m)} \| \leq \| A \| (1+k)^{\nu} G_0(m-k)  \, .
\end{equation}

First, since $\pi_0( \cA^{\rm loc}) \Omega$ is dense in $\cH$, there is $x \in \Gamma$ and operators
$A_n \in \cA_{b_x(n)}$ for which $\psi_n = \tilde{A}_n \Omega \to \psi$. In this case, 
\begin{equation}
\lim_{n \to \infty} \langle \Omega, \tilde{A}_n \Omega \rangle = \lim_{n \to \infty} \langle \Omega, \psi_n \rangle = \langle \Omega, \psi \rangle = 0.
\end{equation}
Thus, there is $n_1 \in \mathbb{N}$ so that $| \langle \Omega, \tilde{A}_n \Omega \rangle| \leq 1/3$ for all $n \geq n_1$.

Next, since ${\rm ker}(H_0) \subset {\rm ker}( \tilde{H}_{b_x(m)})$ for all $m \geq 0$ and $x \in \Gamma$, we have
\begin{equation} \label{2_vecs_in_proj}
\langle \psi, \tilde{P}_{b_x(m)} \tilde{A}_n \tilde{P}_{b_x(m)} \Omega \rangle =\langle \psi, \tilde{A}_n \Omega \rangle =  \langle \psi, \psi_n \rangle \, 
\end{equation}
for all $m \geq 0$. Since $1 = \lim_{n \to \infty} \langle \psi, \psi_n \rangle$, there is some $n_2 \in \mathbb{N}$ for which 
$| \langle \psi, \psi_n \rangle| \geq 2/3$ for all $n \geq n_2$. Combining this with (\ref{2_vecs_in_proj}), we find that
for any $m\geq n \geq n_2$,
\begin{equation}
2/3 \leq \|  \tilde{P}_{b_x(m)} \tilde{A}_n \tilde{P}_{b_x(m)} \|. 
\end{equation}
It follows that for $n\geq \max\{n_1,n_2\}$,
\begin{equation}
1/3 \leq \|  \tilde{P}_{b_x(m)} \tilde{A}_n \tilde{P}_{b_x(m)} - \langle \Omega, \tilde{A}_n \Omega \rangle \tilde{P}_{b_x(m)} \|,
\end{equation}
a contradiction to \eq{LTQO_in_GNS}, as the decay function $G_0$ satisfies $G_0(m) \to 0$ as $m \to \infty$.
\end{proof}

Next, we turn to the perturbations of the Hamiltonian $H_0$. We consider 
$\Phi(x,n)^* = \Phi(x,n) \in \mathcal{A}_{b_x(n)}$ for all $x \in \Gamma$ and $n \geq 0$. 
These define what we call an {\em anchored} interaction $\Phi$.
By regrouping, we need only consider those terms with $n \geq R$.

\begin{assumption}[Short-range perturbation]\label{ass:short-range}
There is a constant $\| \Phi \| \geq 0$, $a >0$, and $\theta \in (0,1]$ such that for all $x\in\Gamma$
\be\label{pert_decay}
\Vert \Phi(x,n)\Vert \leq \Vert \Phi\Vert e^{-a n^\theta} \quad \mbox{ for all } n\geq R.
\ee
\end{assumption}

\begin{remk}
Assume $\Phi$ satisfies Assumption~\ref{ass:short-range}. 
As indicated in (\ref{class_of_Fs}), for any $0<a' <a$ and $\xi > \nu$, the function $F:[0, \infty) \to (0, \infty)$ given by
\begin{equation} \label{our_F}
F(r) = \frac{e^{- a' r^{\theta}}}{(1+r)^{\xi}} \quad \mbox{for all } r \geq 0,
\end{equation}
is an $F$-function on $\Gamma$. Let $\delta = a-a'>0$ and note that for any $x,y \in \Gamma$ with $d(x,y) \geq R$, we have 
\begin{eqnarray}
\sum_{z \in \Gamma} \sum_{\stackrel{n \geq R:}{x,y \in b_z(n)}} \| \Phi(z,n) \| & = & \sum_{n \geq d(x,y)} \sum_{z \in b_y(n) \cap b_x(n)} \| \Phi(z,n) \| \nonumber \\
& \leq & \kappa \| \Phi \| \sum_{n \geq d(x,y)} n^{\nu} e^{-a n^\theta} \leq C F(d(x,y))
\end{eqnarray}
with $C:= \kappa \| \Phi \| \sum_{n \geq 0} (1+ n)^{\nu + \xi} e^{- \delta n^{\theta}} < \infty$. Thus, $\| \Phi \|_F \leq C$, with $F$ as in (\ref{our_F}).
\end{remk}

The focus of this work is to analyze the stability of the bulk gap under the presence of perturbations given by an anchored interaction $\Phi$ satisfying Assumption~\ref{ass:short-range}. We consider perturbed Hamiltonians of the form
\be \label{fv_pert_ham}
H(\Lambda,s) = H_0 + s \pi_0(V_\Lambda), \quad s\in\bR
\ee
where, for any finite volume $\Lambda\in\cP_0(\Gamma)$,
\be\label{perturbation} 
V_{\Lambda} = \sum_{x \in \Lambda} \sum_{\stackrel{n\geq R:}{ b_x(n) \subset \Lambda}} \Phi(x,n).
\ee
Clearly, $V_\Lambda\in \cA_{\Lambda}$ is bounded and self-adjoint, and so $H(\Lambda,s)$ defines for all $s\in\Rl$ a self-adjoint Hamiltonian on $\cH$ with the same dense domain as $H_0$. 

In the next several sections we will prove the following theorem, which establishes that the spectral gap of $H(\Lambda,s)$ remains open for small $|s|$ uniformly in the finite volume $\Lambda$.

\begin{thm}[Stability of the gap uniformly in the perturbation region]\label{thm:finite_volume}
Suppose that $\{h_x\}$ and $\omega_0$ 
satisfy Assumptions \ref{ass:bulkgap} -- \ref{ass:LTQO}, and
$\Phi$ is an anchored interaction satisfying Assumption \ref{ass:short-range}. Then,
for all $\gamma \in (0,\gamma_0)$,  there exists $s_0(\gamma)>0$, such that for all real $s$, $|s| < s_0(\gamma)$, and
$\Lambda\in\cP_0(\Gamma)$, we have
\be\label{stable_gap}
\spec H(\Lambda,s) \subset \{ E(\Lambda,s) \} \cup [E(\Lambda,s)+\gamma,\infty)
\ee
with $H(\Lambda, s)$ as in (\ref{fv_pert_ham}) and
\be \label{pert_gse}
E(\Lambda , s) = \inf\spec H(\Lambda,s).
\ee
\end{thm}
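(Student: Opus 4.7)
The plan is to adapt the Bravyi-Hastings-Michalakis (BHM) iterative procedure, as developed in \cite{QLBII}, directly at the level of the GNS representation of $\omega_0$ rather than on a sequence of finite volumes with a uniform spectral gap. The strategy is to construct a unitary $U(\Lambda,s)\in\cB(\cH)$ and a bounded self-adjoint $W(\Lambda,s)\in\cB(\cH)$, with $W(\Lambda,s)\Omega=0$ and $\|W(\Lambda,s)\|=O(|s|)$ uniformly in $\Lambda$, such that
\be
U(\Lambda,s)^* \bigl(H(\Lambda,s) - E(\Lambda,s)\bigr) U(\Lambda,s) = H_0 + W(\Lambda,s).
\ee
Combined with the operator inequality $H_0 \geq \gamma_0(\idty - \ketbra{\Omega})$ on $\dom(H_0)$, which follows from Assumption \ref{ass:bulkgap} together with Proposition \ref{prop:unique_gs}, this yields \eqref{stable_gap} as soon as $\|W(\Lambda,s)\| < \gamma_0 - \gamma$, which is guaranteed for $|s|<s_0(\gamma)$ by the norm bound on $W(\Lambda,s)$.

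To build $U(\Lambda,s)$ and $W(\Lambda,s)$, I would decompose $V_\Lambda=\sum_{x\in\Lambda}\sum_{n\geq R}\Phi(x,n)$ and process each anchored term with a quasi-adiabatic filter against the local Hamiltonian $\pi_0(H_{\Lambda(x,n)})$. The local gap $\gamma_1 n^{-\alpha}$ supplied by Assumption \ref{ass:localgaps} splits the filtered $\Phi(x,n)$ into a ``diagonal'' block with respect to $\pi_0(P_{\Lambda(x,n)})$ plus an off-diagonal remainder of norm controlled by $n^\alpha/\gamma_1$ times $\|\Phi(x,n)\|$. The partition family $\cT$ regroups these corrections into $O(n^\zeta)$ mutually commuting groups, so the iteration is well organized. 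LTQO (Assumption \ref{ass:LTQO}) then replaces each diagonal block by a scalar: for $m(n)>n$ chosen to balance errors, $\pi_0(P_{b_x(m)})\pi_0(A)\pi_0(P_{b_x(m)})$ equals $\omega_0(A)\pi_0(P_{b_x(m)})$ up to norm $\|A\|(1+n)^\nu G_0(m(n)-n)$. The scalar contributions are absorbed into $E(\Lambda,s)$, while the remaining errors sum, thanks to the stretched-exponential decay of $\Phi$ (Assumption \ref{ass:short-range}) and the moment condition \eqref{LTQO_moment}, to a quantity of order $|s|$ that is independent of $\Lambda$. The exponent $\zeta+\alpha+\nu/2$ in \eqref{LTQO_moment} is precisely what is required to absorb the partition growth $n^\zeta$, the local-gap loss $n^\alpha$, and the LTQO volume factor $(1+n)^\nu$ (with $\nu/2$ appearing after a Cauchy-Schwarz step on $\sqrt{G_0}$).

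\textbf{Main obstacle.} The essential difficulty is that $H_0$ is unbounded and, by the very premise of the paper, does not come with a uniform finite-volume spectral gap against which the standard BHM quasi-adiabatic construction can be run. The quasi-adiabatic generator must therefore be built from the infinite-volume Heisenberg dynamics $\tau_t^{(0)}$ on $\cA$ (which is well defined by the finite-range property of $\eta$ together with standard Lieb-Robinson bounds) and then lifted to $\cB(\cH)$ via $\pi_0$, rather than through a functional calculus of a spectral projection of $H_0$ itself. Verifying that the resulting infinite sum of local corrections defines a bounded operator on $\cH$, preserves $\dom(H_0)$, and yields the above conjugation identity with a bound on $\|W(\Lambda,s)\|$ that is linear in $|s|$ and \emph{uniform in $\Lambda$}, is the technical heart of the argument; the infinite-system Lieb-Robinson and $F$-function machinery of \cite{nachtergaele:2019,QLBII} is what makes this rigorous.
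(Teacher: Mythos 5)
Your overall architecture---conjugating $H(\Lambda,s)$ by the spectral flow in the GNS representation, using LTQO to push the ground-state (scalar) contributions into $E(\Lambda,s)$, and organizing the residual terms via the separating partitions and the local gaps---is indeed the paper's strategy. However, your central quantitative claim, that $\|W(\Lambda,s)\|=O(|s|)$ \emph{uniformly in} $\Lambda$, is false, and the concluding step built on it fails. The perturbation $V_\Lambda=\sum_{x\in\Lambda}\sum_{n\geq R}\Phi(x,n)$ is extensive: its norm, and hence the norm of the transformed perturbation $W(\Lambda,s)=\sum_{x}\sum_{m}\pi_0(\Phi^{(2)}(x,m,s))$, grows proportionally to $|\Lambda|$ (the summability established in Theorems \ref{thm:Step1} and \ref{thm:Step2} carries a factor $|\Lambda|$). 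Consequently there is no $\Lambda$-independent $s_0>0$ for which $\|W(\Lambda,s)\|<\gamma_0-\gamma$ for all finite $\Lambda$, and the argument ``$H_0\geq\gamma_0(\idty-\ketbra{\Omega})$ plus a small-norm perturbation with $W(\Lambda,s)\Omega=0$'' cannot produce the uniform $s_0(\gamma)$ asserted in the theorem.

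What is needed instead---and what the paper proves in Theorem \ref{thm:gen_form_bd}---is a \emph{relative form bound}. Because each transformed term is annihilated by the local ground-state projection, $\Phi^{(2)}(x,m,s)P_{b_x(m)}=P_{b_x(m)}\Phi^{(2)}(x,m,s)=0$, one has, with $Q_{\Lambda(x,m)}=\idty-P_{\Lambda(x,m)}$,
\be
|\langle\psi,\pi_0(\Phi^{(2)}(x,m,s))\psi\rangle|\;\leq\;\|\Phi^{(2)}(x,m,s)\|\,\langle\psi,\pi_0(Q_{\Lambda(x,m)})\psi\rangle\;\leq\;\frac{\|\Phi^{(2)}(x,m,s)\|}{\gamma(m)}\,\langle\psi,\pi_0(H_{\Lambda(x,m)})\psi\rangle ,
\ee
and since the sets $\Lambda(x,m)$ with $x$ in a fixed partition class $\cT_m^i$ are pairwise disjoint, the local Hamiltonians within each class sum to at most $H_0$ in the form sense. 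This is what tames the extensive sum over $x$: it is converted into a single $\langle\psi,H_0\psi\rangle$, at the cost of the factor $c\,m^{\zeta}$ from the number of classes and $m^{\alpha}/\gamma_1$ from the local gaps, which is exactly what the moment condition \eqref{LTQO_moment} is calibrated to absorb. One then obtains $|\langle\psi,W(\Lambda,s)\psi\rangle|\leq s\beta\langle\psi,H_0\psi\rangle$ with $\beta$ independent of $\Lambda$, hence ${\rm gap}(H(\Lambda,s))\geq\gamma_0(1-s\beta)$ and $s_0(\gamma)\geq(\gamma_0-\gamma)/(\beta\gamma_0)$. Your sketch gestures at the right ingredients (the blocks with respect to $P_{\Lambda(x,n)}$, the partition count, the $n^\alpha$ loss), but it deploys them to control an operator norm rather than a quadratic form relative to $H_0$; without that form-bound step the extensivity of the perturbation is not controlled and the proof does not close.
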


We remark that the quantity  $s_0(\gamma)$ only depends on the values of $\kappa$ and $\nu$ 
of the lattice, $\| h \|$, the gap $\gamma_0$, the parameters in Assumption~\ref{ass:localgaps}, 
the decay function in Assumption~\ref{ass:LTQO}, and a suitable $F$-norm of the perturbation $\Phi$. 
From the arguments in this paper, one can derive an explicit lower bound for $s_0(\gamma)$ in terms 
of these quantities, see Section~\ref{sec:final_summary}.

We also investigate the situation where the perturbation region $\Lambda$ tends to all of $\Gamma$. 
Consider any IAS $(\Lambda_n)$. 
We will denote by $\tau_t^{(\Lambda_n,s)}$ the dynamics on $\cA$ generated by the derivation 
\begin{equation} \label{der_with_fv_pert}
\delta_s^{\Lambda_n}(A) = \delta_0(A) + [sV_{\Lambda_n}, A] \quad \mbox{for } A \in \cA^{\rm loc} \, .
\end{equation}
As discussed in \cite[Definition 3.7]{nachtergaele:2019}, the sequence of interactions  $h + s \Phi\restriction_{\Lambda_n}$
converges locally in $F$-norm to the interaction $h + s \Phi$. 
Using \cite[Theorem 3.8]{nachtergaele:2019}, we conclude local convergence in the sense that
\be
\lim_{n \to \infty} \tau_t^{(\Lambda_n,s)}(A) = \tau_t^{(s)}(A) \quad \mbox{ for all } A\in\cA^{\rm loc}
\ee
as well as 
\be \label{limiting_der}
\lim_{n \to \infty} \delta_s^{\Lambda_n}(A) = \delta_s(A) \quad \mbox{ for all } A\in\cA^{\rm loc}
\ee
with $\tau_t^{(s)}$ (respectively, $\delta_s$) being the {\em a priori} well-defined strongly continuous dynamics on $\cA$ (respectively, the 
closure of the derivation restricted to $\cA^{\rm loc}$) generated by the interaction $h + s \Phi$.
Neither of these limits depend on the choice of IAS sequence $\Lambda_n$.

Our second result is then concerned with the ground state and its gap for a family of extensive perturbations. 
In particular, the uniformity of the stability result in Theorem~\ref{thm:finite_volume} allows one to prove, almost 
as a corollary, that for all $|s|\leq s_0(\gamma)$ there is a gapped ground state $\omega_s$ of $\delta_s$ in the 
sense of \eqref{gap_check}. To make this precise, we introduce the limiting spectral flow.
For any $\gamma >0$ and IAS $(\Lambda_n)$, take 
\be \label{limiting_spec_flow}
\alpha_s(A) = \lim_n \alpha^{\Lambda_n}_s(A) \quad \mbox{ for all } A\in\cA \, ,
\ee
where the spectral flows $\alpha_s^{\Lambda_n}$ will be introduced in more 
detail in the next section, see \eqref{spec_flow_on_A}. For now, it suffices to observe that this limit exists and is independent of the
choice of IAS. In fact, the interactions defining the spectral flows $\alpha_s^{\Lambda_n}$ converge 
locally in $F$-norm by arguments as in \cite[Section VI.E.2]{nachtergaele:2019}. This limiting spectral flow $\alpha_s$ defines a 
strongly continuous co-cycle of automorphisms of $\cA$, and moreover, under the assumptions 
we have made, for $A\in\cA^{\rm loc}$, $s \mapsto \alpha_s(A)$ is differentiable to all orders. 
We prove bounded differentiability for $A\in\cA^g$, for suitable $g$ in Theorem \ref{thm:bounded_differentiability}.

\begin{thm}[Stability of the bulk gap]\label{thm:infinite_volume}
Under the assumptions of Theorem \ref{thm:finite_volume}, let $\gamma \in (0,\gamma_0)$ and 
take $s$ with $|s| < s_0(\gamma)$. 
The state $\omega_s=\omega_0\circ\alpha_s$ is a gapped ground state of the perturbed infinite dynamics $\delta_s$, i.e.
\be\label{perturbedGNSgap}
\omega_s (A^* \delta_s (A)) \geq \gamma \omega_s (A^*A)  \quad \mbox{ for all } A\in\cA^{\rm loc} \mbox{ with } \omega_s(A) = 0\, .
\ee
In particular, the GNS Hamiltonian $H_s$ of $\omega_s$ has a one-dimensional kernel and $\spec H_s$ has a gap above its ground state bounded below by $\gamma$.
\end{thm}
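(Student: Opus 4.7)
The plan is to transfer the finite-volume stability of Theorem~\ref{thm:finite_volume} to the infinite-volume ground state $\omega_s$ by combining it with the convergence $\alpha_s^{\Lambda_n}\to\alpha_s$, and then to invoke the equivalence recorded after \eqref{gap_check} to deduce both simplicity of the kernel of $H_s$ and the asserted gap.

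First, for each $\Lambda\in\cP_0(\Gamma)$ and $|s|<s_0(\gamma)$, I would realize $\alpha_s^\Lambda$ inside $\cB(\cH)$. Since $\Lambda$ is finite and $H(\Lambda,s)$ is gapped by Theorem~\ref{thm:finite_volume}, the BHM quasi-adiabatic generator $D^\Lambda(s)\in\cA$ is bounded and norm-continuous in $s$, so $\pi_0(D^\Lambda(s))$ generates a strongly continuous unitary family $U_\Lambda(s)$ on $\cH$ satisfying $\pi_0(\alpha_s^\Lambda(X))=U_\Lambda(s)^*\pi_0(X)U_\Lambda(s)$. The defining intertwining property of the spectral flow yields $H(\Lambda,s)U_\Lambda(s)\Omega=E(\Lambda,s)U_\Lambda(s)\Omega$, and Proposition~\ref{prop:unique_gs} at $s=0$ together with Theorem~\ref{thm:finite_volume} and analytic perturbation theory along $s\mapsto H(\Lambda,s)$ force the ground state projection $P(\Lambda,s)$ to be rank one. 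Hence $U_\Lambda(s)\Omega$ spans $\ran P(\Lambda,s)$, and the state $\omega_s^\Lambda:=\omega_0\circ\alpha_s^\Lambda$ is represented by this vector.

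Writing $\tilde A=\pi_0(A)$ for $A\in\cA^{\rm loc}$, the commutator identity $[H(\Lambda,s),\tilde A]=\pi_0(\delta_s^\Lambda(A))$ combined with the operator inequality $H(\Lambda,s)-E(\Lambda,s)\geq \gamma(I-P(\Lambda,s))$ from Theorem~\ref{thm:finite_volume} produces the uniform finite-volume bound
\be\label{fv_bd_plan}
\omega_s^\Lambda\bigl(A^*\delta_s^\Lambda(A)\bigr)\geq \gamma\bigl(\omega_s^\Lambda(A^*A)-|\omega_s^\Lambda(A)|^2\bigr),\qquad A\in\cA^{\rm loc}.
\ee
Choosing any IAS $(\Lambda_n)$, the convergence $\alpha_s^{\Lambda_n}(B)\to\alpha_s(B)$ stated before the theorem gives $\omega_s^{\Lambda_n}(B)\to\omega_s(B)$ for $B\in\cA$, while \eqref{limiting_der} gives $\delta_s^{\Lambda_n}(A)\to\delta_s(A)$ in norm for $A\in\cA^{\rm loc}$. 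Passing to the limit in \eqref{fv_bd_plan} yields
\be
\omega_s\bigl(A^*\delta_s(A)\bigr)\geq \gamma\bigl(\omega_s(A^*A)-|\omega_s(A)|^2\bigr),\qquad A\in\cA^{\rm loc},
\ee
which specializes to \eqref{perturbedGNSgap} on $\{A\in\cA^{\rm loc}:\omega_s(A)=0\}$. Time-invariance of $\omega_s$ under $\tau_t^{(s)}$, needed to invoke the equivalence (i)$\Leftrightarrow$(ii) from the introduction, follows from the finite-$\Lambda$ analog $\omega_s^\Lambda\circ\tau_t^{(\Lambda,s)}=\omega_s^\Lambda$ (which holds since $U_\Lambda(s)\Omega$ is an eigenvector of $H(\Lambda,s)$) after passing to the same limit. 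That equivalence then delivers the conclusion: $H_s$ has one-dimensional kernel and $\spec H_s\subset\{0\}\cup[\gamma,\infty)$.

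The main obstacle I expect is the first step: verifying that $U_\Lambda(s)\Omega$ genuinely lands in $\ran P(\Lambda,s)$ for the unbounded GNS Hamiltonian $H(\Lambda,s)=H_0+s\pi_0(V_\Lambda)$. The standard BHM derivation of the intertwining identity $U_\Lambda(s)P(\Lambda,0)U_\Lambda(s)^*=P(\Lambda,s)$ is formulated for bounded Hamiltonians, so one must either differentiate carefully on the core $\pi_0(\cA^{\rm loc})\Omega$ and exploit boundedness of the perturbation $s\pi_0(V_\Lambda)$, or first approximate $H_0$ by finite-volume Hamiltonians, establish the identity there, and then pass to a strong-resolvent limit before sending $\Lambda_n\to\Gamma$. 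Either route is of the flavor developed in \cite{nachtergaele:2019,QLBII}.
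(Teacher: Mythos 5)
Your proposal is correct and follows essentially the same route as the paper: the finite-volume gap inequality $\omega_s^{\Lambda}(A^*\delta_s^{\Lambda}(A))\geq\gamma\,\omega_s^{\Lambda}(A^*A)$ for $\omega_s^\Lambda(A)=0$ (obtained from $\Omega(\Lambda,s)=U(\Lambda,s)\Omega$ being the unique ground state of $H(\Lambda,s)$) followed by the limit along an IAS using the convergence of $\alpha_s^{\Lambda_n}$ and $\delta_s^{\Lambda_n}$, and then the equivalence recorded after \eqref{gap_check}. The "obstacle" you flag—the intertwining and domain issues for the unbounded $H(\Lambda,s)$—is exactly what the paper disposes of beforehand in Sections~\ref{sec:gns} and \ref{sec:domain} (via \cite[Theorem 6.3]{nachtergaele:2019} and Lemma~\ref{lem:domains}), so it is a prerequisite handled identically rather than a genuinely different argument.
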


\section{Quasi-locality, Domains and Local Decompositions} \label{sec:ql_d_ld}

The strategy used here for proving spectral gap stability of infinite systems relies in an essential way 
on quasi-locality properties of the observables, the dynamics, and several transformations defined in
terms of the dynamics. Quasi-locality of observables is the topic of Section~\ref{sec:ql}. In Section~\ref{sec:qlm_gf}, we recall general methods for making strictly local approximations of both quasi-local observables and maps. The specific quasi-local maps and estimates used in the stability proof are discussed in Sections~\ref{sec:un_pert_dyn} - \ref{sec:spec_flow}. Their corresponding counterparts in the GNS representation are treated in Section~\ref{sec:gns}.
In Section~\ref{sec:domain}, we prove how the action of certain unbounded operators on a dense domain can expressed as limits of sequences of bounded operators with finite support.
In Section \ref{sec:bounded_differentiability} we prove a local bound on the generator of the quasi-adiabatic evolution which, while not used in this work, is important for applications.

\subsection{Quasi-Locality} \label{sec:ql}

We first recall some general features of quasi-locality estimates and then turn to some important examples relevant for this work.

\subsubsection{Quasi-locality estimates.} \label{sec:qlm_gf}

If for $A \in \cA$, $X \in \mathcal{P}_0( \Gamma)$, and $\epsilon >0$
\begin{equation}
\|[A,B]\| \leq \epsilon \|B\|\, , \quad \mbox{whenever } B\in \cA_{\Gamma\setminus X}^{\rm loc} \, ,
\end{equation}
the following estimate was shown  in \cite{bravyi:2006a,nachtergaele:2013}:
\be \label{gen_ql_est}
\|A - \Pi_X(A)\| \leq \epsilon \, .
\ee

A linear map $\cK:\cA\to \cA$ is said to be {\it quasi-local} with constant $C\geq0$, power $p\geq 0$, and decay 
function $G$ if
\be\label{def:qlm}
\|[\cK(A),B]\| \leq C|X|^p\|A\|\|B\|G(d(X,Y)) \mbox{ for all }A\in \cA_X \mbox{ and } B\in \cA_Y \, . 
\ee
Using (\ref{gen_ql_est}), for such a map $\cK$ and $A\in\cA_{b_x(k)}$,  we have
\be\label{cond_exp_est}
\|\cK(A) - \Pi_{b_x(k+n)}(\cK(A))\| \leq C |b_x(k)|^p \|A\|G(n).
\ee
When the corresponding decay function $G$ is summable, this estimate guarantees the absolute convergence of telescopic sums, i.e. for any $n_0 \geq 0$, 
\begin{equation}
\cK(A) = \Pi_{b_x(k+n_0)}( \cK(A)) + \sum_{n = n_0+1}^{\infty} \left( \Pi_{b_x(k+n)} - \Pi_{b_x(k+n-1)} \right)( \cK(A)) 
\end{equation}
since the terms satisfy 
\begin{equation} \label{ce_diff_bd}
\|  \left( \Pi_{b_x(k+n)} - \Pi_{b_x(k+n-1)} \right)( \cK(A)) \| \leq 2 C |b_x(k)|^p \|A\|G(n-1) \quad \mbox{for } n \geq 1.
\end{equation}
A common choice is $n_0=0$ and we adopt the notation
\be\label{delta_ops}
\cK(A) = \sum_{n\geq k} \Delta_{b_x(k)}^n(\cK(A)) \quad \text{where} \quad \Delta_{b_x(k)}^n = \begin{cases} \Pi_{b_x(k)}, & n = k \\ \Pi_{{b_x(n)}}-\Pi_{{b_x(n-1)}}, & n\geq k+1. \end{cases}
\ee

We now review four quasi-local maps as well as a few of their important properties that are used in the stability argument. For more details of these maps see \cite{nachtergaele:2019}. Throughout Sections \ref{sec:un_pert_dyn}--\ref{sec:gns}  we work under the assumptions of Theorem \ref{thm:finite_volume}.

\subsubsection{Dynamics} \label{sec:un_pert_dyn}

It is well-known that the unperturbed dynamics  $\tau_t^{(0)}$ defined as in \eq{dynamics} 
satisfies an exponential Lieb-Robinson bound \cite{lieb:1972}. Namely, for every $\mu >0$ there exists $C_{\mu} >0$ and $v_{\mu}>0$ such that the bound
\begin{equation} \label{unpert_lrb}
	\| [ \tau_t^{(0)}(A), B ] \| \leq C_{\mu} \min(|X|,|Y|) \| A \| \| B \| e^{- \mu(d(X,Y) - v_\mu |t|)} 
\end{equation}
holds for any $X,Y \in \mathcal{P}_0(\Gamma)$, all $A \in \mathcal{A}_X$, $B \in \mathcal{A}_Y$, and $t \in \mathbb{R}$,

It is easy to check that the perturbed interaction $h + s \Phi{\restriction_\Lambda}$
has a finite $F$-norm for the same $F$ as $\Phi$,  and that this $F$-norm is uniformly bounded in $|s|\leq 1$ and $\Lambda$.
As a consequence, there are $C_F>0$ and $v_F>0$, independent of $s$ and $\Lambda$,
such that for any choice of $X,Y \in \mathcal{P}_0(\Gamma)$, 
\begin{equation} \label{pert_dyn_lrb}
\| [ \tau_t^{(\Lambda, s)}(A), B ] \| \leq C_F \| A \| \| B \| e^{v_F |t|} \sum_{x \in X} \sum_{y \in Y} F(d(x,y))  
\end{equation}
for all $A \in \mathcal{A}_X$, $B \in \mathcal{A}_Y$, and $t \in \mathbb{R}$.

Since each $sV_{\Lambda}$ is bounded and self-adjoint, \cite[Proposition 5.4.1]{bratteli:1997} implies that
\begin{equation} \label{int_pic_dyn}
\tau_t^{(\Lambda, s)}(A) = (K_t^{(\Lambda, s)})^* \tau_t^{(0)}(A) K_t^{(\Lambda, s)} \quad \mbox{for all } A \in \cA \mbox{ and } t \in \mathbb{R}
\end{equation}
where $\{ K_t^{(\Lambda, s)} \, | \, t \in \mathbb{R} \}$ is a one-parameter family of unitaries on $\cA$ which are uniquely defined as the
$\cA$-valued solution of 
\begin{equation} \label{def_int_pic_unis}
\frac{d}{dt} K_t^{(\Lambda, s)} = -i \tau_t^{(0)}(sV_{\Lambda}) K_t^{(\Lambda, s)} \quad \mbox{with} \quad K_0^{(\Lambda, s)} = \idty \, .
\end{equation} 
These unitaries are quasi-local as, for any $A \in \cA^{\rm loc}$ and $t>0$,
\begin{eqnarray}
\| [ K_t^{(\Lambda, s)}, A] \| = \| (K_t^{(\Lambda, s)})^*A K_t^{(\Lambda, s)} - A \| & = & \left\| \int_0^t \frac{d}{du} \left( (K_u^{(\Lambda, s)})^*A K_u^{(\Lambda, s)}\right) \, du \right\| \,  \nonumber \\
& \leq & \int_0^t \| [ \tau_u^{(0)}(sV_{\Lambda}), A] \| \, du  \, .
\end{eqnarray}
An application of (\ref{unpert_lrb}) then shows that for any $\mu >0$ and $A \in \cA_X$ with $X \in \mathcal{P}_0( \Gamma \setminus \Lambda)$, 
\begin{equation} \label{int_pic_uni_ql_est}
\| [ K_t^{(\Lambda, s)}, A] \| \leq C_{\mu} |s| | \Lambda| \| V_{\Lambda} \| \| A \| |t| e^{\mu v_{\mu} |t|} e^{- \mu d(\Lambda, X)}
\end{equation}
for any $s,t \in \mathbb{R}$. Thus, $K_t^{(\Lambda, s)} \in \cA^g$ for any exponential $g$, by (\ref{gen_ql_est}).

\subsubsection{Weighted Integral Operators} 

Fix $\gamma >0$. For each $\Lambda \in \mathcal{P}_0( \Gamma)$ and $s\in\bR$, we define two
weighted integral operators $\cF_s^{\Lambda} : \cA \to \cA$ and $\cG_s^{\Lambda} : \cA \to \cA$ by
\begin{equation} \label{def:F+G}
\cF_s^{\Lambda}(A) = \int_{- \infty}^{\infty} \tau_t^{(\Lambda, s)}(A) w_{\gamma}(t) \, dt \quad \mbox{and} \quad \cG_s^{\Lambda}(A) = \int_{- \infty}^{\infty} \tau_t^{(\Lambda, s)}(A) W_{\gamma}(t) \, dt \, ,
\end{equation}
where the real-valued functions $w_{\gamma}, W_{\gamma}\in L^1(\mathbb{R})$, are defined in \cite[Section VI.B]{nachtergaele:2019}.
In particular, they decay faster than any stretched exponential.
Both of these maps depend on the choice of $\gamma$ through their weight functions, $w_\gamma$ and $W_\gamma$ respectively, but
we suppress this in the notation. Arguing as in 
\cite[Section VI.E.1]{nachtergaele:2019}, see also \cite[Section 4.3.2]{nachtergaele:2022}, we find that for all $A \in \cA$
\begin{equation} \label{wio_op_bd}
\| \cF_s^{\Lambda}(A) \| \leq \| A \| \quad \mbox{and} \quad \| \cG_s^{\Lambda}(A) \| \leq \| W_\gamma \|_1 \| A \| \, ,
\end{equation}
i.e. these maps are bounded uniformly with respect to $s \in \mathbb{R}$ and $\Lambda \in \mathcal{P}_0( \Gamma)$. 
Moreover, they are uniformly quasi-local in the sense that for each $\cK \in \{ \cF, \cG\}$ there is a
decay function $G_{\cK}$ such that: for any choice of $X,Y \in \mathcal{P}_0(\Gamma)$, we have 
\begin{equation} \label{wio_gen_ql_est}
\sup_{s \in [-1,1]} \| [ \cK_s^{\Lambda}(A), B ] \| \leq 2 \| A \| \| B \| |X| G_{\cK}(d(X,Y))  
\end{equation}
for all $A \in \mathcal{A}_X$ and $B \in \mathcal{A}_Y$. As shown in \cite[Lemma 6.10--6.11]{nachtergaele:2019}, the decay functions
$G_{\cK}$ can be made explicit. For our purposes here, we need only stress that they can be taken independent of
$\Lambda \in \mathcal{P}_0( \Gamma)$ and $s \in [-1,1]$, and with decay faster than any power. Thus, 
for any $\mu \geq 0$, 
\begin{equation} \label{all_finite_moms}
\sum_{n=1}^{\infty} (n+1)^{\mu} G_{\cK}(n) < \infty \, .
\end{equation}

\subsubsection{The Spectral Flow} \label{sec:spec_flow}

Fix $\gamma >0$. For each $\Lambda \in \mathcal{P}_0( \Gamma)$ and $s\in\bR$, denote by 
\be\label{SF_gen_alg}
D(\Lambda,s)= \cG_s^{\Lambda}(V_{\Lambda}) = \int_{-\infty}^\infty \tau^{(\Lambda, s)}_t (V_\Lambda) W_\gamma(t)   dt
\ee
with $\cG^{\Lambda}_s$ as defined in (\ref{def:F+G}). 
Clearly, $D(\Lambda, s)$ is self-adjoint and $s \mapsto D(\Lambda, s)$ is uniformly bounded by (\ref{wio_op_bd}). 

For $t \in \mathbb{R}$ fixed, the 
strong derivative of $s \mapsto \tau_t^{(\Lambda, s)}$ is given by the Duhamel formula  \cite[Proposition 2.7]{nachtergaele:2019}:
\begin{equation} \label{para_der}
\frac{d}{ds} \tau_t^{(\Lambda, s)}(A) = i \int_0^t \tau_r^{(\Lambda, s)}([V_{\Lambda}, \tau_{t-r}^{(\Lambda, s)}(A) ]) \, dr \, .
\end{equation}
Using (\ref{para_der}), one obtains the norm continuity of $s \mapsto D(\Lambda, s)$ from the following estimate:
\begin{eqnarray} \label{para_cont_est}
\| D(\Lambda, s) - D(\Lambda, s_0) \| & \leq & \int_{-\infty}^{\infty} \| \tau^{(\Lambda, s)}_t (V_\Lambda)  - \tau^{(\Lambda, s_0)}_t (V_\Lambda) \| | W_\gamma(t)|   dt \nonumber \\
& \leq & 2 \| V_{\Lambda} \|^2 |s-s_0| \int_{- \infty}^{\infty} |t| | W_{\gamma}(t)| \, dt .
\end{eqnarray}

Given these properties of $D(\Lambda, s)$, there is a unique solution of 
\be\label{s_f_eq}
\frac{d}{ds}  U(\Lambda,s) = -i D(\Lambda,s)  U(\Lambda,s) \quad \mbox{with} \quad U(\Lambda,0) = \idty \, ,
\ee
which is given by unitaries in $\cA$. 
Using similar arguments as in \eq{int_pic_uni_ql_est} with (\ref{wio_gen_ql_est}) and (\ref{SF_gen_alg}), one can  show that for $s>0$
\begin{equation} \label{spec_flow_uni_ql_est}
\| [ U(\Lambda,s), A ] \| \leq \int_0^s \| [ \cG_r^{\Lambda}(V_{\Lambda}), A] \| \, dr  \leq 2 s \| A \| | \Lambda | \| V_{\Lambda} \| G_{\cG}(d(X, \Lambda)) 
\end{equation}
for any $A \in \cA_X$ with $X \in \mathcal{P}_0( \Gamma \setminus \Lambda)$. 
Thus, $U(\Lambda,s) \in \cA^g$ for some $g$ with finite moments of all orders by \eq{all_finite_moms}.

The {\em spectral flow} is then the family of inner automorphisms on $\cA$ induced by $U(\Lambda,s)$:
\be \label{spec_flow_on_A}
\alpha^\Lambda_s(A) = U(\Lambda,s)^* A U(\Lambda,s) \quad \mbox{ for all } A\in \cA.
\ee
Quasi-locality of this map is then a consequnece of a Lieb-Robinson bound.  To this end, first rewrite the generator as 
\begin{equation}
D(\Lambda, s) = \cG_s^{\Lambda}(V_{\Lambda}) = \sum_{x \in \Lambda} \sum_{\stackrel{n \geq R:}{ b_x(n) \subset \Lambda}} \cG_s^{\Lambda}(\Phi(x,n)) \, 
\end{equation}
using (\ref{perturbation}). Applying the conditional expectations and telescopic sum from (\ref{delta_ops}), we further write  
\begin{equation} \label{def_spec_flow_int}
D(\Lambda, s) = \sum_{x \in \Lambda} \sum_{m \geq R} \Psi_{\Lambda}(x,m,s) \quad \mbox{where} \quad 
\Psi_{\Lambda}(x,m,s) = \sum_{\stackrel{R \leq n \leq m:}{b_x(n) \subset \Lambda }} \Delta_{b_x(n)}^{m}( \cG_s^{\Lambda}( \Phi(x,n))) \, .
\end{equation}
Arguing as in \cite[Appendix A]{nachtergaele:2022}, there is a decay function $G_{\Psi}$ and a positive number $\| \Psi \|_{G_{\Psi}}$ such that 
for all $\Lambda \in \mathcal{P}_0( \Gamma)$, $s \in[-1,1]$, $x \in \Lambda$ and $k \geq R$,
\begin{equation} \label{est_spec_flow_int}
\sum_{m \geq k} \| \Psi_{\Lambda}(x,m,s) \| \leq \| \Psi \|_{G_{\Psi}} G_{\Psi}(k) \, .
\end{equation}
One can be explicit about estimates for $G_{\Psi}$, see  \cite[Corollary A.3]{nachtergaele:2022}, but for our purposes,
we only need that is has finite moments of all orders.
Given (\ref{def_spec_flow_int}) and (\ref{est_spec_flow_int}), well-known Lieb-Robinson
bounds imply the existence of a decay function $G_{\alpha}$ so that
for all $X,Y \in \mathcal{P}_0(\Gamma)$,
\begin{equation}
\| [ \alpha^{\Lambda}_s(A), B ] \| \leq s  \| A \| \| B \| |X| G_{\alpha}(d(X,Y))  
\end{equation}
for all $A \in \mathcal{A}_X$, $B \in \mathcal{A}_Y$, and $s \in \mathbb{R}$.  
$G_{\alpha}$ is independent of $\Lambda \in \mathcal{P}_0(\Gamma)$ and has finite moments of all orders.

\subsection{In the GNS space} \label{sec:gns}

The spectral perturbation arguments are carried out in the GNS representation of the reference state $\omega_0$. 
The quasi-local maps discussed in the previous subsection
can be lifted to the GNS representation and we now present the necessary properties we will need in this setting.

\subsubsection{Dynamics}

As discussed in Section \ref{sec:setup}, the unperturbed dynamics $\tau_t^{(0)}$ is implemented in the GNS representation of $\omega_0$
by the GNS Hamiltonian $H_0$, as in \eq{rep_unpert_dyn}. We further show that the perturbed dynamics $\tau_t^{(\Lambda,s)}$
is implemented in the GNS representation of $\omega_0$ by the Hamiltonian
\begin{equation} \label{fv_pert_in_gns}
H(\Lambda, s) = H_0 + \pi_0(sV_{\Lambda}), s\in\Rl,
\end{equation}
from \eq{fv_pert_ham}. Specifically,
\begin{equation} \label{equal_pert_dyn}
\pi_0(\tau_t^{(\Lambda, s)}(A)) = e^{it H(\Lambda,s)} \pi_0 (A) e^{-it H(\Lambda,s)} \quad \mbox{for all } A \in \cA \mbox{ and } t \in \mathbb{R} \, . 
\end{equation}
Applying the GNS representation to the interaction picture representation (\ref{int_pic_dyn}) gives
\begin{equation}
\pi_0(\tau_t^{(\Lambda, s)}(A)) = (\pi_0(K_t^{(\Lambda, s)}))^* e^{itH_0} \pi_0(A) e^{-itH_0} \pi_0(K_t^{(\Lambda, s)}) \, .
\end{equation}
Then, (\ref{equal_pert_dyn}) follows by observing that 
\begin{equation}
e^{-itH(\Lambda, s)} = e^{-itH_0} \pi_0(K_t^{(\Lambda, s)}) \, ,
\end{equation}
as by \eq{def_int_pic_unis} $\tilde K_t^{(\Lambda, s)}:= \pi_0(K_t^{(\Lambda, s)})$ is the unique, unitary solution of
\begin{equation}
\frac{d}{dt} \tilde{K}_t^{(\Lambda, s)} = - i e^{itH_0} \pi_0(sV_{\Lambda})e^{-itH_0} \tilde{K}_t^{(\Lambda, s)} \quad \mbox{with} \quad \tilde{K}_0^{(\Lambda, s)} = \idty.
\end{equation}

\subsubsection{Weighted Integral Operators} 

For any $\gamma >0$, $\Lambda \in \mathcal{P}_0( \Gamma)$, and $s \in \mathbb{R}$ 
we map the weighted integral operators of (\ref{def:F+G}) to the GNS space by defining $\tilde{\cF}_s^{\Lambda}$ and
$\tilde{\cG}_s^{\Lambda}$ by
\begin{equation} \label{wio_in_GNS}
\tilde{\cF}_s^{\Lambda}(A) = \int_{-\infty}^{\infty} e^{itH(\Lambda, s)} A e^{-itH(\Lambda, s)} w_{\gamma}(t) \, dt \quad \mbox{and} \quad 
\tilde{\cG}_s^{\Lambda}(A) = \int_{-\infty}^{\infty} e^{itH(\Lambda, s)} A e^{-itH(\Lambda, s)} W_{\gamma}(t) \, dt 
\end{equation}
for all $A \in \cB(\cH)$. Using (\ref{equal_pert_dyn}), it is clear that 
\begin{equation} \label{reps_of_wios}
\pi_0( \cF_s^{\Lambda}(A)) = \tilde{\cF}_s^{\Lambda}(\pi_0(A)) \quad \mbox{and} \quad \pi_0( \cG_s^{\Lambda}(A)) = \tilde{\cG}_s^{\Lambda}(\pi_0(A)) \quad
\mbox{for all } A \in \cA\, .
\end{equation}

\subsubsection{The Spectral Flow} 

For fixed $\Lambda\in\cP_0(\Gamma)$, following \cite[Section VI.A]{nachtergaele:2019} we define a norm-continuous family of unitaries $\tilde{U}(\Lambda,s)\in\cB(\cH)$
as the unique solution of
\begin{equation}\label{tildeU}
	\frac{d}{ds}\tilde{U}(\Lambda, s) = - i \tilde{D}(\Lambda, s) \tilde{U}(\Lambda, s) \quad \mbox{with} \quad \tilde{U}(\Lambda, 0) = \idty \, ,
\end{equation}
where
\begin{equation}
	\tilde{D}(\Lambda, s)  = \int_{-\infty}^{\infty} e^{itH(\Lambda, s)} \pi_0(V_\Lambda)e^{-itH(\Lambda, s)} W_\gamma(t) \, dt\, . 
\end{equation}
The {\em spectral flow} associated with $H(\Lambda,s)$ is the family of automorphisms of $\cB(\cH)$ defined by
\begin{equation} \label{def_spec_flow_bh}
	\tilde{\alpha}_s^{\Lambda}(A) = \tilde{U}(\Lambda,s)^* A \tilde{U}(\Lambda,s) .
\end{equation}
This is Hastings' quasi-adiabatic evolution \cite{hastings:2004,hastings:2005}.

By \eq{reps_of_wios} it is clear that 	$\tilde{D}(\Lambda, s)=\pi_0(D(\Lambda, s))$ with $D(\Lambda,s)$ as in \eq{SF_gen_alg} and, hence, by the uniqueness of the unitary solution 
of \eq{tildeU}, $\tilde{U}(\Lambda,s) = \pi_0(U(\Lambda,s))$, where $U(\Lambda,s)$ is as in (\ref{s_f_eq}). Therefore, $\pi_0$ lifts  the spectral flow in $\cA$ to the GNS space:
\begin{equation} \label{rep_spec_flow}
	\pi_0( \alpha_s^{\Lambda}(A)) = \tilde{\alpha}_s^{\Lambda}( \pi_0(A)) \quad \mbox{for all } A \in \cA \, .
\end{equation}

Recall that $E(\Lambda,s)$ denotes the ground state energy of $H(\Lambda,s)$.
Under our assumptions the ground state space of $H(\Lambda,0)=H_0$ is one-dimensional by
Proposition~\ref{prop:unique_gs},  and $\gamma_0:=\gap  (H_0)$ is strictly positive.
By standard results, see e.g \cite{kato:1995}, for $|s|$ sufficiently small the kernel of $H(\Lambda, s) - E(\Lambda, s) \idty$
is one-dimensional and the ground state gap does not immediately close. More precisely, for any $\gamma\in(0,\gamma_0)$, 
there is $s_0^{\Lambda}(\gamma) >0$ so that
\begin{equation} \label{gap_constraint}
{\rm gap}(H(\Lambda, s)) := \sup\{ \delta >0 : (E(\Lambda, s), E(\Lambda, s) + \delta) \cap {\rm spec}(H(\Lambda, s)) = \emptyset \} \geq\gamma,
\end{equation}
for all $|s| \leq s_0^\Lambda(\gamma)$.
An application of \cite[Theorem 6.3]{nachtergaele:2019} shows that
\begin{equation} \label{gen_spec_flow_proj_bh}
\tilde{\alpha}_s^{\Lambda}(P^{\Lambda}(s)) = P^{\Lambda}(0) \,  \quad \mbox{whenever } |s| \leq s_0^{\Lambda}(\gamma) \, ,
\end{equation}
where by $P^{\Lambda}(s)$ we denote the orthogonal projection onto the ground state space of $H(\Lambda, s)$. 

For any $\Lambda \in \mathcal{P}_0( \Gamma)$ and $s \in \mathbb{R}$, the
state $\omega_s^{\Lambda}$ given by
\begin{equation} \label{pert_state}
\omega_s^{\Lambda}(A) = \omega_0( \alpha_s^{\Lambda}(A)) \quad \mbox{for all } A \in \cA \, 
\end{equation}
is a vector state in the GNS space:
\begin{equation} \label{pert_vec_state}
\omega_s^{\Lambda}(A) = \langle \Omega(\Lambda, s), \pi_0(A) \Omega (\Lambda, s) \rangle \quad \mbox{for all } A \in \cA \, ,
\end{equation} 
where $\Omega(\Lambda, s) = \tilde{U}(\Lambda,s) \Omega \in \cH$. By our assumptions, $P^{\Lambda}(0) = \ketbra{\Omega}$.
An application of (\ref{gen_spec_flow_proj_bh}) then shows that 
\begin{equation}
P^{\Lambda}(s) = \tilde{U}(\Lambda,s) \ketbra{\Omega} \tilde{U}(\Lambda,s)^* = \ketbra{\Omega(\Lambda,s)} \quad \mbox{for all } |s| \leq s_0^{\Lambda}(\gamma),
\end{equation}
and thus $\Omega(\Lambda,s)$ is the ground state of $H(\Lambda, s)$.

Finally, we recall that with the parameters $\gamma$ and $s$ as above that the weighted integral operator $\tilde{\cF}_s^{\Lambda}$ from (\ref{wio_in_GNS}) satisfies the relation
\begin{equation} \label{wi_commute_property}
\left[\tilde{\cF}^\Lambda_{s}(A), \ketbra{\Omega(\Lambda,s)}\right] = 0 \quad \mbox{for all } A \in \cB(\cH) \,.
\end{equation}
See, e.g. \cite[Lemma 6.8]{nachtergaele:2019}, for a proof of this property.

\subsection{On Domains} \label{sec:domain}

Recall that $(\Gamma, d)$ is a $\nu$-regular metric space.
Let $F$ be an $F$-function on $(\Gamma, d)$, and $\Phi$ an 
interaction with $\Vert \Phi\Vert_F<\infty$. As in Section~\ref{sec:main}, let $\delta^{\Phi}$ be the closed derivation
with dense domain ${\rm dom}( \delta^{\Phi}) \subset \cA$, and which satisfies
\be\label{derivation}
\delta^\Phi(A) = \sum_{\stackrel{Y \in \mathcal{P}_0( \Gamma):}{Y \cap X \neq \emptyset}} [ \Phi(Y) , A] \quad \mbox{for any } A \in \cA_X \, .
\ee
Although the sum on the right-hand-side above may be infinite, it is absolutely convergent when $\Phi$ 
has a finite $F$-norm. In fact, $\delta^{\Phi}$ is locally bounded: 
\be\label{Dvolume}
\Vert \delta^\Phi (A) \Vert \leq 2\Vert F\Vert \Vert \Phi\Vert_F |X| \Vert A\Vert  \quad \mbox{for all } X\in\cP_0(\Gamma), A\in \cA_X,
\ee
see Example 4.7 of \cite[Section IV.B.1]{nachtergaele:2019}. 
We have the following lemma.

\begin{lemma}\label{lem:domains}
Let $(\Gamma, d)$ be $\nu$-regular, $F$ an $F$-function on $(\Gamma, d)$, and $g$ a decay function with a finite $\nu$-moment, i.e., 
\begin{equation}
\sum_{n=1}^{\infty} (n+1)^{\nu} g(n) < \infty \, .
\end{equation}
For any interaction $\Phi$ on $\Gamma$ with $\| \Phi \|_F< \infty$, 
we have that $\cA^g\subset \dom (\delta^\Phi)$.
\end{lemma}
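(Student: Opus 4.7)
The plan is to show that any $A \in \cA^g$ can be approximated in $\cA$ by a sequence of local observables $(A_n)$ on which $\delta^{\Phi}$ stays Cauchy; then closedness of $\delta^{\Phi}$ places $A$ in its domain.

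First, I would unpack the hypothesis $A \in \cA^g$: fix an arbitrary $x \in \Gamma$, and obtain (by definition of $g$-locality) a constant $C>0$ and observables $A_n \in \cA_{b_x(n)}$ with $\|A - A_n\| \leq C g(n)$ for all $n \geq 0$. I would then organize $A$ as a telescopic series
\begin{equation*}
A = A_0 + \sum_{n\geq 1}(A_n - A_{n-1}),
\end{equation*}
noting that $A_n - A_{n-1} \in \cA_{b_x(n)}$ and, by the triangle inequality, $\|A_n - A_{n-1}\| \leq 2C g(n-1)$ (using monotonicity of $g$).

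Next, I would apply the local boundedness estimate \eqref{Dvolume} from the excerpt: for any $B \in \cA_{b_x(n)}$, $\|\delta^{\Phi}(B)\| \leq D |b_x(n)|\, \|B\|$ with $D = 2\|g\|\,\|\Phi\|_g$. Combined with the $\nu$-regularity bound $|b_x(n)| \leq \kappa n^{\nu}$ (for $n \geq 1$) this gives
\begin{equation*}
\|\delta^{\Phi}(A_n - A_{n-1})\| \leq 2CD\kappa\, n^{\nu} g(n-1).
\end{equation*}
The $\nu$-moment assumption $\sum_{n \geq 1}(n+1)^{\nu} g(n) < \infty$ (after reindexing) is precisely what forces $\sum_n n^{\nu} g(n-1) < \infty$, hence the telescoping series $\sum_n \delta^{\Phi}(A_n - A_{n-1})$ converges absolutely in $\cA$. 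Consequently $\bigl(\delta^{\Phi}(A_n)\bigr)_{n \geq 0}$ is Cauchy in $\cA$ and so converges to some $B \in \cA$.

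Finally, since each $A_n \in \cA^{\rm loc} \subset \dom(\delta^{\Phi})$, with $A_n \to A$ and $\delta^{\Phi}(A_n) \to B$, the closedness of $\delta^{\Phi}$ yields $A \in \dom(\delta^{\Phi})$ and $\delta^{\Phi}(A) = B$, proving the lemma. There is no real obstacle here: the only delicate point is choosing the telescopic decomposition so that each summand lives in a ball whose volume grows polynomially, which is exactly absorbed by the $\nu$-moment hypothesis on $g$.
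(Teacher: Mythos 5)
Your proposal is correct and follows essentially the same route as the paper's proof: telescoping the local approximants $A_n\in\cA_{b_x(n)}$, applying the local bound \eqref{Dvolume} together with $\nu$-regularity to get summable increments $\|\delta^\Phi(A_n)-\delta^\Phi(A_{n-1})\|\leq 2\kappa C D\, n^\nu g(n-1)$, and then invoking closedness of $\delta^\Phi$ (with $\cA^{\rm loc}$ in its domain) to conclude $A\in\dom(\delta^\Phi)$. The only differences are cosmetic (indexing of the telescopic differences and explicitly naming the limit $B=\delta^\Phi(A)$).
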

\begin{proof}
For $n\geq 1$,  and $A\in\cA^g$, for some $x\in\Gamma$, and observables 
$A_n \in \cA_{b_x(n)}$ satisfying $\Vert A - A_n\Vert\leq  \Vert A \Vert_{x, g} g(n)$. 
In this case, the bound $\Vert A_{n+1} - A_n\Vert \leq 2  \Vert A \Vert_{x, g} g(n)$ is clear. Using \eq{Dvolume} and $\nu$-regularity of $\Gamma$, we conclude
\be
\Vert \delta^\Phi(A_{n+1}) -\delta^\Phi(A_n) \Vert \leq 4\Vert F\Vert \Vert \Phi\Vert_F \kappa  \Vert A \Vert_{x, g} (n+1)^\nu g(n).
\ee
Thus, for all $m<n$,
\be \label{basic_der_est}
\Vert \delta^\Phi(A_{n}) -\delta^\Phi(A_m) \Vert \leq 4\kappa  \Vert A \Vert_{x, g} \Vert F\Vert \Vert \Phi\Vert_F \sum_{k=m}^{n-1} (k+1)^\nu g(k).
\ee
Since we assumed that $g$ has a finite $\nu$-moment, this implies that $\delta^{\Phi}(A_n) $ is a
Cauchy sequence. Since  $A_n\to A$ and $\cA^{\rm loc}$ is a core for $\delta^\Phi$, it follows
that $A \in \dom (\delta^\Phi)$.
\end{proof}

Given the assumptions of Theorem \ref{thm:finite_volume},
Lemma~\ref{lem:domains} clearly applies to the derivation $\delta_0$. 
Using that $H_0 \Omega = 0$, one readily checks the relation
\be
e^{itH_0} \pi_0(A) \Omega = \pi_0(\tau_t^{(0)} (A) ) \Omega \, ,
\ee
{f}rom which the inclusion $\pi_0(\dom (\delta_0))\Omega \subset \dom(H_0)$ is clear. 
As a result, if $g$ is a decay function with a finite $\nu$-moment, then 
$\pi_0 (\cA^g) \Omega \subset \dom (H_0) = \dom(H(\Lambda, s))$ for any
$\Lambda \in \mathcal{P}_0( \Gamma)$ and $s \in \mathbb{R}$. 
Since $U(\Lambda,s)\in\cA^g$ for some $g$ with finite moments of all of orders by \eq{spec_flow_uni_ql_est}, it follows that 
$\pi_0(A U(\Lambda,s)) \Omega\in\dom(H(\Lambda,s))$, for any $A\in\cA^{\rm loc}$, $s\in\Rl$.

A consequence of this is a gap inequality for the perturbed ground state $\omega_s^{\Lambda}$ from \eq{pert_state}.  Namely, we show that 
for $\gamma\in (0,\gamma_0), \Lambda\in\cP_0(\Gamma)$, and $|s| \leq s_0^{\Lambda}(\gamma)$:
\begin{equation}\label{gap_ineq}
	\omega_s^{\Lambda}(A^* \delta_s^{\Lambda}(A)) \geq \gamma \omega_s^{\Lambda}(A^*A) , \mbox{ for all } A \in \cA^{\rm loc} \mbox { with }\omega_s^{\Lambda}(A) =0.
\end{equation}
To see this, fix $|s| \leq s_0^{\Lambda}(\gamma)$. Since $\Omega(\Lambda,s)$ is the unique ground state of $H(\Lambda,s)$, 
\begin{equation} \label{excited_states}
\langle \psi, (H(\Lambda, s) - E(\Lambda, s) \idty) \psi \rangle \geq \gamma \langle \psi, \psi \rangle
\end{equation}
for all $\psi \in {\rm dom}(H_0)$ with $\langle \Omega(\Lambda, s), \psi \rangle =0$. 
In particular, if $\omega^\Lambda_s(A)=0$ for some $A\in\cA^{\rm loc}$, then \eq{excited_states} holds for
$\psi = \pi_0(A) \Omega(\Lambda, s)=\pi_0(A U(\Lambda,s)) \Omega$ since
$$
\langle \Omega(\Lambda,s),\psi\rangle = \langle \Omega(\Lambda,s), \pi_0(A) \Omega(\Lambda,s)\rangle=\omega_s^\Lambda(A).
$$
Then \eq{gap_ineq} follows from rewriting \eq{excited_states}.

It will be important that on an appropriate dense domain, the action of the unbounded Hamiltonians can be expressed as a 
limit of finite-volume quantities. This is the content of the next lemma.

\begin{lemma}\label{lem:strong_convergence_of_sums}
Let  $(\cH,\pi_0,\Omega)$ be the GNS representation of $\omega_0$, an infinite-volume, zero energy, ground state of a frustration free model 
as in Assumption \ref{ass:initial_model}. For any decay function $g$ with a finite $\nu$-moment  and any IAS $(\Lambda_n)$,
\be \label{lim_sum_ham_psi}
\lim_{n \to \infty} \pi_0(H_{\Lambda_n}) \psi = H_0\psi \quad \mbox{ for all } \psi\in \pi_0(\cA^g) \Omega \, ,
\ee
where $H_{\Lambda_n} \in \cA_{\Lambda_n}$ is as in (\ref{HLambdan}) and $H_0$ is the GNS Hamiltonian.
\end{lemma}

\begin{proof}
Note that \eq{lim_sum_ham_psi} is trivially satisfied for $\psi=\pi_0(A)\Omega$, for $A\in\cA^{\rm loc}$ since
\be\label{locals}
\lim_{n\to\infty}\pi_0(H_{\Lambda_n})\pi_0(A)\Omega= \lim_{n\to\infty}\pi_0([H_{\Lambda_n},A])\Omega = \pi_0(\delta_0(A))\Omega=H_0\pi_0(A)\Omega.
\ee
For the first equality we used  $\pi_0(H_{\Lambda_n})\Omega=0$, which is a consequence of the frustration-free property. Then, by the finite-range 
condition on the unperturbed model, $[H_{\Lambda_n},A]$ becomes constant for $n$ sufficiently large.

Take $\psi = \pi_0(A) \Omega$ for any $A \in \cA^g$. By the definition of $\cA^g$, there exists $x \in \Gamma$ and observables $A_m \in \cA_{b_x(m)}$
so that $\| A - A_m\| \leq \Vert A \Vert_{x, g} g(m)$ for all $m \geq 1$, and so the vectors $\psi_m := \pi_0(A_m) \Omega$ satisfy
\begin{equation}\label{convergence}
\| \psi - \psi_m \| \leq  \Vert A \Vert_{x, g} g(m). 
\end{equation}

Moreover, since the interaction $h$ is uniformly bounded with range $R$, it follows from \eqref{derivation} and $\nu$-regularity that for any $k\geq 1$,
\[
\|\delta_0(A_{\ell+1}-A_\ell)\| \leq  \sum_{x\in b_{x}(\ell+1+R)}\|[h_x,A_{\ell+1}-A_\ell]\| \leq 2\kappa^2R^\nu(\ell+1)^\nu\|h\|_\infty\|A_{\ell+1}-A_\ell\|
\]
where we use the bound $|b_{x}(n+m)|\leq |b_x(n)||b_x(m)|$. Then, by the last equality of \eq{locals}, one finds that $(H_0\psi_m)_{m\in\bN}$ is Cauchy as
\begin{equation}\label{cauchy}
\| H_0 \psi_m - H_0 \psi_k \| \leq \| \delta_0(A_m) - \delta_0(A_k) \| \leq D \sum_{\ell = k}^{m-1} (\ell+1)^{\nu} g( \ell) \, .
\end{equation}
where we set $D = 4 \kappa^2 R^{\nu} \| h \|_{\infty} \Vert A \Vert_{x, g}$. Since $H_0$ is closed, and $\psi\in\dom(H_0)$ by Lemma~\ref{lem:domains} and the subsequent discussion, the bound
\begin{equation} \label{H_0_est}
\| H_0 \psi - H_0 \psi_k \| \leq  D \sum_{\ell = k}^{\infty} (\ell+1)^{\nu} g( \ell) \, 
\end{equation}
follows immediately from \eqref{convergence}-\eqref{cauchy}.

In the case of a local Hamiltonian, using again the first equality in \eqref{locals}, a similar argument shows that for all $n \geq 1$, 
\begin{align} 
	\| \pi_0(H_{\Lambda_n}) \psi - \pi_0(H_{\Lambda_n}) \psi_k \| & =\lim_{m\to \infty}  \| \pi_0(H_{\Lambda_n}) \psi_m - \pi_0(H_{\Lambda_n}) \psi_k \| \nonumber \\
	&  \leq \lim_{m\to \infty}  \| [H_{\Lambda_n}, A_m- A_k] \| \leq D \sum_{\ell = k}^{\infty} (\ell+1)^{\nu} g( \ell) \,.  \label{H_n_est}
\end{align}
Putting all of this together, one finds that for any $n \geq 1$ and each $k \geq 1$,
\begin{align} \label{3_term_bd}
\| \pi_0(H_{\Lambda_n}) \psi - H_0 \psi \| & \leq \| \pi_0(H_{\Lambda_n}) \psi - \pi_0(H_{\Lambda_n}) \psi_k  \| + \| \pi_0(H_{\Lambda_n}) \psi_k - H_0 \psi_k \|   + \| H_0 \psi_k - H_0 \psi \| 
\end{align}
For $k \geq 1$ sufficiently large, (\ref{H_0_est}) and (\ref{H_n_est}) guarantee that the first and last term above can be made arbitrarily small.
Given such a $k$, the middle term vanishes for $n$ sufficiently large, see the comment following \eq{locals}. This completes the proof.
\end{proof}

Lemma~\ref{lem:strong_convergence_of_sums} also trivially applies to the perturbed system
in the GNS space. In fact, for  $\Lambda \in \mathcal{P}_0(\Gamma)$ and $s \in \mathbb{R}$, under assumptions as above, 
 a direct application of Lemma~\ref{lem:strong_convergence_of_sums} shows that we also have
\be
\lim_n  \pi_0(H_{\Lambda_n} + s V_{\Lambda})\psi = H_0 \psi + s\pi_0(V_\Lambda)\psi = H(\Lambda, s) \psi \quad \mbox{for all } \psi\in \pi_0(\cA^g)\Omega.
\ee

\begin{remk} An analogue of Lemma~\ref{lem:strong_convergence_of_sums} holds more generally. In fact, if 
$F$ is an $F$-function  with a finite $\nu$-moment,
then for any frustration free interaction $\Phi$ 
with $\| \Phi \|_F< \infty$, the GNS Hamiltonian again satisfies (\ref{lim_sum_ham_psi}). The argument is identical to the
above except that one uses the more general estimate in Lemma~\ref{lem:domains} and bounds the middle
term in (\ref{3_term_bd}) by 
\begin{equation}
\sum_{\stackrel{X \in \mathcal{P}_0( \Gamma):}{X \cap b_x(k) \neq \emptyset, X \cap \Lambda_n^c \neq \emptyset}} \| \pi_0(\Phi(X)) \psi_k \| \leq \| \Phi \|_F (1+g(0)) \| A \|_{x,g} \sum_{y \in b_x(k), z \in \Lambda_n^c} F(d(y,z))
\end{equation}
For fixed $k$, the above is the sum of finitely many `tails' of the uniformly summable function $F$. 
\end{remk}

We now investigate how the weighted integral operator $\tilde{\cF}_s^{\Lambda}$ from
(\ref{wio_in_GNS}) can be applied to the unbounded Hamiltonian $H(\Lambda, s)$. 
To begin, we prove an analogue of the desired statement for the unperturbed dynamics; this is 
Lemma~\ref{lem:FH_convergence} below. To this end, assume $w: \mathbb{R} \to \mathbb{R}$ satisfies
\begin{equation} \label{gen_w_constraints}
\int_{\mathbb{R}} w(u) \, du = 1 \quad \mbox{and} \quad \int_{\mathbb{R}} |u|^{\nu} |w(u)| \, du < \infty \, ,
\end{equation}
and define a weighted integral operator $\tilde{\cF}: \mathcal{B}( \cH) \to \mathcal{B}(\cH)$ by setting
\be \label{rep_gen_F_int_op}
\tilde{\cF}(A) = \int_{-\infty}^\infty  e^{iuH_0}A e^{-iuH_0} w(u) \, du \quad \mbox{for all } A\in \cB(\cH) \, .
\ee
To simplify notation, let us also write
\begin{equation} \label{rep_unpert_dyn_3}
\tilde{\tau}^{(0)}_u(A) = e^{iuH_0}A e^{-iuH_0}. 
\end{equation}
Our first result is as follows.
\begin{lemma}\label{lem:FH_convergence}
Let $(\Gamma,d)$ be $\nu$-regular, $g$ be a decay function with a finite $2 \nu$-moment, and $w: \mathbb{R} \to \mathbb{R}$ satisfy
(\ref{gen_w_constraints}). For each choice of IAS $(\Lambda_n)$,
the weighted integral operator $\tilde{\cF}$ from (\ref{rep_gen_F_int_op}) satisfies
\be \label{FH_convergence}
\lim_{n \to \infty} \tilde{\cF}(\pi_0(H_{\Lambda_n})) \psi = H_0\psi \quad \mbox{ for all } \psi\in\pi_0(\cA^g)\Omega \, ,
\ee
where $H_{\Lambda_n} \in \cA_{\Lambda_n}$ is as in (\ref{HLambdan}).
\end{lemma}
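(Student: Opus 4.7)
The plan is to recast the statement as a dominated convergence result. Since $\psi = \pi_0(A)\Omega$ with $A\in\cA^g$, Lemma~\ref{lem:domains} gives $A\in\dom\delta_0$, so $\psi\in\dom(H_0)$ and $H_0\psi = \pi_0(\delta_0(A))\Omega$. Using that $\int_{-\infty}^{\infty} w(u)\,du = 1$ and that $e^{iuH_0}$ commutes with $H_0$, I would start from
\[
\tilde{\cF}(\pi_0(H_{\Lambda_n}))\psi - H_0\psi \;=\; \int_{-\infty}^{\infty} w(u)\, e^{iuH_0}\bigl[\pi_0(H_{\Lambda_n}) - H_0\bigr]e^{-iuH_0}\psi\,du,
\]
and aim to show that the integrand tends to zero pointwise in $u$ while admitting a dominating function in $L^1(|w|du)$.

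For pointwise convergence, observe that $e^{-iuH_0}\psi = \pi_0(\tau_{-u}^{(0)}(A))\Omega$. The Lieb-Robinson bound \eqref{unpert_lrb} guarantees that $\tau_{-u}^{(0)}(A)\in\cA^{g_u}$ for some $F$-function $g_u$ of exponential decay (with $u$-dependent prefactors) which has finite moments of all orders. Lemma~\ref{lem:strong_convergence_of_sums} applied with decay function $g_u$ then yields
\[
\lim_n \pi_0(H_{\Lambda_n})e^{-iuH_0}\psi \;=\; H_0 e^{-iuH_0}\psi \quad\text{for each }u\in\mathbb{R},
\]
and left multiplication by the unitary $e^{iuH_0}$ preserves this convergence.

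For the dominating function, $\|H_0 e^{-iuH_0}\psi\| = \|H_0\psi\|$ is constant in $u$, so the task reduces to controlling $\|\pi_0(H_{\Lambda_n})e^{-iuH_0}\psi\|$ uniformly in $n$. The frustration-free, zero-energy property gives $\pi_0(H_{\Lambda_n})\Omega = 0$, whence
\[
\pi_0(H_{\Lambda_n})e^{-iuH_0}\psi \;=\; \pi_0\bigl([H_{\Lambda_n}, \tau_{-u}^{(0)}(A)]\bigr)\Omega,
\]
whose norm is at most $\sum_{y\in\Gamma}\|[h_y,\tau_{-u}^{(0)}(A)]\|$. I would split this sum according to $r = d(y,\supp A)$: for $y$ inside the Lieb-Robinson cone $r\lesssim v_\mu|u|$, use the trivial bound $2\|h_y\|\|A\|$ together with the $(1+|u|)^\nu$ volume factor from $\nu$-regularity; for $y$ outside, approximate $A$ by $A_m\in\cA_{b_x(m)}$ with $m\sim r/2$, use Lieb-Robinson on the $A_m$-piece and $\|A - A_m\|\leq Cg(m)$ on the remainder. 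The ballistic factor $e^{\mu v_\mu|u|}$ is absorbed by the exponential decay once $r\gtrsim v_\mu|u|$, and the $2\nu$-moment of $g$ makes the $g$-tail contribution summable against the shell volume $r^{\nu-1}$ and the Lieb-Robinson volume factor $m^\nu$. The result is a uniform-in-$n$ bound of the form $C_A(1+|u|)^\nu$, which belongs to $L^1(|w|du)$ by the $\nu$-moment hypothesis on $w$, and dominated convergence finishes the proof.

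The main technical obstacle lies in this last step: it is essential that the Lieb-Robinson exponential $e^{\mu v_\mu|u|}$ does not survive in the dominating function, since $w$ has only polynomial moments. The careful $r$-by-$r$ bookkeeping, with $m$ chosen as a function of $r$, and the $2\nu$-moment on $g$ are exactly what allow one to trade the ballistic $|u|$-growth for the polynomial bound $(1+|u|)^\nu$ that $w$ can accommodate.
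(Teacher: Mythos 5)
Your proposal is correct, and its overall skeleton coincides with the paper's: rewrite the claim as a dominated convergence statement for the $\cH$-valued integrand, obtain the pointwise limit in $u$ by noting $e^{-iuH_0}\psi=\pi_0(\tau_{-u}^{(0)}(A))\Omega$ with $\tau_{-u}^{(0)}(A)$ lying in $\cA^{g_u}$ for an $F$-function $g_u$ with finite $\nu$-moment so that Lemma~\ref{lem:strong_convergence_of_sums} applies, and then build a dominating function of the form $C_A(1+|u|^\nu)|w(u)|$ by trading the ballistic factor $e^{\mu v_\mu|u|}$ against locality at a scale proportional to $v_\mu|u|$, which is integrable by \eqref{gen_w_constraints}. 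Where you genuinely differ is in the bookkeeping of the domination: the paper telescopes $A=A_{m_0}+\sum_k B_k$, localizes $\tau_{-u}^{(0)}(A_{m_0})$ and $\tau_{-u}^{(0)}(B_k)$ via the conditional expectations $\Delta_{\ell_0}^\ell$ with $\ell_0=\lceil v_\mu|u|+m_0\rceil$, and uses the radius-dependent bound $\|\pi_0(H_{\Lambda_n})\pi_0(C)\Omega\|\leq 2\kappa D_0(k+R)^\nu\|C\|$ for $C\in\cA_{b_x(k)}$, which is where the $2\nu$-moment of $g$ enters (two volume factors of order $k^\nu$ multiply the tail $\|B_k\|\leq 2Cg(k-1)$); you instead bound $\|\pi_0([H_{\Lambda_n},\tau_{-u}^{(0)}(A)])\Omega\|$ term by term in $h_y$, choosing the approximation scale $m\sim r/2$ as a function of $r=d(y,x)$ and splitting inside/outside the Lieb-Robinson cone. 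Your version is a perfectly valid alternative and in fact only consumes a finite $\nu$-moment of $g$ in the domination step, so it is marginally more economical than the paper's estimate (both are covered by the stated $2\nu$-moment hypothesis). Two small imprecisions worth fixing in a writeup: for $A\in\cA^g$ that is not strictly local, $\tau_{-u}^{(0)}(A)$ is \emph{not} $g_u$-local for an exponentially decaying $g_u$ — the localization error at radius $n$ is the minimum of a Lieb-Robinson exponential and a term of order $g(n/2)$, so $g_u$ inherits the decay of $g$ (this is all you need, since only a finite $\nu$-moment is required for Lemma~\ref{lem:strong_convergence_of_sums}); and the splitting parameter should be $r=d(y,x)$, the distance to the anchor point of the approximants $A_m\in\cA_{b_x(m)}$, rather than $d(y,\supp A)$, since $\supp A$ need not be finite.
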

\begin{proof}
Fix an IAS $(\Lambda_n)$ and take $\psi = \pi_0(A) \Omega$ for some $A \in \cA^g$. 
We can rewrite the convergence claimed in (\ref{FH_convergence}) as the convergence of integrals 
of a sequence of functions $f_n : \mathbb{R} \to \cH$ given by  
\be
f_n(u) = w(u) \tilde \tau_u^{(0) }( \pi_0(H_{\Lambda_n} ) ) \psi \quad \mbox{for all } u \in \mathbb{R} \, .
\ee
Since $H_0\Omega =0$, the above can be re-written as
\be\label{fnu}
f_n(u) = w(u) e^{iuH_0} \pi_0(H_{\Lambda_n}) \pi_0 (\tau_{-u}^{(0)} (A))\Omega
\ee
using (\ref{rep_unpert_dyn}). We claim that there is a decay function $g_{\tau}$ with a finite $\nu$-moment
such that $\tau_{-u}^{(0)}(A) \in \cA^{g_{\tau}}$ for all $u \in \mathbb{R}$. Given this, Lemma~\ref{lem:strong_convergence_of_sums} applies
and we find that
\begin{equation}
\lim_{n \to \infty} f_n(u) = w(u) e^{iuH_0} H_0 \pi_0(\tau_{-u}^{(0)}(A)) \Omega = w(u) H_0 \psi \quad \mbox{for all } u \in \mathbb{R} \, .
\end{equation}
By (\ref{gen_w_constraints}), the integral of this limit coincides with the right-hand-side of
(\ref{FH_convergence}). Therefore, to complete the proof we only need to justify 
an application of dominated convergence.

Let us first prove the existence of a decay function $g_{\tau}$ as claimed. Fix $A \in \cA^g$. In this case, there is $x \in \Gamma$, $C \geq 0$, 
and observables $A_m \in \cA_{b_x(m)}$ for which $\| A - A_m \| \leq C g(m)$ for all $m \in \mathbb{N}$.
Let $u \in \mathbb{R}$ and for any $n \in \mathbb{N}$, set 
\begin{equation}
A_n(u) = \Pi_n (\tau_u^{(0)} (A_{\lceil n/2 \rceil}))\in\cA_{b_x(n)} 
\end{equation}
where, to ease notation, we have written $\Pi_n = \Pi_{b_x(n)}$,  for the conditional expectation from Section~\ref{sec:qlm_gf}. 
A straightforward estimate shows that for any $\mu>0$,
\begin{eqnarray}
\Vert \tau_u^{(0)} (A) - A_n(u) \Vert & \leq & \Vert \tau_u^{(0)} (A) - \tau_u^{(0)} (A_{\lceil n/2 \rceil})\Vert + \Vert \tau_u^{(0)} (A_{\lceil n/2 \rceil})- A_n(u)) \Vert \nonumber \\
& \leq & C g( n/2) + \kappa C_{\mu}e^{\mu (v_{\mu}|u| +1)} (1+g(0)) \Vert A\Vert_{x,g}(n/2 +1)^{\nu}  e^{- \mu n/2}
\end{eqnarray} 
where we used (\ref{unpert_lrb}) and (\ref{cond_exp_est}) for the final bound. The existence of the  decay function $g_{\tau}$
follows from the moment condition on $g$ and the decay of the exponential term.. 

We now turn to finding a dominating function for $f_n$. Recall that  for any $m_0 \in \mathbb{N}$, 
$A$ can be written as an absolutely convergent, telescopic sum:
\be\label{def:Bn}
A = A_{m_0} + \sum_{k=m_0+1}^\infty B_k \quad \mbox{where} \quad B_k= A_k-A_{k-1} \quad \mbox{ and } \quad \Vert B_k\Vert \leq 2Cg(k-1).
\ee
Inserting this decomposition of $A$ into (\ref{fnu}), we find that for any $n \in \mathbb{N}$ and each $u \in \mathbb{R}$:
\be\label{fn_bd_1}
\Vert f_n(u)\Vert \leq  |w(u)|  \left( \Vert \pi_0(H_{\Lambda_n}) \pi_0(\tau_{-u}^{(0)} (A_{m_0}))\Omega\Vert
+ \sum_{k=m_0+1}^\infty \Vert \pi_0(H_{\Lambda_n}) \pi_0(\tau_{-u}^{(0)} (B_k))\Omega\Vert \right) \, .
\ee

Now, by the zero-energy property of the ground state
we find the  bound
\begin{equation} \label{uni_bd_1}
\| \pi_0(H_{\Lambda_n}) \pi_0(A) \Omega \| \leq \kappa (k+R)^{\nu} \Vert h\Vert_\infty \| A \| \quad \mbox{for all } A \in \cA_{b_x(k)} \, ,
\end{equation}
which we stress is uniform in $n$. This suggests a mechanism for bounding the first term in (\ref{fn_bd_1}).
Let $\ell_0 \geq m_0$ and write 
\begin{equation} \label{2nd_tele}
\tau_{-u}^{(0)}(A_{m_0}) = \sum_{\ell \geq \ell_0} \Delta_{\ell_0}^{\ell}(\tau_{-u}^{(0)}(A_{m_0})) \, ,
\end{equation}
where we have used the short-hand $\Delta_m^{\ell}$ for $\Delta_{b_x(m)}^{\ell}$ as in (\ref{delta_ops}).
For $\ell = \ell_0$, the bound
\begin{eqnarray}
 \| \pi_0(H_{\Lambda_n}) \pi_0(\Delta_{\ell_0}^{\ell_0}(\tau_{-u}^{(0)}(A_{m_0})) ) \Omega \| & = & \| \pi_0(H_{\Lambda_n}) \pi_0(\Pi_{\ell_0}(\tau_{-u}^{(0)}(A_{m_0})) ) \Omega \| \nonumber \\
 & \leq & \kappa \Vert h\Vert_\infty (\ell_0 +R)^{\nu} \| A_{m_0} \| 
\end{eqnarray}
follows from (\ref{uni_bd_1}). For $\ell \geq \ell_0$, the estimate
\begin{eqnarray}
 \| \pi_0(H_{\Lambda_n}) \pi_0(\Delta_{\ell_0}^{\ell}(\tau_{-u}^{(0)}(A_{m_0})) ) \Omega \| & \leq & \kappa \Vert h\Vert_\infty (\ell +R)^{\nu} \| \Delta_{\ell_0}^{\ell}(\tau_{-u}^{(0)}(A_{m_0})) \| \nonumber \\
 & \leq & 2\kappa^2  \Vert h\Vert_\infty C_{\mu} m_0^{\nu} (\ell + R)^{\nu}  \| A_{m_0} \| e^{- \mu( \ell - m_0 -1- v_{\mu} |u|)} 
\end{eqnarray}
follows from another application of (\ref{uni_bd_1}) and the quasi-locality estimate for the unperturbed dynamics in combination with (\ref{ce_diff_bd}). 
We conclude that 
\begin{align}\label{hard_est}
\| \pi_0(H_{\Lambda_n}) \pi_0(\tau_{-u}^{(0)} (A_{m_0}))\Omega \| & \leq  \sum_{\ell \geq \ell_0} \| \pi_0(H_{\Lambda_n}) \pi_0(\Delta_{\ell_0}^{\ell}(\tau_{-u}^{(0)}(A_{m_0})) ) \Omega \| \nonumber \\
& \leq  \kappa \Vert h\Vert_\infty \| A_{m_0} \| \left( (\ell_0 +R)^{\nu} + 2 \kappa C_{\mu} m_0^{\nu} \sum_{\ell \geq \ell_0+1} (\ell + R)^{\nu} e^{-  \mu(\ell - m_0 -1- v_{\mu} |u|)}  \right) .
\end{align}
If we now take $\ell_0 = \lceil v_{\mu}|u| + m_0 \rceil$, then we have found that there is $K \geq 0$ for which 
\begin{equation} \label{unpert_lem_final_bd}
\| \pi_0(H_{\Lambda_n}) \pi_0(\tau_{-u}^{(0)} (A_{m_0}))\Omega \| \leq K\Vert h\Vert_\infty  \| A_{m_0} \| \left( m_0^{2 \nu} + |u|^{\nu} + (m_0 |u|)^{\nu} + 1 \right) 
\end{equation}
and here $K = K(\kappa, \mu, \nu, R)$. 

The terms $B_k$ in \eq{fn_bd_1} can be estimated similarly.
Regarding $k$ as $m_0$ and arguing as in (\ref{2nd_tele}) - (\ref{hard_est}) with some $\ell_0 \geq k$,
a bound analogous to (\ref{unpert_lem_final_bd}) can be found. Of course, here one replaces $\| A_{m_0} \|$ with $\| B_k \|$. 
Since $\| B_k \| \leq 2 C g(k-1)$ and $g$ has a finite $2 \nu$-moment, we have obtained a bound on the right-hand-side of
(\ref{fn_bd_1}) of the form:
\begin{equation}
\| f_n(u) \| \leq \tilde{K}\Vert h\Vert_\infty  (1+ |u|^{\nu}) |w(u)| \quad \mbox{for all } u \in \mathbb{R} \, .
\end{equation}
By the assumption on $w$, i.e. (\ref{gen_w_constraints}), the above is a dominating function for the sequence $f_n$.
This justifies dominated convergence and completes the proof. 
\end{proof}

We will also need a version of Lemma \ref{lem:FH_convergence} for the perturbed system. 
Recall that for any $\gamma >0$, $s \in \mathbb{R}$, and
$\Lambda \in \mathcal{P}_0(\Gamma)$,  the weighted integral operator $\tilde{\cF}_s^{\Lambda} : \cB(\cH) \to \cB(\cH)$ are defined by
\begin{equation} \label{rep_pert_F_int_op}
	\tilde{\cF}_s^{\Lambda}(A) = \int_{- \infty}^{\infty} e^{itH(\Lambda, s)} A e^{-itH(\Lambda, s)} w_{\gamma}(t) \, dt \quad \mbox{for all } A \in \cB(\cH) \, .
\end{equation} 
We note that $w_\gamma$ from \cite[Section VI.B]{nachtergaele:2019} satisfies (\ref{gen_w_constraints}). It is clear that
\be\label{wi_Ham_invar}
\tilde{\caF}^\Lambda_{s}(e^{iuH(\Lambda,s)}) = e^{iu H(\Lambda,s)} \quad \mbox{for all } \;  u,s \in\bR \, ,
\ee
since the dynamics leaves this bounded operator invariant and $w_\gamma$ integrates to 1.
Lemma \ref{lem:FsHs_convergence} provides a  differential version of this fact. 

\begin{lemma}\label{lem:FsHs_convergence}
Let $(\Gamma,d)$ be $\nu$-regular, $g$ be a decay function with a finite $2 \nu$-moment. 
Let $\Lambda \in \mathcal{P}_0( \Gamma)$ and take $s \in \mathbb{R}$. For each choice of IAS $(\Lambda_n)$, consider
the weighted integral operator $\tilde{\cF}_s^{\Lambda}$, as in (\ref{rep_pert_F_int_op}), with arbitrary  $w: \mathbb{R} \to \mathbb{R}$ satisfying
(\ref{gen_w_constraints}). Then
\be \label{FsHs_convergence}
\lim_{n \to \infty} \tilde{\cF}_s^{\Lambda}(\pi_0(H_{\Lambda_n} +s V_{\Lambda})) \psi = H(\Lambda, s)\psi \quad \mbox{ for all } \psi\in\pi_0(\cA^g)\Omega \, ,
\ee
with $H_{\Lambda_n} \in \cA_{\Lambda_n}$ as in (\ref{HLambdan}) and $V_{\Lambda}$ as in (\ref{perturbation}).
\end{lemma}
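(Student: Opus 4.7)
The plan is to adapt the proof of Lemma~\ref{lem:FH_convergence} to the perturbed dynamics. Using $\int_{\mathbb{R}} w(t)\,dt = 1$ together with the fact that $H(\Lambda,s)$ commutes with $e^{itH(\Lambda,s)}$ on its domain, I rewrite
\[
\tilde{\cF}_s^{\Lambda}(\pi_0(H_{\Lambda_n}+sV_\Lambda))\psi - H(\Lambda,s)\psi = \int_{-\infty}^{\infty} w(t)\, e^{itH(\Lambda,s)}\bigl[\pi_0(H_{\Lambda_n}+sV_\Lambda) - H(\Lambda,s)\bigr]e^{-itH(\Lambda,s)}\psi\,dt.
\]
A key simplification is that $H(\Lambda,s) = H_0 + s\pi_0(V_\Lambda)$, so the bounded perturbation cancels and the bracket reduces to $\pi_0(H_{\Lambda_n}) - H_0$. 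The target is then to show that the resulting integral vanishes as $n \to \infty$ via dominated convergence.

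For pointwise convergence in $t$, set $\phi_t := e^{-itH(\Lambda,s)}\psi$. Using the interaction-picture factorization $e^{-itH(\Lambda,s)} = e^{-itH_0}\pi_0(\Gamma_t^{(\Lambda,s)})$ from Section~\ref{sec:gns}, the intertwining $e^{-itH(\Lambda,s)}\pi_0(A) = \pi_0(\tau^{(\Lambda,s)}_{-t}(A))e^{-itH(\Lambda,s)}$, $H_0\Omega = 0$, and \eqref{rep_unpert_dyn_2}, a short calculation gives $\phi_t = \pi_0(B_t)\Omega$ where $B_t := \tau^{(\Lambda,s)}_{-t}(A)\,\tau^{(0)}_{-t}(\Gamma_t^{(\Lambda,s)})$ and $\psi = \pi_0(A)\Omega$ with $A\in\cA^g$. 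The first factor is quasi-local by \eqref{pert_dyn_lrb} applied to the $g$-local observable $A$; the second is quasi-local via \eqref{int_pic_uni_ql_est} for $\Gamma_t^{(\Lambda,s)}$ combined with \eqref{unpert_lrb}. Consequently $B_t \in \cA^{g_t}$ for an $F$-function $g_t$ with finite moments of all orders, so Lemma~\ref{lem:strong_convergence_of_sums} yields $\pi_0(H_{\Lambda_n})\phi_t \to H_0\phi_t$ for every $t\in\mathbb{R}$.

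For a dominating function, unitarity of $e^{itH(\Lambda,s)}$ reduces the task to a $t$-integrable, $n$-uniform bound on $\|\pi_0(H_{\Lambda_n})\phi_t - H_0\phi_t\|$. The quantity $\|H_0\phi_t\|$ is bounded uniformly in $t$ by $\|H(\Lambda,s)\psi\| + |s|\|V_\Lambda\|\|\psi\|$, using $H_0 = H(\Lambda,s) - s\pi_0(V_\Lambda)$ and the fact that $H(\Lambda,s)$ intertwines with its own unitary group. For $\|\pi_0(H_{\Lambda_n})\phi_t\|$, I follow the telescoping argument from the proof of Lemma~\ref{lem:FH_convergence}: the zero-energy property yields $\|\pi_0(H_{\Lambda_n})\pi_0(C)\Omega\| \leq 2\kappa D_0(k+R)^\nu\|C\|$ for any $C\in\cA_{b_x(k)}$, and decomposing $B_t$ via the conditional expectations $\Delta_{m_0}^\ell$ centered at $x$, with truncation parameter $\ell_0$ chosen comparable to $m_0 + v_\mu|t|$, produces a bound of the form $K(1+|t|^\nu)\|A\|$, integrable against $w$ by \eqref{gen_w_constraints} and the $2\nu$-moment hypothesis on $g$.

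The main obstacle is the quasi-locality bookkeeping for the composite observable $B_t$, which is a product of two quasi-local factors each with its own $t$-dependent spreading rate. Balancing the decay of $g_t$ against the growth of the truncation scale $\ell_0$ so that the final polynomial in $|t|$ has order no greater than $\nu$ requires care, but once this matching is achieved, the three-term splitting \eqref{3_term_bd} and the computation culminating in \eqref{unpert_lem_final_bd} go through with only cosmetic changes.
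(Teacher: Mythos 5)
Your proposal is correct and follows essentially the same route as the paper: both proofs use the interaction-picture factorization $e^{-itH(\Lambda,s)}=e^{-itH_0}\pi_0(\Gamma_t^{(\Lambda,s)})$ together with the quasi-locality estimate \eqref{int_pic_uni_ql_est} to realize the relevant vector as $\pi_0(B)\Omega$ for a $g'$-local observable, then invoke Lemma~\ref{lem:FH_convergence}'s machinery — pointwise convergence via Lemma~\ref{lem:strong_convergence_of_sums} and a dominating function built from the zero-energy bound \eqref{uni_bd_1} and the telescoping with truncation scale $\ell_0\sim v_\mu|t|$. Your reorganization (subtracting $H(\Lambda,s)\psi$ first so the bounded perturbation cancels, and bounding $\|H_0\phi_t\|$ uniformly in $t$) is a cosmetic streamlining of the paper's domination step; the only small inaccuracy is that $g_t$ for the composite $B_t$ need not have finite moments of all orders when $g$ merely has a finite $2\nu$-moment — it has a finite $\nu$-moment, which is all Lemma~\ref{lem:strong_convergence_of_sums} requires.
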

\begin{proof}
Fix an IAS $(\Lambda_n)$ where we assume for convenience that $\Lambda \subset \Lambda_1$.
As in the proof of Lemma \ref{lem:FH_convergence}, take $\psi = \pi_0(A) \Omega$ with $A \in \cA^g$, and for each $n \in \mathbb{N}$,
consider $f_n: \mathbb{R} \to \cH$ given by
\be
f_n(u) = w(u) \tilde \tau_u^{(\Lambda, s)}( \pi_0(H_{\Lambda_n}+s V_\Lambda) ) \pi_0(A) \Omega \quad \mbox{for all } u \in \mathbb{R} \, ,
\ee
where, in analogy to (\ref{rep_unpert_dyn_3}), we have set 
\begin{equation} \label{rep_pert_dyn_3}
\tilde{\tau}_t^{(\Lambda, s)}(A) = e^{itH(\Lambda, s)} A e^{-itH(\Lambda,s)} \quad \mbox{for all } A \in \cB(\cH) \mbox{ and } t \in \mathbb{R} \, .
\end{equation}
Using (\ref{equal_pert_dyn}), (\ref{int_pic_dyn}), and (\ref{rep_unpert_dyn}), we may write 
\begin{equation}
\tilde{\tau}_u^{(\Lambda, s)}( \pi_0(H_{\Lambda_n} + s V_\Lambda)) = \pi_0( K_u^{(\Lambda, s)})^* e^{iuH_0} \pi_0(H_{\Lambda_n} + s V_{\Lambda}) e^{-iuH_0} \pi_0(K_u^{(\Lambda, s)} )
\end{equation}
for all $u \in \mathbb{R}$. In this case, we find that
\be
f_n(u) = w(u) \pi_0(K_u^{(\Lambda,s)})^* e^{iuH_0} \pi_0(H_{\Lambda_n} + s V_\Lambda) \pi_0( \tau^{(0)}_{-u}( K_u^{(\Lambda,s)} A))\Omega.
\ee
Following a similar argument and using (\ref{int_pic_uni_ql_est}), one shows that there is a decay function $g'$ with a finite $\nu$-moment such that
$\tau^{(0)}_{-u}( K_u^{(\Lambda,s)} A) \in \cA^{g'}$. As a result, the point-wise limit
\be
\lim_n f_n(u) = w(u) \pi_0(K_u^{(\Lambda,s)})^* e^{iuH_0} (H_0 + s \pi_0(V_\Lambda)) \pi_0( \tau^{(0)}_{-u}( K_u^{(\Lambda,s)} A))\Omega 
= w(u) H(\Lambda,s) \psi
\ee
is clear from properties of the interaction picture dynamics, see the discussion following (\ref{int_pic_dyn}). 

The argument demonstrating that we can apply the dominated convergence theorem also proceeds as in the proof
of Lemma~\ref{lem:FH_convergence}. Since the differences stemming from the presence
of the $u$-dependence in the operators $A_{m_0}$ and $B_k$ are minor, we leave the details to the reader.
\end{proof}

\subsection{Proof of bounded differentiability of the spectral flow}\label{sec:bounded_differentiability}

At this point we pause to prove a result in the setting of this paper that is useful for related considerations concerning gapped insulators and symmetry protected phases.
For this purpose we consider differentiable curves of interactions $[0,1]\ni s \mapsto \Phi(s)$, for which we assume a finite norm of the following form:
\begin{equation}
	\| \Phi \|^1_F=\sup_{s\in [0,1]} \sup_{x,y \in \Gamma} \frac{1}{F(d(x,y)) } \sum_{\stackrel{X \in \mathcal{P}_0(\Gamma):}{x,y \in X}} \left( \| \Phi(X,s) \| + |X| \| \Phi'(X,s) \| \right),
\end{equation}
where for each finite $X \subset \Gamma$ and $0 \leq s \leq 1$, $\Phi(X,s)^* = \Phi(X, s) \in \mathcal{A}_X$, $\Phi(X,\cdot): [0,1] \to \mathcal{A}_X$ is 
differentiable, and $F$ is an $F$-function of stretched exponential decay as in \eq{class_of_Fs}.
For any differentiable curves of interactions $\Phi$ with finite norm  $\| \Phi \|^1_F$, there is an $F$-function $\tilde{F}$, and an $s$-dependent interaction
$\Psi(s)$ with 
$$
\| \Psi \|_{\tilde{F}} := \sup_{s\in[0,1]} \Vert \Psi(s)\Vert_{\tilde F} < \infty,
$$
that generates the  infinite volume spectral flow automorphisms $\alpha_s$ \cite[Section VI.E.2]{nachtergaele:2019}.  
We further know by \cite[Theorem 3.9]{nachtergaele:2019}, that
this infinite volume spectral flow is differentiable on $\mathcal{A}_{\Gamma}^{\rm loc}$ with 
\begin{equation}
	\frac{d}{ds} \alpha_s(A) = i \sum_{X \in \mathcal{P}_0( \Gamma)} \alpha_s([\Psi(X,s), A]) \quad \mbox{for all } A \in \mathcal{A}_{\Gamma}^{\rm loc} \, .
\end{equation} 
Our aim is the show that 
$$
\alpha'_s :=\frac{d}{ds} \alpha_s 
$$
is a bounded map when defined on a suitable Banach algebra  of $g$-local obervables. 
%
%
%

\begin{thm}\label{thm:bounded_differentiability}
	Suppose $g$ is a decay function and $(\Lambda_n)$ an IAS that satisfy 
	\be\label{g_summable}
	\sum_{n\geq 1} |\Lambda_{n+1}| g(n) < \infty.
	\ee  Then, the derivative $\alpha'_s$ of the spectral flow $\alpha_s$ is a well-defined bounded linear map $\cA^{(\Lambda_n),g} \to \cA$, satisfying
	\begin{equation}
		\| \alpha'_s(A) \| \leq 2 \| \Psi \|_{\tilde{F}} \| \tilde{F} \|_1  \left( | \Lambda_1| +2 \sum_{n=1}^{\infty} | \Lambda_{n+1}| g(n) \right) \| A \|_{(\Lambda_n),g} \, ,
	\end{equation}
with $ \| \tilde{F} \|_1 $ as defined in \eq{uni_sum}.
\end{thm}

\begin{proof}
	We start by noting that for any local observable $A \in \mathcal{A}_Y$ with $Y \subset \Gamma$ finite, we have the following estimate:
	\begin{eqnarray} \label{local_derivative}
		\left\|  \frac{d}{ds} \alpha_s(A) \right\| \leq \sum_{\stackrel{X:}{X \cap Y \neq \emptyset}} \| [ \Psi(X,s), \alpha_s (A)] \| & \leq & 2 \| A \| \sum_{y \in Y} \sum_{z \in \Gamma} 
		\sum_{\stackrel{X:}{y,z \in X }} \| \Psi(X,s) \| \nonumber \\
		& \leq & 2 \| \Psi \|_{\tilde{F}} \| A \| \sum_{y \in Y} \sum_{z \in \Gamma} \tilde{F}(d(y,z)) \nonumber \\
		& \leq & 2 \| \Psi \|_{\tilde{F}} \| \tilde{F} \|_1 |Y| \| A \|
	\end{eqnarray}
	
	Let $A \in \mathcal{A}^{(\Lambda_n),g}$. By (\ref{g-norm}), 
	\begin{equation} \label{f_conv}
		\| A - \Pi_n(A) \| \leq \| A \|_{(\Lambda_n),g}  g(n)
	\end{equation}
	which implies that the sequence $\{ \Pi_n(A) \}_{n=1}^{\infty}$ converges to $A$ (in norm) at a rate governed by $g$. 
	Define a map $\alpha'_s : \mathcal{A}^{(\Lambda_n),g}  \to \mathcal{A}$ by setting
	\begin{equation} \label{deriv_def_Af}
		\alpha'_s(A) = \lim_{n \to \infty}\frac{d}{ds}  \alpha_s \left( \Pi_n(A) \right) \quad \mbox{for all } A \in \mathcal{A}^{(\Lambda_n),g}  \, . 
	\end{equation}
	Note that for any strictly local observable $A$, $\Pi_n(A) = A$ if $n$ is sufficiently large, and thus this
	definition agrees with the standard definition of the derivative of $\alpha_s(A)$ for $A\in\cA^{\rm loc}$.
	
	Now, consider integers $1 \leq M < N < \infty$. For any observable $A$,
	\begin{equation}
		\Pi_N(A) - \Pi_M(A) = \sum_{n=M}^{N-1} \left( \Pi_{n+1}(A) - \Pi_n(A) \right)
	\end{equation}
	and for $A \in \mathcal{A}_g$, we also have that for any $n \geq 1$,
	\begin{eqnarray} \label{norm_bd}
		\| \Pi_{n+1}(A) - \Pi_n(A) \| & \leq & \| \Pi_{n+1}(A) - A \| + \| A - \Pi_n(A) \| \nonumber \\
		& \leq & \| A \|_{(\Lambda_n),g}  \left( g(n+1) + g(n) \right) \leq 2 \| A \|_{(\Lambda_n),g}  g(n)  
	\end{eqnarray} 
	In the above, we have used (\ref{f_conv}) and the fact that $g$ is non-increasing. 
	We conclude that 
	\begin{eqnarray}
		\| \alpha'_s( \Pi_N(A)) - \alpha'_s(\Pi_M(A)) \| & \leq & \sum_{n=M}^{N-1} \| \alpha'_s \left( \Pi_{n+1}(A) - \Pi_n(A) \right) \| \nonumber \\
		& \leq & 2 \| \Psi \|_{\tilde{F}} \| \tilde{F} \|_1 \sum_{n=M}^{N-1} |\Lambda_{n+1}| \| \Pi_{n+1}(A) - \Pi_n(A) \| \nonumber \\
		& \leq & 4 \| \Psi \|_{\tilde{F}} \| \tilde{F} \|_1 \| A \|_{(\Lambda_n),g}  \sum_{n=M}^{N-1} | \Lambda_{n+1}| g(n) \, .
	\end{eqnarray}
	Here, for the second inequality above we used (\ref{local_derivative}), and the final bound comes from (\ref{norm_bd}). 
	We conclude that whenever $g$ satisfies (\ref{g_summable}), the sequence $\{ \alpha'_s( \Pi_n(A)) \}_{n=1}^{\infty}$ is norm Cauchy; hence, 
	norm convergent. This shows that $\alpha'_s$ as given in (\ref{deriv_def_Af}) is well-defined and equals the derivative of $\alpha_s(A)$, for $A\in\cA^{(\Lambda_n),g} $. 
	Moreover, for any $N \geq 1$,
	\begin{eqnarray}
		\| \alpha'_s(\Pi_N(A)) \| & \leq & \| \alpha'_s(\Pi_1(A)) \| + \| \alpha'_s( \Pi_N(A)) - \alpha'_s(\Pi_1(A)) \|  \nonumber \\
		& \leq & 2 \| \Psi \|_{\tilde{F}} \| \tilde{F} \|_1 | \Lambda_1| \| \Pi_1(A) \| +  4 \| \Psi \|_{\tilde{F}} \| \tilde{F} \|_1 \| A \|_{(\Lambda_n),g}  \sum_{n=1}^{N-1} | \Lambda_{n+1}| g(n) 
	\end{eqnarray}
	and therefore,
	\begin{equation}
		\| \alpha'_s(A) \| = \lim_{N \to \infty} \| \alpha'_s( \Pi_N(A)) \| \leq 2 \| \Psi \|_{\tilde{F}} \| \tilde{F} \|_1  \left( | \Lambda_1| +2 \sum_{n=1}^{\infty} | \Lambda_{n+1}| g(n) \right) \| A \|_{(\Lambda_n),g} 
	\end{equation}
	as claimed.
\end{proof}

This proves that Assumption 1.2 (vii) in \cite{moon:2020} (see also \cite{bourne:2021,ogata:2021}) is always satisfied under the assumptions made in that reference. We note that in the same way one obtains the extension of the invariants constructed by Ogata in \cite{ogata:2021e,ogata:2020a,ogata:2021}  from models with finite-range interactions to the setting with interactions of stretched exponential decay.

\section{Construction of a unitarily equivalent perturbed system}\label{sec:finite-volume-perturbations}


The crux of the stability strategy introduced in \cite{bravyi:2010}, is the construction of a unitarily equivalent perturbed system
using the spectral flow (aka quasi-adiabatic evolution) for which one can prove a relative form bound using quasi-locality estimates and LTQO.
In the infinite-system setting, this means
we need to prove that by transforming the unbounded Hamiltonian $H(\Lambda,s)$ from (\ref{fv_pert_ham} by the spectral flow
one arrives at an equivalent Hamiltonian of the form
\be\label{main_decomp}
\tilde{U}(\Lambda,s)^*H(\Lambda,s)\tilde{U}(\Lambda,s) \psi = H_0\psi + W(\Lambda,s) \psi + E(\Lambda,s)\psi \quad \mbox{for all } \psi \in \pi_0(\cA^{\rm loc})\Omega \, ,
\ee
with $W(\Lambda,s)$ a bounded operator with an explicit, $\Lambda$-independent form-bound with respect to $H_0$ and $E(\Lambda,s)$ the ground state
energy of $H(\Lambda, s)$ from (\ref{pert_gse}).  That $W(\Lambda,s)$ is well-defined is a consequence of the fact that  $\tilde U(\Lambda,s)\in\pi_0(\cA^g)$ for a function $g$ of the form \eq{class_of_Fs}, which guarantees that $\cA^g$ is an algebra. Hence $\tilde U(\Lambda,s)\pi_0(A)\Omega \in \dom H(\Lambda,s)$ for $A\in\cA^{\rm loc}$, by Lemma \ref{lem:strong_convergence_of_sums}. 

%

The proof of Theorem \ref{thm:finite_volume} is a consequence of two results. The first,Theorem \ref{thm:Step2}, establishes that $W(\Lambda,s)$ is indeed bounded and can be decomposed in way that is suitable for deriving a relative form bound. The second, Theorem \ref{thm:gen_form_bd} in Section \ref{sec:form_bound}, is the relative form bound itself. 

\begin{thm}\label{thm:Step2}
	Suppose Assumptions~\ref{ass:initial_model}--\ref{ass:bulkgap} and~\ref{ass:LTQO}--\ref{ass:short-range} hold, and fix $\Lambda\in\cP_0(\Gamma)$. Then, for any $\gamma \in (0,\gamma_0)$ and $|s|\leq s_0^\Lambda(\gamma)$, 
	there is a family of self-adjoint observables
	$\Phi^{(2)}(x,m,s)\in\cA_{b_x(m)}$, for each $x\in\Gamma$ and $m\geq R$, with the following properties:
	\begin{enumerate}
		\item[(i)] $\Phi^{(2)}(x,m,s)P_{b_x(m)} = P_{b_x(m)}\Phi^{(2)}(x,m,s)=0$;
		\item[(ii)] $\|\Phi^{(2)}(x,m,s)\|\leq 2sG^{(2)}_\Lambda(x,m)$ with
	\end{enumerate}
	\be
	G^{(2)}_\Lambda(x,m) = G_\Lambda(x,m/2)+2G_\Lambda^{(1)}(x,\lceil m/2\rceil)+2G_\Lambda^{(1)}(x,R)\sqrt{\lceil m/2\rceil^{\nu}G_0(m/2)}
	\ee
	where $G_\Lambda(x,m)$ is as in Theorem~\ref{thm:Step1}, $G_\Lambda^{(1)}(x,m) = \sum_{n\geq m}G_\Lambda(x,n)$, and $G_0$ is from Assumption~\ref{ass:LTQO}.
	Furthermore, $W(\Lambda,s)$ is given by the absolutely 
	convergent sum
	\be\label{final-decomp_W}
	W(\Lambda,s) = \sum_{x\in\Gamma}\sum_{m\geq R}\pi_0(\Phi^{(2)}(x,m,s)).
	\ee
\end{thm}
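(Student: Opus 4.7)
The strategy adapts the Bravyi-Hastings-Michalakis cleaning procedure to the infinite-volume GNS setting, taking as input the preliminary local decomposition from Theorem~\ref{thm:Step1} (Step~1). That result yields $W(\Lambda,s) = \sum_{x\in\Gamma}\sum_{n\geq R}\pi_0(\Phi^{(1)}(x,n,s))$ with $\Phi^{(1)}(x,n,s)\in\cA_{b_x(n)}$ and $\|\Phi^{(1)}(x,n,s)\|\leq 2sG_\Lambda(x,n)$; these Step~1 terms are local but need not annihilate $P_{b_x(n)}$. The job of Step~2 is to repackage them as $\Phi^{(2)}(x,m,s)\in\cA_{b_x(m)}$ that do annihilate $P_{b_x(m)}$ on both sides, while preserving the sum and producing sharp norm control. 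Two principal ingredients drive the argument: LTQO (Assumption~\ref{ass:LTQO}), which trades $P_{b_x(m)}AP_{b_x(m)}$ for $\omega_0(A)P_{b_x(m)}$ up to a $G_0$-controlled error, and the identity $W(\Lambda,s)\Omega=0$, which follows from the spectral-flow fact that $U(\Lambda,s)\Omega=\Omega(\Lambda,s)$ is the ground state of $H(\Lambda,s)$.

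For each $(x,m)$ I would build $\Phi^{(2)}(x,m,s)$ by starting from the half-scale piece $\Phi^{(1)}(x,m/2,s)$ together with the conditional expectation $\Pi_{b_x(m)}$ of the tail $\sum_{n>m}\Phi^{(1)}(x,n,s)$, and then sandwiching the result with $(\idty-P_{b_x(m)})$ on both sides to enforce the annihilation property. The three summands of $G^{(2)}_\Lambda(x,m)$ then read off directly: $G_\Lambda(x,m/2)$ bounds the raw half-scale block; $2G_\Lambda^{(1)}(x,m+1)$ bounds the conditionally-expected tail; and the LTQO remainder $2G_\Lambda^{(1)}(x,R)\sqrt{(1+m)^\nu G_0(m/2)}$ is produced by a Cauchy-Schwarz step
\[
\|\pi_0(A)\tilde P_{b_x(m)}\|^2 \;=\; \|\tilde P_{b_x(m)}\pi_0(A^*A)\tilde P_{b_x(m)}\|
\]
applied with $A=\sum_{n\geq R}\Phi^{(1)}(x,n,s)$. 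The $\omega_0$-expectation of this $A$ vanishes thanks to $W(\Lambda,s)\Omega=0$, so LTQO applied to $A^*A$ at support scale $R$ and projection scale $m$, with a buffer of $m/2$, yields exactly the claimed square-root bound.

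To verify that $\sum_{x,m}\pi_0(\Phi^{(2)}(x,m,s))$ really equals $W(\Lambda,s)$, I would expand the difference and observe that the residues introduced by the $(\idty-P_{b_x(m)})$-sandwiching all have the form $P_{b_x(m)}\pi_0(\Phi^{(1)}(x,n,s))$ or its adjoint. LTQO reduces each such residue to a scalar multiple of $\tilde P_{b_x(m)}$ proportional to $\omega_0(\Phi^{(1)}(x,n,s))$, plus a $G_0$-remainder already accounted for in the third cleanup summand. The scalar multiples cancel globally thanks to $W(\Lambda,s)\Omega=0$, leaving only the controlled remainders. Absolute convergence of the repackaged sum over $x$ and $m$ then follows from the LTQO moment bound~(\ref{LTQO_moment}) combined with the summability of $G_\Lambda$ inherited from Step~1 and Assumption~\ref{ass:short-range}.

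The hard part will be the combinatorial bookkeeping: organising the scale reshuffle so that the residues from projection-sandwiching telescope cleanly to a scalar absorbable in $E(\Lambda,s)$, and doing so uniformly in $\Lambda$. The nested but non-trivial relation between $P_{b_x(m)}$ at different scales $m$ is precisely what forces the buffer $m/2$ to appear throughout $G^{(2)}_\Lambda(x,m)$: LTQO must be invoked with the projection scale safely larger than the operator support, and balancing this buffer against the Step~1 summability is the central technical tension the proof must resolve.
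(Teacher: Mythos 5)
Your blueprint gets the first half of the paper's construction right in spirit: the diagonal piece, i.e.\ sandwiching the half-scale term $\tilde{\Phi}^{(1)}_{\omega}(x,k,s)$ with $(\idty-\tilde{P}_{b_x(2k)})$, is essentially the paper's $\Theta_1(x,2k,s)$ and accounts for the summand $G_\Lambda(x,m/2)$. The genuine gap is your treatment of the residues created by that sandwiching. Writing $P=\tilde{P}_{b_x(m)}$ and $B=\pi_0(\Phi^{(1)}(x,n,s))$, the difference $B-(\idty-P)B(\idty-P)$ contains not only the two-sided compression $PBP$ but also the one-sided cross terms $PB(\idty-P)$ and $(\idty-P)BP$. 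LTQO \eqref{LTQOforA} controls only the two-sided compression; it does not make $PB(\idty-P)$ small or proportional to $P$. Indeed $\|PB(\idty-P)\|^2=\|PBB^*P-PBPB^*P\|\approx\omega_0(\Phi^{(1)}\Phi^{(1)*})-|\omega_0(\Phi^{(1)})|^2$ up to LTQO errors, a variance that is generically nonzero and does not decay in $m$ at all, so the claim that ``LTQO reduces each such residue to a scalar multiple of $\tilde{P}_{b_x(m)}$'' fails, and with it your verification that the cleaned terms sum back to $W(\Lambda,s)$. Moreover, the scalars you would like to cancel multiply the local projections $\tilde{P}_{b_x(m)}$, not $\idty$, so they cannot be absorbed into $E(\Lambda,s)$ or ``cancel globally'' via $W(\Lambda,s)\Omega=0$. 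These cross terms are the heart of the BHM step: the paper keeps them, resolves $\idty-\ketbra{\Omega}$ into the increments $\tilde{E}_R=\idty-\tilde{P}_{b_x(R)}$, $\tilde{E}_n=\tilde{P}_{b_x(n-1)}-\tilde{P}_{b_x(n)}$, and reassigns each off-diagonal block $\tilde{E}_m\tilde{\Phi}^{(1)}_{\omega}(x,k,s)(\idty-\tilde{P}_{b_x(m-1)})$ with $k<m/2$ to the larger scale $m$, where frustration-freeness ($\tilde{P}_{b_x(m)}\tilde{P}_{b_x(m-1)}=\tilde{P}_{b_x(m)}$) gives property (i) for free and Lemma~\ref{lem:LTQO_Est} gives the norm bound; this reassignment, not discarding, is what makes \eqref{final-decomp_W} an identity.

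A second, related gap is the input to your Cauchy--Schwarz/LTQO step. From $\|\pi_0(A)\tilde{P}_{b_x(m)}\|^2=\|\tilde{P}_{b_x(m)}\pi_0(A^*A)\tilde{P}_{b_x(m)}\|\leq\omega_0(A^*A)+\|A\|^2(1+k)^{\nu}G_0(\cdot)$, smallness requires $\omega_0(A^*A)=\|\pi_0(A)\Omega\|^2$ to be small, not $\omega_0(A)=0$. For a fixed anchor $x$, neither $\omega_0(\Phi^{(1)}(x,s))=0$ nor $\pi_0(\Phi^{(1)}(x,s))\Omega=0$ follows from the global fact $W(\Lambda,s)\Omega=0$ (the single-$x$ expectations only sum to $E(\Lambda,s)$). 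What the paper actually uses is the per-site statement from Theorem~\ref{thm:Step1}, $[\tilde{\Phi}^{(1)}(x,s),\ketbra{\Omega}]=0$ (a consequence of \eqref{wi_commute_property}), which after centering $\tilde{\Phi}^{(1)}_{\omega}(x,s)=\tilde{\Phi}^{(1)}(x,s)-\omega_0(\Phi^{(1)}(x,s))\idty$ gives the exact annihilation $\tilde{\Phi}^{(1)}_{\omega}(x,s)\ketbra{\Omega}=0$; only then do the tail bound $G^{(1)}_\Lambda(x,m+1)$ and the LTQO comparison of $\tilde{P}_{b_x(n)}$ with $\ketbra{\Omega}$ in Lemma~\ref{lem:LTQO_Est} produce the square-root term in $G^{(2)}_\Lambda(x,m)$. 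Without the centering and without this per-site commutation, both your norm bound (ii) and your reconstruction of $W(\Lambda,s)$ break down, so the proposal as written does not prove the theorem.
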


Note that the operator $W(\Lambda,s)$ is {\em a priori} defined in the GNS representation.
{\em A posteriori}, however, \eq{final-decomp_W} implies that  $W(\Lambda,s)$ is the image of  a quasi-local  observable in $\cA$. 

The decomposition from Theorem~\ref{thm:Step2} is proved in two steps. The first uses quasi-locality and conditional expectations to prove that for all $|s|\leq s_0^\Lambda(\gamma)$, the action of the spectral flow on the GNS Hamiltonian $H(\Lambda,s)$ can be again realized as a perturbation of $H_0$. Namely, we show that for all $\psi\in \pi_0(\cA_{\rm loc})\Omega$
\be \label{ql_pert}
\tilde{U}(\Lambda,s)^*H(\Lambda, s) \tilde{U}(\Lambda,s) \psi = H_0 \psi + \sum_{x\in\Gamma}\sum_{m\geq R} \tilde{\Phi}^{(1)}(x,m,s)\psi
\ee 
where the perturbation terms $\tilde{\Phi}^{(1)}(x,m,s)\in \pi_0(\cA_{b_x(m)})$ are self-adjoint, satisfy a norm bound that is linear in $s$, and are absolutely summable over $x\in \Gamma$ and $m\geq R$. This is accomplished in Theorem~\ref{thm:Step1} of Section \ref{sec:Step1} below.

In the second step, carried out in Section \ref{sec:Step2},  the final form of \eqref{final-decomp_W} from Theorem~\ref{thm:Step2} is proved using the frustration-free and LTQO ground state properties to produce a refined decomposition of the perturbation terms from \eqref{ql_pert}.

\subsection{Quasilocal decomposition of the transformed perturbation}\label{sec:Step1}

We now turn to establishing the first decomposition \eqref{ql_pert}, which is the content of the following theorem.

\begin{thm}\label{thm:Step1} 
	Under the conditions of Theorem \ref{thm:Step2}, there exists a function $G_\Lambda:\Gamma\times[0,\infty) \to [0,\infty)$ for which
	\be \label{G_summable}
	\sum_{x\in\Gamma}\sum_{m\geq R}G_\Lambda\left(x,m\right) <\infty
	\ee
	and a self-adjoint operator $\tilde{\Phi}^{(1)}(x,m,s)^* =  \tilde{\Phi}^{(1)}(x,m,s)\in \pi_0(\cA_{b_x(m)})$ for each $x\in \Gamma$ and $m \geq R$, such that $\|\tilde{\Phi}^{(1)}(x,m,s)\| \leq s G_\Lambda(x,m)$ and
	\begin{equation}\label{eq:step1_decomp}
		W(\Lambda,s)+E(\Lambda,s)\idty = \sum_{x\in \Gamma}\sum_{m\geq R} \tilde{\Phi}^{(1)}(x,m,s). 
	\end{equation}
	Moreover,  for each $x\in\Gamma$, the operator $\tilde{\Phi}^{(1)}(x,s) := \sum_{m\geq R} \tilde{\Phi}^{(1)}(x,m,s) $ belongs to $\pi_0(\cA)$ and commutes with the ground state projection $\kettbra{\Omega}{\Omega}$.
\end{thm}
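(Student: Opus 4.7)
The plan is to write $U(\Lambda,s)^*H(\Lambda,s)U(\Lambda,s) - H_0$ as an $s$-integral of a bounded, quasi-local observable, and then decompose the integrand into strictly local pieces using conditional expectations. The main identity to establish, on the dense domain $\pi_0(\cA^{\rm loc})\Omega$, is the Duhamel formula
\begin{equation} \label{eq:duhamel_plan}
U(\Lambda,s)^*H(\Lambda,s)U(\Lambda,s) - H_0 = \int_0^s \pi_0\bigl(\alpha_r^\Lambda(\cF_r^\Lambda(V_\Lambda))\bigr) dr.
\end{equation}
Formally, differentiating the left-hand side in $r$ via the chain rule and \eqref{s_f_eq} produces $\pi_0(\alpha_r^\Lambda(i[D(\Lambda,r),H(\Lambda,r)]+V_\Lambda))$, and the Hastings choice $D(\Lambda,r)=\cG_r^\Lambda(V_\Lambda)$ is designed so that the operator identity
\begin{equation}\label{eq:Hastings_identity}
i[D(\Lambda,r), H(\Lambda,r)] + V_\Lambda = \cF_r^\Lambda(V_\Lambda)
\end{equation}
holds; this is a consequence of the Fourier relation $i\omega\widehat{W_\gamma}(\omega)=1-\widehat{w_\gamma}(\omega)$ built into the definitions of $\cF_r^\Lambda$ and $\cG_r^\Lambda$. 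In the infinite-volume GNS setting, where $H(\Lambda,r)$ is unbounded, \eqref{eq:Hastings_identity} and \eqref{eq:duhamel_plan} are justified by approximating $H(\Lambda,r)$ with $\pi_0(H_{\Lambda_n}+rV_\Lambda)$ along an IAS and invoking the strong convergence results of Lemmas~\ref{lem:strong_convergence_of_sums} and \ref{lem:FsHs_convergence} on $\pi_0(\cA^g)\Omega$. Since $\alpha_r^\Lambda(\cF_r^\Lambda(V_\Lambda))\in\cA$ is bounded uniformly in $r\in[0,s]$, the integral in \eqref{eq:duhamel_plan} converges in norm.

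Next, expand $V_\Lambda=\sum_{(x,n)\in I_\Lambda}\Phi(x,n)$ with $I_\Lambda=\{(x,n):x\in\Lambda,\, n\geq R,\, b_x(n)\subset\Lambda\}$. Each $\alpha_r^\Lambda(\cF_r^\Lambda(\Phi(x,n)))\in\cA$ is quasi-local around $b_x(n)$. Applying the telescopic conditional-expectation decomposition \eqref{delta_ops}, we collect all contributions supported on $b_x(m)$ into
\begin{equation} \label{eq:loc_def}
\tilde\Phi^{(1)}(x,m,s) := \int_0^s \pi_0\!\left(\sum_{\substack{R\leq n\leq m\\ (x,n)\in I_\Lambda}} \Delta_{b_x(n)}^m\bigl(\alpha_r^\Lambda(\cF_r^\Lambda(\Phi(x,n)))\bigr)\right) dr.
\end{equation}
By construction, each $\tilde\Phi^{(1)}(x,m,s)$ is self-adjoint, lies in $\pi_0(\cA_{b_x(m)})$, and vanishes for $x\notin\Lambda$. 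Using the telescoping identity $\sum_{m\geq n}\Delta_{b_x(n)}^m=\mathrm{id}$ together with Fubini-type reordering (justified by the absolute bounds below), one finds $\sum_{x,m}\tilde\Phi^{(1)}(x,m,s)=\int_0^s\pi_0(\alpha_r^\Lambda(\cF_r^\Lambda(V_\Lambda)))\,dr = W(\Lambda,s)+E(\Lambda,s)\idty$, establishing \eqref{eq:step1_decomp}.

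The norm bound $\|\tilde\Phi^{(1)}(x,m,s)\|\leq sG_\Lambda(x,m)$ follows by combining: the factor $s$ from the length of the integration interval in \eqref{eq:loc_def}; the stretched-exponential decay of $\|\Phi(x,n)\|$ from Assumption~\ref{ass:short-range}; the quasi-locality estimate \eqref{wio_gen_ql_est} for $\cF_r^\Lambda$ and the analogous estimate for $\alpha_r^\Lambda$ recorded at the end of Section~\ref{sec:spec_flow}, with decay functions of finite moments of all orders; and the conditional-expectation difference bound \eqref{ce_diff_bd}. An explicit choice is $G_\Lambda(x,m)=C\,\mathbf{1}_{\{x\in\Lambda\}}\sum_{n=R}^m (1+n)^{\nu}e^{-an^\theta}G(m-n)$ for an appropriate combined decay $G$; summability $\sum_{x,m}G_\Lambda(x,m)<\infty$ reduces to the convolution of a stretched-exponential sequence with a summable one, multiplied by $|\Lambda|$. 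The commutation $[\tilde\Phi^{(1)}(x,s),\ketbra{\Omega}]=0$ is automatic: by \eqref{wi_commute_property}, $\tilde\cF_r^\Lambda(\pi_0(\Phi(x,n)))$ commutes with $\ketbra{\Omega(\Lambda,r)}$; by \eqref{spec_flow_proj_bh}, $\tilde\alpha_r^\Lambda$ intertwines $\ketbra{\Omega(\Lambda,r)}$ and $\ketbra{\Omega}$; hence each $\pi_0(\alpha_r^\Lambda(\cF_r^\Lambda(\Phi(x,n))))$ commutes with $\ketbra{\Omega}$, and this is preserved by summation in $n$ and integration in $r$.

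The principal obstacle is the rigorous derivation of \eqref{eq:duhamel_plan} in the infinite-volume GNS setting: while formally clean, the Duhamel computation involves commutators of the unbounded $H(\Lambda,r)$ with $D(\Lambda,r)$, and must be carried out on the dense domain $\pi_0(\cA^g)\Omega$ using the strong convergence tools of Section~\ref{sec:domain}.
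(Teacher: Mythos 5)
Your route is genuinely different from the paper's, and most of its periphery is sound. The paper never differentiates anything unbounded: it writes $W(\Lambda,s)+E(\Lambda,s)\idty$ as the limit \eqref{finite_limit} of bounded operators using the integrated identities of Lemmas \ref{lem:FH_convergence} and \ref{lem:FsHs_convergence}, expands as in \eqref{pre_limit_quant}, and anchors the result via the maps \eqref{def:K1_K2}: terms $\cK^{1,\Lambda}_s(h_x)$ at \emph{every} $x\in\Gamma$ plus terms $\cK^{2,\Lambda}_s(\Phi(x,k))$ for $x\in\Lambda$, so that summability over $\Gamma$ requires the refined bounds of Lemmas \ref{lem:lb_K1} and \ref{lem:ql_K1} showing that $\cK^{1,\Lambda}_s$ is small far from $\Lambda$. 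Your Duhamel route would instead produce terms anchored only in $\Lambda$, which trivializes the sum over $x$ and bypasses Lemmas \ref{lem:lb_K1}--\ref{lem:ql_K1} entirely; your localization by conditional expectations, the norm bounds linear in $s$, the summability in $m$, and the commutation with $\ketbra{\Omega}$ via \eqref{wi_commute_property} and \eqref{spec_flow_proj_bh} are all handled correctly, and the Fourier relation you invoke is indeed the defining property of $W_\gamma$ in the cited reference.

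The genuine gap is at your central step: the Duhamel identity expressing $U(\Lambda,s)^*H(\Lambda,s)U(\Lambda,s)-H_0$ as the $r$-integral of $\pi_0(\alpha^{\Lambda}_r(\cF^{\Lambda}_r(V_\Lambda)))$ is asserted rather than proven, and the justification you sketch does not go through as stated. First, the Hastings commutator identity is exact only when the generator $D(\Lambda,r)=\cG^{\Lambda}_r(V_\Lambda)$ is built from the same Hamiltonian that appears in the commutator; replacing $H(\Lambda,r)$ by the approximants $\pi_0(H_{\Lambda_n}+rV_\Lambda)$ introduces error terms coming from the mismatch between $\tau^{(\Lambda,r)}_t$ and the dynamics generated by $H_{\Lambda_n}+rV_\Lambda$, and Lemmas \ref{lem:strong_convergence_of_sums} and \ref{lem:FsHs_convergence} (which only give strong convergence of $\pi_0(H_{\Lambda_n}+rV_\Lambda)\psi$ and of $\tilde\cF^{\Lambda}_r$ applied to it) do not control those errors. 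Second, differentiating $r\mapsto U(\Lambda,r)^*H(\Lambda,r)U(\Lambda,r)\psi$ needs more than $U(\Lambda,r)\psi\in\dom(H_0)$: one must show, e.g., that $r\mapsto H_0\,U(\Lambda,r)\psi$ is differentiable, that $\tilde D(\Lambda,r)$ maps the relevant core into $\dom(H_0)$ so that $[\tilde D(\Lambda,r),H(\Lambda,r)]\psi$ is even defined, and that the $t$-integral defining $\tilde D(\Lambda,r)$ can be interchanged with the unbounded $H(\Lambda,r)$. None of this is supplied by Section \ref{sec:domain}, and establishing it is a technical undertaking comparable in weight to the paper's own argument, which is structured precisely so that every $s$-derivative acts on bounded, $\cA$-valued quantities (as in the proof of Lemma \ref{lem:lb_K1}). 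Until your Duhamel identity is proved on a suitable dense domain, the decomposition you build from it is not established.
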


The global term $\tilde{\Phi}^{(1)}(x,s)$ above will result from applying quasi-local maps $\cK_s^{i,\Lambda}$, $i=1,2$, to the interaction and perturbation terms associated to the site $x$. These maps are defined in terms of the examples introduced in Section~\ref{sec:ql}, and emerge from 
fixing any IAS $(\Lambda_n)$ and then applying Lemmas \ref{lem:FH_convergence}-\ref{lem:FsHs_convergence} to rewrite
\begin{equation}\label{finite_limit}
	(W(\Lambda,s)+E(\Lambda,s)\idty ) \psi = \lim_{n \to \infty} \left( \tilde{U}(\Lambda,s)^* \tilde{\cF}_s^{\Lambda}( \pi_0(H_{\Lambda_n} + s V_{\Lambda})) \tilde{U}(\Lambda,s) \psi - \tilde{\cF}(\pi_0(H_{\Lambda_n})) \psi \right)
\end{equation}
where we choose $\tilde{\cF}=\tilde{\cF}_0^\Lambda$. As the argument in the above limit is a finite sum of bounded operators, the various relationships \eqref{reps_of_wios}-\eqref{rep_spec_flow} between the quasi-local maps in the GNS representation to those on the $C^*$-algebra implies that for each $n$:
\begin{eqnarray} \label{pre_limit_quant}
	\tilde{\alpha}_s^\Lambda\circ\tilde{\cF}_s^{\Lambda}( \tilde{H}_{\Lambda_n} + s \tilde{V}_{\Lambda}) - \tilde{\cF}(\tilde{H}_{\Lambda_n}) & = & 
	\sum_{x \in \Lambda_n} \pi_0\left(\alpha_s^\Lambda\circ\cF_s^{\Lambda}(h_x)-\cF(h_x)\right) \nonumber \\ 
	&& \quad + \sum_{x\in\Lambda}\sum_{\substack{k\geq R: \\ b_x(k)\subseteq\Lambda}} \pi_0\left(s\alpha_s^\Lambda \circ \cF_s^{\Lambda}(\Phi(x,k))\right).
\end{eqnarray}  

Given this, for $i=1,2$ the map $\cK^{i, \Lambda}_s : \cA \to \cA$ are defined by
\begin{equation} \label{def:K1_K2}
	\cK_s^{1, \Lambda}(A) = \alpha_s^{\Lambda}( \cF_s^{\Lambda}(A)) - \cF(A) \quad \mbox{and} \quad \cK^{2, \Lambda}_s(A) = s \alpha_s^{\Lambda} ( \cF_s^{\Lambda}(A)). 
\end{equation}
It was proved, e.g. in \cite[Lemma 4.4]{nachtergaele:2022}, that both of these maps satisfy a local bound and quasi-local estimate that is independent of the finite volume $\Lambda$.
Specifically, for each $i=1,2$ there are non-negative numbers $p_i$, $q_i$ and $C_i$, and a decay function
$G_{\cK^i}$ (all independent of $\Lambda$) such that
\bea
\|\cK_s^{i, \Lambda}(A)\| & \leq & s C_i|X|^{p_i}\|A\| \label{Ki_locally_bounded} \\
\|[\cK_s^{i, \Lambda}(A),B]\| & \leq & s |X|^{q_i}\|A\|\|B\|G_{\cK^i}(d(X,Y)) \label{Ki_quasi_local}
\eea
hold for any $X,Y \in \mathcal{P}_0( \Gamma)$, $A \in \cA_X$, $B \in \cA_Y$, and $s \in \mathbb{R}$. In fact, one can take $p_1=q_1=2$, $p_2=0$ and $q_2=1$ and make explicit estimates on the decay function, see e.g. \cite[Remark 4.7]{nachtergaele:2022}. However, it suffices to note that each $G_{\cK^i}$ have finite moments of all orders in the sense of (\ref{all_finite_moms}).

However, as $\Lambda_n\uparrow\Gamma$ when $n\to\infty$, to prove that the decomposition in \eqref{eq:step1_decomp} is absolutely summable, we will need refinements 
of \eqref{Ki_locally_bounded}-\eqref{Ki_quasi_local}  for $\cK_s^{1,\Lambda}$ that also decay in the distance $d(X,\Lambda)$. Both of these bounds
will be a consequence of the perturbation $V_\Lambda$ being locally supported, which implies that the spectral flow $\alpha_s^{\Lambda}$ is 
approximately the identity far from $\Lambda$.   The necessary bounds are the content of Lemmas~\ref{lem:lb_K1} and \ref{lem:ql_K1} below. 


\begin{lemma}[Distance Locality Bound for $\cK_s^1$]\label{lem:lb_K1} 
	There exists a decay function $F_{\cK^1}$, with finite moments of all orders for which,
	given any $X,\Lambda \in \mathcal{P}_0( \Gamma)$ with $d(X, \Lambda) >0$,
	$A\in\cA_{X}$, and any $s\in\bR$,  the following local bound holds:
	\begin{equation}\label{close_to_id}
		\|\cK_s^{1, \Lambda}(A) \| \leq s|X|^2\|A\| F_{\cK^1}\left(d(X,\Lambda)\right).
	\end{equation}
\end{lemma}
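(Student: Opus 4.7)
The natural approach is to write
\begin{equation*}
\cK_s^{1,\Lambda}(A) = \bigl[\alpha_s^{\Lambda}(\cF_s^{\Lambda}(A)) - \cF_s^{\Lambda}(A)\bigr] + \bigl[\cF_s^{\Lambda}(A) - \cF(A)\bigr],
\end{equation*}
since both brackets vanish when $V_\Lambda = 0$, and each isolates the effect of the perturbation on, respectively, the spectral flow and the dynamics. I would bound each bracket separately by a quantity decaying in $d(X,\Lambda)$.

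For the second bracket, I would use the interaction-picture Duhamel formula
\begin{equation*}
\tau_t^{(\Lambda,s)}(A) - \tau_t^{(0)}(A) = is\int_0^t \tau_r^{(\Lambda,s)}\bigl([V_\Lambda, \tau_{t-r}^{(0)}(A)]\bigr)\,dr
\end{equation*}
and expand $V_\Lambda = \sum_{y\in\Lambda}\sum_{m\geq R,\,b_y(m)\subseteq\Lambda}\Phi(y,m)$. The Lieb--Robinson bound \eqref{unpert_lrb} applied term-by-term controls each $\|[\Phi(y,m), \tau_{t-r}^{(0)}(A)]\|$ by $C_\mu |X||b_y(m)|\|\Phi(y,m)\|\|A\|\exp(-\mu(d(b_y(m),X) - v_\mu|t-r|))$. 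Using $d(b_y(m),X)\geq d(y,X)-m\geq d(X,\Lambda)-m$, the stretched-exponential decay of $\|\Phi(y,m)\|$ from Assumption~\ref{ass:short-range}, and $\nu$-regularity of $\Gamma$, the sum over $y$ and $m$ produces a pointwise estimate of the form $Cs|X|\|A\|\tilde F_1(d(X,\Lambda))\,p(|t|)$ for some decay function $\tilde F_1$ with finite moments of all orders and a polynomially-bounded factor $p$. Integration against $w_\gamma$, whose super-polynomial decay absorbs $p(|t|)$, then yields the desired estimate on $\|\cF_s^\Lambda(A)-\cF(A)\|$.

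For the first bracket, differentiating the conjugation $\alpha_r^\Lambda$ gives
\begin{equation*}
\alpha_s^{\Lambda}(B) - B = i\int_0^s \alpha_r^{\Lambda}\bigl([D(\Lambda,r), B]\bigr)\,dr, \qquad B := \cF_s^{\Lambda}(A),
\end{equation*}
reducing the task to estimating $\|[D(\Lambda,r), \cF_s^\Lambda(A)]\|$. Expanding $D(\Lambda,r) = \sum_{y,m}\cG_r^\Lambda(\Phi(y,m))$ and applying the conditional-expectation telescoping \eqref{delta_ops} to each quasi-local observable $\cG_r^\Lambda(\Phi(y,m))$, one approximates it by a strictly local observable on $b_y(m+k)$ with tail error controlled by $G_\cG$ via \eqref{ce_diff_bd}. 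The quasi-locality of $\cF_s^\Lambda$ in \eqref{wio_gen_ql_est} bounds the commutator of the strictly local part with $\cF_s^\Lambda(A)$ by a factor $|X|G_\cF(d(X,b_y(m+k)))$, while the tail contributes via the trivial bound. Optimizing $k$ against these two decay functions yields a bound on $\|[\cG_r^\Lambda(\Phi(y,m)), \cF_s^\Lambda(A)]\|$ of the form $C\|A\||X||b_y(m)|\|\Phi(y,m)\|\,H(d(X,b_y(m)))$ with $H$ having finite moments of all orders. Summing over $y\in\Lambda$ and $m\geq R$ using $\nu$-regularity and the decay of $\Phi$ converts the per-site decay into decay in $d(X,\Lambda)$.

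The main obstacle is ensuring that no factor of $|\Lambda|$ or $\|V_\Lambda\|$ enters the final estimate, as would result from the naive quasi-locality bound \eqref{spec_flow_uni_ql_est} for $u(\Lambda,s)$. Instead, the locality of each individual term $\Phi(y,m)$ must be exploited so that each site $y\in\Lambda$ contributes a factor decaying in $d(y,X)\geq d(X,\Lambda)$. Since $G_\cG$, $G_\cF$, and the decay of $\Phi$ all have finite moments of all orders, the optimizations and all resulting iterated sums converge to produce a decay function $F_{\cK^1}$ of the same type; the $|X|^2$ prefactor in the lemma arises from the two successive applications of quasi-locality (one per bracket) combined with the local bound on $V_\Lambda$-commutators.
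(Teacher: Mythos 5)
Your handling of the first bracket is essentially the paper's proof: the paper also writes the spectral-flow difference as the cocycle integral $i\int_0^s \alpha_r^{\Lambda}([D(\Lambda,r),\cdot])\,dr$, expands $D(\Lambda,r)$ into anchored terms (its $\Psi_{\Lambda}(z,n,r)$ are exactly the telescoped conditional expectations of $\cG_r^{\Lambda}(\Phi(z,n))$), uses the $\Lambda$-independent quasi-locality \eqref{wio_gen_ql_est} together with \eqref{ce_diff_bd}, and converts per-site decay in $d(z,X)$ into decay in $d(X,\Lambda)$ via the moment bound \eqref{moment_bound}, which is where the second factor of $|X|$ comes from; whether one localizes the generator terms or $\cF_r^{\Lambda}(A)$ is immaterial.

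The second bracket is where your argument has a genuine gap, in two respects. First, the step claiming that the sum over $y,m$ yields a \emph{polynomially} bounded factor $p(|t|)$ which the decay of $w_{\gamma}$ absorbs does not hold as stated: the Lieb--Robinson bound \eqref{unpert_lrb} carries $e^{\mu v_{\mu}|t-r|}$, so after the $r$-integration the $t$-dependence grows exponentially, while $w_{\gamma}$ (whose Fourier transform is supported in $[-\gamma,\gamma]$) decays faster than any polynomial but slower than any exponential; hence $\int e^{\mu v_{\mu}|t|}|w_{\gamma}(t)|\,dt$ diverges and the naive integration fails. The standard repair is to take the minimum with the trivial bound and split the $t$-integral at $T\propto d(X,\Lambda)$, which still produces a decay function with finite moments of all orders, but this extra idea is missing from your write-up. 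Second, the bracket you are estimating is in fact identically zero: since $d(X,\Lambda)>0$ gives $[V_{\Lambda},A]=0$, the paper shows (see the computation following \eqref{eq:FTOC}, an integration by parts trading $w_{\gamma}$ for $W_{\gamma}$) that $\frac{d}{ds}\cF_s^{\Lambda}(A)= i\int \tau_t^{(\Lambda,s)}([V_{\Lambda},A])\,W_{\gamma}(t)\,dt=0$, so $\cF_s^{\Lambda}(A)=\cF(A)$ exactly and only the spectral-flow term needs to be estimated. Either noticing this cancellation or inserting the time-splitting argument would close the gap; as written, the estimate of the second bracket does not go through.
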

It is easy to check that for fixed $\epsilon \in (0,1)$ and any decay function $F$ with finite $\nu$-moment, the function
$M_F^\epsilon:[0,\infty)\to[0,\infty)$ defined by
\be\label{moment_fn}
M_F^{(\epsilon)}(r) = \sum_{n\geq r} (n+1)^\nu F(\epsilon n)
\ee
is also a decay function. The proof of Lemma~\ref{lem:lb_K1} shows that one may take
\begin{equation} \label{our_bd_G1}
	F_{\cK^1}(r) = 2\kappa\|\Psi\|_{G_\Psi}\left(M_{G_\Psi}^{(1-\epsilon)}(r)+2G_\Psi(R)M_{G_\caF}^{(\epsilon)}(r)\right)
\end{equation}
where $G_{\cF}$ and $G_{\Psi}$ are the decay functions previously discussed in (\ref{wio_gen_ql_est}) and (\ref{est_spec_flow_int}).
Since $G_{\cF}$ and $G_{\Psi}$ both have finite moments of all orders, the same is true for $F_{\cK^1}$. 

The proof of Lemma~\ref{lem:lb_K1} will also make use of the following bound, which holds for any $F$ and $\epsilon$ as in \eqref{moment_fn}, and $\Lambda, X \in \mathcal{P}_0( \Gamma)$ such that $d(X, \Lambda)>0$:
\begin{equation}\label{moment_bound}
	\sum_{z\in\Lambda} F(\epsilon d(z,X)) \leq \kappa |X| M_F^{(\epsilon)}(d(X,\Lambda)).
\end{equation}
This follows from the following simple calculation
\begin{eqnarray}
	\sum_{z\in\Lambda} F(\epsilon d(z,X))  \leq \sum_{\substack{z\in\Gamma : \\ d(z,X)\geq d(X,\Lambda)}} F(\epsilon d(z,X)) 
	& \leq & \sum_{n\geq d(X,\Lambda)}\sum_{\substack{z\in\Gamma : \\ n\leq d(z,X)<n+1}} F(\epsilon d(z,X))  \nonumber \\
	&\leq & \kappa|X|\sum_{n\geq d(X,\Lambda)} (n+1)^\nu F(\epsilon n) 
\end{eqnarray}
where the last inequality uses that $|X(n)|\leq \kappa n^\nu|X|$ for any $n\geq1$ by $\nu$-regularity, see \eqref{Lambda_ball}.

\begin{proof}[Proof of Lemma~\ref{lem:lb_K1}:] Fix $X,\Lambda \in \cP_0(\Gamma)$ such that  $X \cap \Lambda = \emptyset$, and let $A \in \cA_X$ be arbitrary.
	Recall that $\cK_s^{1, \Lambda}$ is as defined in (\ref{def:K1_K2}), and that $D(\Lambda, s)$ from (\ref{SF_gen_alg}) is the generator of the spectral flow. Then, since $\alpha_0 = \rm{id}$ and $\cF =\cF_0^{\Lambda}$, it follows that
	\be\label{eq:FTOC}
	\cK_s^{1, \Lambda}(A)  = \int_0^s \frac{d}{dr} \alpha_r^{\Lambda}(\caF^{\Lambda}_r(A))dr = i\int_0^s \alpha^{\Lambda}_r([D(\Lambda,r),\caF^{\Lambda}_r(A)])dr
	\ee
	where one uses (\ref{para_der}) and \cite[Equation~(6.37)]{nachtergaele:2019} to obtain
	\begin{eqnarray}
		\frac{d}{ds} \cF_s^{\Lambda}(A) & = & i \int_{- \infty}^{\infty}  \int_0^t \tau_r^{(\Lambda, s)}([V_{\Lambda}, \tau_{t-r}^{(\Lambda, s)}(A)]) \, dr \, w_{\gamma}(t) \, dt \nonumber \\ & = & i \int_{- \infty}^{\infty}  \int_0^t \tau_r^{(\Lambda, s)}([V_{\Lambda}, \tau_{t-r}^{(\Lambda, s)}(A)]) \, dr \, \left( - \frac{d}{dt} W_{\gamma}(t) + \delta_0(t) \right) \, dt \nonumber \\
		& = & i  \int_{- \infty}^{\infty} \tau_t^{(\Lambda, s)}([V_{\Lambda}, A]) W_{\gamma}(t) \, dt  = 0 \, .
	\end{eqnarray}
	Here, the final two equalities follow from integrated by parts, and the fact that the supports of $V_\Lambda$ and $A$ are disjoint. 
	
	Returning to (\ref{eq:FTOC}), we expand the generator as in (\ref{def_spec_flow_int}) to write
	\begin{equation} \label{gen_expanded}
		[D(\Lambda, r), \cF_r^{\Lambda}(A) ] = \sum_{z \in \Lambda} \sum_{n \geq R} [ \Psi_{\Lambda}(z,n,r), \cF_r^{\Lambda}(A)] 
	\end{equation}
	Fix $\epsilon \in (0,1)$, and for each $z \in \Lambda$, set $k_z(\epsilon) = \epsilon d(z, X)$.  
	For each term in (\ref{gen_expanded}), we approximate  $ \cF_r^{\Lambda}(A)$ with a strictly local approximation:
	\begin{equation} \label{local_estimate}
		[ \Psi_{\Lambda}(z,n,r), \cF_r^{\Lambda}(A)] =  [ \Psi_{\Lambda}(z,n,r), \Pi_{X(k_z(\epsilon))}(\cF_r^{\Lambda}(A)) + \left( \cF_r^{\Lambda}(A) - \Pi_{X(k_z(\epsilon))}(\cF_r^{\Lambda}(A))\right)]
	\end{equation}
	where one uses conditional expectation associated with the inflated set $X(k_z(\epsilon))$, see \eqref{Lambda_ball}-\eqref{cond_exp_a}. 
	For the second term, one can apply the quasi-local bound for $\cF_r^{\Lambda}$ from (\ref{wio_gen_ql_est}) coupled with (\ref{cond_exp_est}) to produce
	\be
	\|[\Psi_\Lambda(z,n,r), \caF_r^{\Lambda}(A)-\Pi_{X(k_z(\epsilon))}(\caF_r^{\Lambda}(A))]\| \leq 4 \|A\||X|\|\Psi_\Lambda(z,n,r)\|G_{\caF}(k_z(\epsilon)) \, .
	\ee
	Then, summing over $z\in\Lambda$ and $n\geq R$, and  applying (\ref{est_spec_flow_int}) and (\ref{moment_bound}) gives the final estimate
	\be \label{lem_est_1}
	\sum_{z \in \Lambda} \sum_{n \geq R} \|[\Psi_\Lambda(z,n,r), \caF_r^{\Lambda}(A)-\Pi_{X(k_z(\epsilon))}(\caF_r^{\Lambda}(A))]\| \leq 4 \kappa \|A\||X|^2
	\| \Psi \|_{G_{\Psi}} G_{\Psi}(R) M_{G_{\caF}}^{(\epsilon)}(d(X,\Lambda)). 
	\ee
	
	To estimate the remaining terms in \eqref{local_estimate}, note that for each $z \in \Lambda$, $b_z(n) \cap X(k_z(\epsilon)) \neq \emptyset$ only when 
	$n \geq k_z(1- \epsilon)$. As a result, arguments similar to the prior estimate produce the bound
	\begin{eqnarray} \label{lem_est_2}
		\sum_{z\in\Lambda} \sum_{n\geq R} \| [\Psi_\Lambda(z,n,r), \Pi_{X(k_z(\epsilon))}(\caF^{\Lambda}_r(A))] \| & = & 
		\sum_{z \in\Lambda}\sum_{n\geq k_z(1-\epsilon)} \| [\Psi_\Lambda(z,n,r), \Pi_{X(k_z(\epsilon))}(\caF^{\Lambda}_r(A))] \| \nonumber \\
		& \leq & 2 \| A \| \| \Psi \|_{G_{\Psi}} \sum_{z \in \Lambda} G_{\Psi}(k_z(1- \epsilon)) \nonumber \\
		& \leq & 2 \kappa \| A \| |X| \| \Psi \|_{G_{\Psi}} M_{G_{\Psi}}^{(1- \epsilon)}(d(X, \Lambda)).
	\end{eqnarray}
	
	Recalling the specific decay function from (\ref{our_bd_G1}), the bound claimed in (\ref{close_to_id}) now follows
	by inserting (\ref{gen_expanded}) into (\ref{eq:FTOC}) and using the estimates found in (\ref{lem_est_1}) and (\ref{lem_est_2}) above. 
\end{proof}

By combining the estimate in Lemma~\ref{lem:lb_K1} and the original quasi-locality bound from \eqref{Ki_quasi_local}, one arrives that the following quasi-locality bound for $\| [\cK_s^{1, \Lambda}(A), B] \|$, which decays in both the distance between $X=\supp(A)$ and $Y=\supp(B)$ as well as the distance between $\Lambda$ and $X$. This is the content of the next lemma.

\begin{lemma}[Distance Quasi-Locality for $\cK_1$]\label{lem:ql_K1} 
	There exists a function $G:[0,\infty)\times [0,\infty) \to [0,\infty)$, non-increasing in both variables, such that
	given any $\Lambda, X,Y \in \mathcal{P}_0(\Gamma)$ with $d(X, \Lambda)>0$,  the bound 
	\be\label{distance_ql}
	\|[\cK_s^{1, \Lambda}(A),B]\| \leq s\|A\|\|B\||X|^2G\left(d(X,\Lambda),d(X,Y)\right)
	\ee
	holds for all $A \in \cA_X$, $B \in \cA_Y$, and $s \in \mathbb{R}$. More precisely, for any
	$\delta \in (0,1)$, one may choose
	\be \label{G2_decay}
	G(m,n) = \max\left\{ 2F_{\cK^1}^{\delta}(m)F_{\cK^1}^{1-\delta}(n),\, G_{\cK^1}^{\delta}(m)G_{\cK^1}^{1-\delta}(n)\right\}
	\ee
	where $F_{\cK^1}$  and $G_{\cK^1}$ are the decay functions from Lemma~\ref{lem:lb_K1} and \eqref{Ki_quasi_local}, and  $G^\delta(m):=(G(m))^\delta.$
\end{lemma}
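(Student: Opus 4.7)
The plan is to combine two independent bounds on $\|[\cK_s^{1,\Lambda}(A),B]\|$ --- one decaying in $m=d(X,\Lambda)$ and one decaying in $n=d(X,Y)$ --- via geometric (Riesz--Thorin style) interpolation. The first bound follows from Lemma~\ref{lem:lb_K1} together with the trivial commutator estimate $\|[\cK_s^{1,\Lambda}(A),B]\|\leq 2\|\cK_s^{1,\Lambda}(A)\|\|B\|$, which yields
\be
\|[\cK_s^{1,\Lambda}(A),B]\|\leq 2s|X|^2\|A\|\|B\|\,F_{\cK^1}(m).
\ee
The second is just the quasi-locality bound \eqref{Ki_quasi_local} applied to $\cK_s^{1,\Lambda}$ itself,
\be
\|[\cK_s^{1,\Lambda}(A),B]\|\leq s|X|^2\|A\|\|B\|\,G_{\cK^1}(n).
\ee
Neither estimate alone provides joint decay in $(m,n)$, yet such joint decay is exactly what is needed in the proof of Theorem~\ref{thm:Step1} to sum simultaneously over sites in $\Lambda(R)^c$ and over support radii.

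To obtain the joint decay, I would fix $\delta\in(0,1)$ and split
\[
\|[\cK_s^{1,\Lambda}(A),B]\|=\|[\cK_s^{1,\Lambda}(A),B]\|^{\delta}\,\|[\cK_s^{1,\Lambda}(A),B]\|^{1-\delta}.
\]
Applying either of the two preparatory bounds to each factor returns the prefactor $s|X|^2\|A\|\|B\|$ (since $\delta+(1-\delta)=1$) multiplied by a product of powers of $F_{\cK^1}$ and $G_{\cK^1}$ evaluated at $m$ and $n$. The two possible assignments of powers produce the two expressions appearing inside the $\max$ in \eqref{G2_decay}, and one defines $G(m,n)$ as the maximum so that a single function dominates the commutator regardless of which interpolation is invoked; the constant $2^{\delta}\leq 2$ from the first bound is absorbed into the explicit factor $2$ appearing there.

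The argument is essentially routine once the two preparatory bounds are in hand, so there is no real obstacle. The one point worth flagging for downstream use is that $G(m,n)$ inherits finite moments of all orders in each of its variables from those of $F_{\cK^1}$ and $G_{\cK^1}$, which is precisely what secures the summability of $\sum_{x\in\Lambda(R)^c}\sum_{m\geq R}\|\Phi^{(1)}(x,m,s)\|$ required in the proof of Theorem~\ref{thm:Step1}.
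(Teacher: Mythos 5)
Your two preparatory bounds are exactly the ones the paper works from (the trivial commutator estimate combined with Lemma~\ref{lem:lb_K1}, and the $\Lambda$-independent quasi-locality bound \eqref{Ki_quasi_local}), and the interpolation step $\|[\cK_s^{1,\Lambda}(A),B]\|=\|[\cK_s^{1,\Lambda}(A),B]\|^{\delta}\,\|[\cK_s^{1,\Lambda}(A),B]\|^{1-\delta}$ is sound: applying one bound to each factor gives
\[
\|[\cK_s^{1,\Lambda}(A),B]\| \;\leq\; 2^{\delta}\, s\,\|A\|\,\|B\|\,|X|^{2}\,F_{\cK^1}^{\delta}\bigl(d(X,\Lambda)\bigr)\,G_{\cK^1}^{1-\delta}\bigl(d(X,Y)\bigr),
\]
a legitimate decay function with joint decay and finite moments, which via the factorized form \eqref{factorized_est} serves the summability argument in Theorem~\ref{thm:Step1} just as well as the paper's choice. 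The paper proceeds differently: it splits into the cases $d(X,\Lambda)\leq d(X,Y)$ and $d(X,\Lambda)>d(X,Y)$, applies only one of the two bounds in each case, and uses monotonicity of that single decay function to split it as $H(r)\leq H^{\delta}(d(X,\Lambda))\,H^{1-\delta}(d(X,Y))$; that is why its $G$ in \eqref{G2_decay} is a maximum of the two \emph{pure} products $F_{\cK^1}^{\delta}F_{\cK^1}^{1-\delta}$ and $G_{\cK^1}^{\delta}G_{\cK^1}^{1-\delta}$. Your interpolation avoids the case analysis, at the cost of producing a mixed product.

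The one inaccuracy is your claim that the two assignments of exponents ``produce the two expressions appearing inside the $\max$ in \eqref{G2_decay}'': they do not. The interpolation always yields mixed products $F_{\cK^1}^{\delta}(m)G_{\cK^1}^{1-\delta}(n)$ or $G_{\cK^1}^{\delta}(n)F_{\cK^1}^{1-\delta}(m)$, and since $F_{\cK^1}$ and $G_{\cK^1}$ bear no pointwise relation to each other, a mixed product is in general not dominated by the maximum of the two pure products. So your argument proves the qualitative statement of the lemma --- existence of a decay function $G(m,n)$ satisfying \eqref{distance_ql} --- but not the ``more precisely, one may choose'' clause with the specific function \eqref{G2_decay}. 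Because everything downstream only uses the factorized bound $G(m,n)\leq 2F_{\delta}(m)F_{1-\delta}(n)$ of \eqref{factorized_est}, which your mixed product also satisfies, the substitution is harmless; but it should be presented as a different admissible choice of $G$, not as a derivation of the paper's formula.
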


In applications, it can be convenient to bound $G(n,m)$ by a function that separates over the two arguments. In this case, taking $\delta$ as in (\ref{G2_decay}),
\be \label{factorized_est}
G(m,n) \leq 2 F_{\delta}(m) \cdot F_{1- \delta}(n) \quad \mbox{with} \quad  F_{\delta}(m) = \max\{F_{\cK^1}^{\delta}(m), G_{\cK^1}^{\delta}(m)\}.\, 
\ee

\begin{proof}
	Fix $0<\delta<1$. In the case that $d(X,\Lambda) \leq d(X,Y)$, the quasi-locality estimate (\ref{Ki_quasi_local}) shows that
	\be\label{qlb_1}
	\|[\cK_s^{1, \Lambda}(A),B]\| \leq s|X|^2\|A\|\|B\|G_{\cK^1}^{\delta}(d(X,\Lambda))G_{\cK^1}^{1-\delta}(d(X,Y))  
	\ee
	where we have used that $G_{\cK^1}$ is non-increasing.
	
	Alternatively, if $d(X,\Lambda)>d(X,Y)$, the local bound from Lemma~\ref{lem:lb_K1} implies
	\be \label{qlb_2}
	\| [\cK_s^{1, \Lambda}(A),B]  \| \leq 2  \| \cK_s^{1, \Lambda}(A) \| \| B\|  \leq  2s|X|^2\|A\|\|B\| F_{\cK^1}\left(d(X,\Lambda)\right).
	\ee
	Since $F_{\cK^1}$ is also non-increasing, the bound $F_{\cK^1}(d(X, \Lambda)) \leq F_{\cK^1}^{\delta}\left(d(X,\Lambda)\right) F_{\cK^1}^{1-\delta}\left(d(X,Y)\right)$ follows. The bound (\ref{distance_ql}) is then a consequence of \eqref{qlb_1} and \eqref{qlb_2}.
\end{proof}


We conclude this section by using Lemmas~\ref{lem:lb_K1}-\ref{lem:ql_K1} to prove Theorem~\ref{thm:Step1}.

\begin{proof}[Proof of Theorem~\ref{thm:Step1}] 

	Fix $\gamma \in (0, \gamma_0)$, $\Lambda \in \mathcal{P}_0( \Gamma)$,
	and take any IAS $(\Lambda_n)$ such that $\Lambda \subseteq \Lambda_n$ for all $n$. Define the spectral flow $\alpha_s^{\Lambda}$ and the weighted integral operators $\cF_s^{\Lambda}$, $\cF=\cF_0^\Lambda$ with respect to the choices of $\gamma$ and $\Lambda$ as in (\ref{spec_flow_on_A}) and (\ref{def:F+G}), and then take $\cK_s^{i,\Lambda}$, $i=1,2$, as defined \eqref{def:K1_K2}. 
	
	Let $\chi_{\Lambda}$ be the characteristic function of $\Lambda\subset\Gamma$. Then, for each $x \in \Gamma$ and $s\in \bR$ such that $|s| \leq s_0^{\Lambda}(\gamma)$, the self-adjoint operator $\tilde{\Phi}^{(1)}(x,s) = \pi_0(\Phi^{(1)}(x,s))\in\cB(\cH)$ is defined by
	\begin{equation} \label{phi_x_1}
		\Phi^{(1)}(x,s) = \cK_s^{1, \Lambda}(h_x) + \chi_{\Lambda}(x) \sum_{\stackrel{k \geq R:}{b_x(k) \subset \Lambda}} \cK_s^{2, \Lambda}( \Phi(x,k)) \in \cA.
	\end{equation}
	
	To show that each $\tilde{\Phi}^{(1)}(x,s)$ commutes with the ground state projection $\ketbra{\Omega}$, recall that the ground state of the perturbed system is
	$\Omega(\Lambda, s) = \tilde{U}(\Lambda,s) \Omega$ if $|s|\leq s_\gamma^\Lambda$. Then, recalling the relations \eqref{reps_of_wios}-\eqref{rep_spec_flow}, a simple calculation shows that for all $A \in \cA$
	\begin{eqnarray}
		\left[ \pi_0( \alpha_s^{\Lambda}( \cF_s^{\Lambda}(A))), \ketbra{\Omega}\right] & = & 
		\left[ \tilde{U}(\Lambda,s)^* \tilde{ \cF}_s^{\Lambda}(\pi_0(A)) \tilde{U}(\Lambda,s), \ketbra{\Omega}\right] \nonumber \\
		& = &  \tilde{U}(\Lambda,s)^* \left[ \tilde{ \cF}_s^{\Lambda}(\pi_0(A)), \ketbra{\Omega(\Lambda, s)}\right] \tilde{U}(\Lambda,s) = 0, \label{eq:partial_commutation}
	\end{eqnarray}
	where the final equality uses that (\ref{wi_commute_property}) holds since $|s| \leq s_0^{\Lambda}(\gamma)$.  Since \eqref{eq:partial_commutation} trivially holds for $s=0$, considering \eqref{def:K1_K2}, the above implies that
	\begin{equation}
		\left[ \pi_0( \cK_s^{1, \Lambda}(A)), \ketbra{\Omega} \right] = \left[ \pi_0( \cK_s^{2, \Lambda}(A)), \ketbra{\Omega} \right] = 0 \,.
	\end{equation}
	Hence, $[ \tilde{\Phi}^{(1)}(x,s), \ketbra{\Omega}]=0$ for all $x \in \Gamma$ and $|s| \leq s_0^{\Lambda}(\gamma)$ as claimed.
	
	To establish \eqref{eq:step1_decomp}, use the condition expectations from \eqref{delta_ops} to decompose each $\tilde{\Phi}^{(1)}(x,s)$ as
	\begin{equation} \label{phi_x_1_sum}
		\tilde{\Phi}^{(1)}(x,s) = \sum_{m \geq R} \tilde{\Phi}^{(1)}(x,m,s)
	\end{equation}
	where $\tilde{\Phi}^{(1)}(x,m,s) = \pi_0(\Phi^{(1)}(x,m,s) )\in  \pi_0(\cA_{b_x(m)})$ is defined for each $m \geq R$ by
	\begin{equation} \label{phi_x_1_bits}
		\Phi^{(1)}(x,m,s) := \Delta_{b_x(R)}^{m}(\cK_s^{1, \Lambda}(h_x)) + \chi_\Lambda(x)\sum_{\substack{R\leq k \leq m :\\ b_x(k)\subseteq \Lambda}} \Delta_{b_x(k)}^{m}(\cK_s^{2, \Lambda}(\Phi(x,k))) \,.
	\end{equation}
	With respect to this notation, \eqref{finite_limit}, \eqref{pre_limit_quant} and \eqref{phi_x_1} show that for all $\psi\in \pi_0(\cA^{\rm loc})\Omega$,
	\begin{eqnarray}
		\left( W(\Lambda, s) + E(\Lambda, s) \idty \right) \psi & = & \lim_{n \to \infty} \sum_{x \in \Lambda_n} \tilde{\Phi}^{(1)}(x,s) \psi \nonumber \\
		& = & \lim_{n \to \infty} \sum_{x \in \Lambda_n} \sum_{m \geq R} \tilde{\Phi}^{(1)}(x,m,s)  \psi.
	\end{eqnarray}
	Since this $\pi_0(\cA^{\rm loc})\Omega\subseteq \cH$ is dense, the equality in (\ref{eq:step1_decomp}) follows from establishing
	absolute summability of the terms $\tilde{\Phi}^{(1)}(x,m,s)$. This is achieved by defining a function function
	\be\label{G_Lambda}
	G_\Lambda(x,m) = \chi_{\Lambda(R)}(x) G_1(m) + \chi_{\Gamma\setminus\Lambda(R)}(x) G_2\left(d(b_x(R),\Lambda),m\right)
	\ee
	which bounds the norms of these terms and satisfies (\ref{G_summable}). Here, we note that $\Lambda(R)$ is as in \eqref{Lambda_ball}, and the functions $G_1:[0,\infty)\to[0,\infty)$ and $ G_2:[0,\infty)\times[0,\infty)\to[0,\infty)$ will be independent of $\Lambda$. 

	First, suppose $x \in \Gamma \setminus \Lambda(R)$. As $R \geq 0$ is the finite range of the unperturbed interaction, (\ref{phi_x_1_bits}) simplifies to
	\begin{equation}
		\Phi^{(1)}(x,m,s) = \Delta_{b_x(R)}^{m}(\cK_s^{1, \Lambda}(h_x)) \, .
	\end{equation}
	Then, applying Lemmas~\ref{lem:lb_K1} and \ref{lem:ql_K1} with the local approximation bound \eqref{ce_diff_bd} one finds 
	\begin{equation} \label{basic_bit_est}
		\| \tilde{\Phi}^{(1)}(x,m,s) \| = \| \Phi^{(1)}(x,m,s) \| \leq s G_2\left(d(b_x(R),\Lambda),m\right)
	\end{equation}
	where for any fixed $\delta \in (0, 1)$, the function $G_2$ can be taken to be
	\begin{equation} \label{G_far_est_1}
		G_2(l,m)  =   C \cdot \left\{ \begin{array}{cl}  F_{\cK^1}(l), & \mbox{if } m=R, \\ 
			4 F_{\delta}(l) \cdot F_{1- \delta}(m-R-1), & \mbox{if } m \geq R+1 .\end{array} \right.
	\end{equation}
	Here, $C = \kappa^2 R^{2 \nu} \| h \|_{\infty}$, $F_{\cK^1}$ is the function from Lemma~\ref{lem:lb_K1}, and
	\be
	F_\delta(l) = \max\left\{(F_{\cK^1}(l))^\delta,(G_{\cK^1}(l))^\delta\right\}.
	\ee
	More specifically, the bound in \eqref{basic_bit_est} for $m=R$ is a direct application of Lemma~\ref{lem:lb_K1} while the bound for $m \geq R+1$ follows from the quasi-local estimate in Lemma~\ref{lem:ql_K1} and the subsequent bound (\ref{factorized_est}) coupled with \eqref{def:qlm}-\eqref{ce_diff_bd}.
	
	Given \eqref{G_Lambda}, the summability of $G_\Lambda$ over the sites $x\in\Gamma\setminus\Lambda(R)$ follows from observing that
	\begin{align}
		\sum_{x\in\Lambda(R)^c} \sum_{m\geq R}G_2\left(d(b_x(R),\Lambda),m\right) & =  C\sum_{x\in\Lambda(R)^c}F_{\cK^1}(d(b_x(R),\Lambda)) \nonumber\\ & + 4C\sum_{x\in\Lambda(R)^c}F_{\delta}(d(b_x(R),\Lambda)) \sum_{m\geq R+1}F_{1-\delta}(m-R-1)<\infty
	\end{align}
	as both $F_{\cK^1}$ and $G_{\cK^1}$ (and, thus, $F_\delta$) have finite moments of all orders. In particular, for any decay function $F:[0,\infty)\to [0,\infty)$ with a finite $\nu$-moment,
	\[
	\sum_{x\in\Lambda(R)^c}F(d(b_x(R),\Lambda) )\leq \sum_{n\geq R}\sum_{x\in\Lambda(n+1)\setminus\Lambda(n)} F(n-R) \leq \kappa|\Lambda|\sum_{n\geq R}(n+1)^\nu F(n-R) < \infty.
	\]
	
	We now turn to the sites $x \in \Lambda(R)$, for which we demonstrate that 
	\be \label{close_goal}
	\|\Phi^{(1)}(x,m,s)\| \leq sG_1(m)
	\ee
	where $G_1$ is a summable function. First consider (\ref{phi_x_1_bits}) when $m=R$. Combining the local bounds (\ref{Ki_locally_bounded}), the uniform bound \eqref{uni_bd}, and the interaction bound in Assumption~\ref{ass:short-range}, one produces the $x$-independent bound
	\begin{eqnarray} \label{m=R_est}
		\| \Phi^{(1)}(x,R,s) \| & \leq & \| \cK^{1, \Lambda}(h_x) \| + \| \cK^{2, \Lambda}(\Phi(x,R)) \| \nonumber \\
		& \leq & s C_1 \kappa^2 R^{2 \nu} \| h \|_{\infty} + s C_2 \| \Phi \| e^{- a R^{\theta}}\,.
	\end{eqnarray}
	
	Alternatively, for $m\geq R+1$,  (\ref{phi_x_1_bits}) can be estimated as 
	\begin{eqnarray} \label{trivial_bound}
		\|\Phi^{(1)}(x,m,s)\| & \leq & \|\Delta_{b_x(R)}^{m}(\cK_s^{1, \Lambda}(h_x))\| + \chi_\Lambda(x)\sum_{\substack{R\leq k \leq m :\\ b_x(k)\subseteq \Lambda}} \|\Delta_{b_x(k)}^{m}(\cK_s^{2, \Lambda}(\Phi(x,k)))\| \nonumber \\
		& \leq & 2 s \kappa^2 R^{2 \nu} \| h \|_{\infty} G_{\cK^1}(m-R-1) + 2 s \kappa \sum_{\substack{R\leq k \leq m :\\ b_x(k)\subseteq \Lambda}} k^{\nu} \| \Phi(x, k) \| 
		G_{\cK^2}(m-k-1)
	\end{eqnarray}
	where one uses the quasi-local estimates from (\ref{Ki_quasi_local}) and the local approximation
	bound in (\ref{ce_diff_bd}). Given Assumption~\ref{ass:short-range}, the final sum above can be further estimated as
	\begin{eqnarray}
		\sum_{\substack{R\leq k \leq m :\\ b_x(k)\subseteq \Lambda}} k^{\nu} \| \Phi(x, k) \| 
		G_{\cK^2}(m-k-1) & \leq & \| \Phi \| \sum_{k=R}^m k^{\nu} e^{-a k^{\theta}} G_{\cK^2}(m-k-1) \nonumber \\
		& \leq & \| \Phi \| \left( G_{\cK^2}(m/2) \sum_{k=R}^{m/2 -1} k^{\nu} e^{-a k^{\theta}}  + G_{\cK^2}(0) \sum_{k \geq m/2} k^{\nu} e^{-a k^{\theta}} \right).
	\end{eqnarray}
	
	To simplify notation, let $	M_{\Phi}(r) := \sum_{k \geq r} k^{\nu} e^{- a k^{\theta}} $ denote the $\nu$-th moment of the decay function associated with the perturbation $\Phi$ from Assumption~\ref{ass:short-range}. Then, in summary, one has that for $x\in\Lambda(R)$, \eqref{close_goal} holds for the decay function $G_1$ defined by
	\begin{equation}
		G_1(R) =   C_1 \kappa^2 R^{2 \nu} \| h \|_{\infty} +  C_2 \| \Phi \| e^{- a R^{\theta}}
	\end{equation}
	and for $m \geq R+1$,
	\begin{equation}\label{long_decay}
		G_1(m) =  2 C G_{\cK^1}(m-R-1) + 2 \kappa \|\Phi\| \left(M_\Phi(R)G_{\cK^2}(m/2)+G_{\cK^2}(0)M_\Phi(m/2)\right).
	\end{equation}

	Since each of the decay functions in \eqref{long_decay}
	has finite moments of all orders, it is clear that $\sum_{m\geq R}G_1(m) <\infty$. As a consequence, $G_\Lambda$ as in \eqref{G_Lambda} satisfies
	\[
	\sum_{x\in\Lambda(R)}\sum_{m\geq R} G_\Lambda(x,m) \leq \kappa R^\nu|\Lambda| \sum_{m\geq R} G_1(m) <\infty.
	\]
	This demonstrates absolute summability of the terms in (\ref{eq:step1_decomp}), and hence, completes the proof of
	Theorem~\ref{thm:Step1}. 
\end{proof}


\subsection{The final decomposition of the transformed perturbation via LTQO}\label{sec:Step2}
We now turn our attention to proving Theorem~\ref{thm:Step2}, which is a consequence of one last decomposition of the transformed perturbation from Theorem~\ref{thm:Step1}, i.e.
\[\left\{\tilde{\Phi}^{(1)}(x,m,s)\in\pi_0(\caA_{b_x(m)}): x\in \Gamma, \, m\geq R\right\}.\] 
The key component for proving the desired norm bounds for this final decomposition is Lemma~\ref{lem:LTQO_Est} below, and it is in the proof of this result where one needs the LTQO property from Assumption~\ref{ass:LTQO}. To this end, we first shift the transformed perturbation terms by their expectation in the ground state $\Omega$, as this will put us in the appropriate setting to apply LTQO.

Throughout this section, we assume $\gamma\in(0,\gamma_0)$ is fixed and that $s\in\bR$ is such that $|s|\leq s_0^\Lambda(\gamma)$. As such, $\Omega(\Lambda,s) = \tilde{U}(\Lambda,s)\Omega$ is the ground state of $H(\Lambda,s)$, and one finds that $ W(\Lambda, s) \Omega =0$ from considering (\ref{main_decomp}) in the case $\psi=\Omega$. Thus, Theorem~\ref{thm:Step1} implies that for any $\psi \in \pi_0( \cA^{\rm loc}) \Omega$
\be
W(\Lambda,s)\psi=\tilde{U}(\Lambda,s)^* H(\Lambda,s) \tilde{U}(\Lambda,s) \psi - H_0\psi - E(\Lambda,s) \psi  =
\sum_{x\in\Gamma}\sum_{m\geq R} \tilde{\Phi}^{(1)}_\omega(x,m,s)\psi,
\ee
where the (self-adjoint) observables $\tilde{\Phi}^{(1)}_\omega(x,m,s) \in \pi_0(\caA_{b_{x}(m)})$ are defined by
\be\label{gns_local_at_x_c}
\tilde{\Phi}^{(1)}_\omega(x,m,s)=\tilde{\Phi}^{(1)}(x,m,s)-\langle\Omega,\tilde{\Phi}^{(1)}(x,m,s)\Omega\rangle\idty \,.
\ee
For the proofs of Lemma~\ref{lem:LTQO_Est} and Theorem~\ref{thm:Step2}, it is also convenient to set
\be \label{gns_global_at_x_c}
\tilde{\Phi}^{(1)}_{\omega}(x,s) := \sum_{m \geq R} \tilde{\Phi}_{\omega}^{(1)}(x,m,s), \qquad x\in \Gamma
\ee
which belongs to $\pi_0(\caA)$ by Theorem~\ref{thm:Step1}.

\begin{lemma}\label{lem:LTQO_Est}
	Let $\tilde{P}_{b_x(n)}= \pi_0(P_{b_x(n)})\in\cB(\cH)$ denote the representation of the ground state projection $P_{b_x(n)}$ in the GNS space. Then, under the assumptions of Theorem~\ref{thm:Step2}, the bound
	\be \label{LTQO_est_b}
	\left\| \sum_{k=R}^m \tilde{\Phi}_\omega(x,k,s)\tilde{P}_{b_x(n)}\right\| \leq 2s\left(G_\Lambda^{(1)}(x,m+1)+G_\Lambda^{(1)}(x,R)\sqrt{(1+m)^{\nu}G_0(n-m)}\right)
	\ee
	holds where $G_0$ is the decay function from Assumption~\ref{ass:LTQO}, and $G_\Lambda^{(1)}(x,m) = \sum_{k\geq m}G_\Lambda(x,k)$ with $G_\Lambda$ as in Theorem~\ref{thm:Step1}.
\end{lemma}

\begin{proof}
	To begin, one uses the LTQO property \eqref{LTQOforA} to show that
	\be\label{LTQO_cor}
	\left|\|\tilde{A}\tilde{P}_{b_x(n)}\|-\|\tilde{A}P_\Omega\|\right| \leq \|A\|\sqrt{(1+m)^{\nu}G_0(n-m)} \; \mbox{  for all  } \; A\in \cA_{b_x(m)}
	\ee
	where $P_\Omega = \ketbra{\Omega}$ and $\tilde{A} = \pi_0(A)$. To see this, first note that the inequality
	$|a-b|^2 \leq |a^2-b^2|$ for any $a,b\geq 0,$ implies that
	\[
	\left|\|\tilde{A}\tilde{P}_{b_x(n)}\|-\|\tilde{A}P_\Omega\|\right|^2 \leq \left|\|\tilde{A}\tilde{P}_{b_x(n)}\|^2-\|\tilde{A}P_\Omega\|^2\right| .
	\]
	Recalling that $(\cH,\pi_0, \Omega)$ is the GNS representation of the unperturbed ground state $\omega_0$, the second term on the right-hand-side above is simply
	\[
	\|\tilde{A}P_\Omega\|^2 = \braket{\Omega}{\pi_0(A^*A)\Omega}\|P_\Omega\| = \omega_0(A^*A)\|P_{b_x(n)}\|.
	\]
	As $\pi_0$ is norm-preserving, it also follows that
	\[
	\|\tilde{A}\tilde{P}_{b_x(n)}\|^2 = \|\pi_0(AP_{b_x(n)})\|^2 = \|P_{b_x(n)}A^*AP_{b_x(n)}\|.
	\]
	Given these observations, Assumption~\ref{ass:LTQO} then implies that
	\begin{align}
		\left|\|\tilde{A}\tilde{P}_{b_x(n)}\|-\|\tilde{A}P_\Omega\|\right|^2 & \leq \left|\|P_{b_x(n)}A^*AP_{b_x(n)}\|-\omega_0(A^*A)\|P_{b_x(n)}\|\right| \nonumber\\
		& \leq \|P_{b_x(n)}A^*AP_{b_x(n)}-\omega_0(A^*A)P_{b_x(n)}\| \nonumber \\
		& \leq \|A\|^2 (1+m)^{\nu}G_0(n-m),
	\end{align}
	which establishes (\ref{LTQO_cor}).
	
	Now using (\ref{LTQO_cor}) with $\tilde{A} = \sum_{k=R}^m  \tilde{\Phi}_\omega^{(1)}(x,k,s) \in \pi_0(\cA_{b_x(m)})$, one finds that for any $n\geq m$, 
	\begin{eqnarray}\label{Bound1}
		\left\|\sum_{k=R}^m  \tilde{\Phi}_\omega^{(1)}(x,k,s) \tilde{P}_{b_x(n)}\right\| 
		& \leq & \left\|\sum_{k=R}^m  \tilde{\Phi}_\omega^{(1)}(x,k,s) P_{\Omega}\right\| + \left\|\sum_{k=R}^m  \tilde{\Phi}_\omega^{(1)}(x,k,s) \right\| \sqrt{(1+m)^{\nu}G_0(n-m)} \nonumber \\
		& \leq & \left\|\sum_{k=R}^m  \tilde{\Phi}_\omega^{(1)}(x,k,s)P_{\Omega}\right\| + 2sG_\Lambda^{(1)}(x,R)\sqrt{(1+m)^{\nu}G_0(n-m)} \label{Bound1-2}
	\end{eqnarray}
	where the last inequality follows from Theorem~\ref{thm:Step1} as $\|\tilde{\Phi}_\omega^{(1)}(x,k,s)\| \leq 2\|\tilde{\Phi}^{(1)}(x,k,s)\| \leq 2 s G_\Lambda(x,k)$.
	
	The remaining operator norm from \eqref{Bound1} can then be trivially bounded in terms of $\tilde{\Phi}_\omega^{(1)}(x,s)$ from (\ref{gns_global_at_x_c}) as follows:
	\be 
	\left\|\sum_{k=R}^m  \tilde{\Phi}_\omega^{(1)}(x,k,s) P_{\Omega}\right\| \leq \left\| \tilde{\Phi}_{\omega}^{(1)}(x,s) P_{\Omega}\right\|
	+ 
	\sum_{k= m+1}^\infty \left\| \tilde{\Phi}_\omega^{(1)}(x,k,s) P_{\Omega}\right\|.
	\ee
	Once again applying Theorem~\ref{thm:Step1} then shows that
	\be\label{Bound2}
	\sum_{k= m+1}^\infty \left\| \tilde{\Phi}_\omega^{(1)}(x,k,s) P_{\Omega}\right\| \leq 2\sum_{k= m+1}^\infty \left\| \tilde{\Phi}^{(1)}(x,k,s)\right\| \leq 2sG_\Lambda^{(1)}(x,m+1),
	\ee
	and, moreover, $[\tilde{\Phi}_{\omega}^{(1)}(x,s), P_{\Omega}]= [\tilde{\Phi}^{(1)}(x,s), P_{\Omega}]=0$. As a result, $ \left\| \tilde{\Phi}_{\omega}^{(1)}(x,s) P_{\Omega}\right\|=0$ since
	\begin{align}\label{Bound3}
		\tilde{\Phi}_{\omega}^{(1)}(x,s)P_\Omega  = P_\Omega\tilde{\Phi}_{\omega}^{(1)}(x,s)P_\Omega = \braket{\Omega}{\tilde{\Phi}_{\omega}^{(1)}(x,s)\Omega} P_\Omega=0,
	\end{align}
	where the last inequality follows from \eqref{gns_local_at_x_c}-\eqref{gns_global_at_x_c}. Thus, inserting \eqref{Bound2} into \eqref{Bound1-2} proves \eqref{LTQO_est_b}. 
\end{proof}

We now prove Theorem~\ref{thm:Step2}, which uses both Lemma~\ref{lem:LTQO_Est} and the frustration-free property.
\begin{proof}[Proof of Theorem \ref{thm:Step2}]
	Fix $x\in\Gamma$ and recall that $P_\Omega = \ketbra{\Omega}$. Since $[\tilde{\Phi}_{\omega}^{(1)}(x,s), P_\Omega]=0$, one can write
	\[
	\tilde{\Phi}_{\omega}^{(1)}(x,s) = P_\Omega\tilde{\Phi}_{\omega}^{(1)}(x,s)P_\Omega
	+(\idty-P_\Omega)\tilde{\Phi}_{\omega}^{(1)}(x,s)(\idty-P_\Omega) 
	=(\idty-P_\Omega)\tilde{\Phi}_{\omega}^{(1)}(x,s)(\idty-P_\Omega) ,
	\]
	where the last equality uses \eqref{Bound3}. The terms $\tilde{\Phi}^{(2)}(x,m,s)$ are defined by decomposing $\idty-P_\Omega$ in terms of the finite volume ground state projections $\tilde{P}_{n}:=\tilde{P}_{b_x(n)}\in\pi_0(\cA_{b_x(n)})$. 
	
	First, note that $\tilde{P}_{n}$ converges strongly to $P_\Omega$ for all $\psi\in\cH$ by the frustration-free and LTQO properties. 
	As a consequence, the collection of operators
	\begin{align}
		\tilde{E}_n = \begin{cases} \idty - \tilde{P}_{R}, & n= R\\
			\tilde{P}_{n-1}-\tilde{P}_{n}, & n\geq R+1
		\end{cases}
	\end{align}
	forms a family of orthogonal projections that are mutually orthogonal and sum to $\idty - P_\Omega$. That is,
	\be\label{En_props}
	\tilde{E}_n^* = \tilde{E}_n, \quad \tilde{E}_n\tilde{E}_m = \delta_{m,n}\tilde{E}_n, \quad\text{and}\quad (\idty-P_\Omega)\psi = \sum_{n\geq R} \tilde{E}_n\psi \quad \forall \; \psi\in\cH,
	\ee
	where the second equality holds since the frustration-free property implies $\tilde{P}_n\tilde{P}_m = \tilde{P}_m$ for $m\geq n$. Moreover, it trivially holds that
	\be\label{En_finite_sum}
	\idty - \tilde{P}_m = \sum_{n=R}^m \tilde{E}_n.
	\ee
	Using (\ref{gns_global_at_x_c}), the above properties imply that for all $\psi,\phi\in\cH$,
	\begin{align}
		\braket{\phi}{(\idty-P_\Omega)\tilde{\Phi}_{\omega}^{(1)}(x,s)(\idty-P_\Omega) \psi}
		= \sum_{k,m,n\geq R}\braket{\phi}{\tilde{E}_n\tilde{\Phi}_\omega^{(1)}(x,k,s)\tilde{E}_m\psi}
	\end{align}
	We note that the triple sum of operators actually converges absolutely in norm, and so the operator equality holds in the norm sense. 
	
	Each term $\Phi^{(2)}(x,m,s) \in \caA_{b_x(m)}$ will be defined as a sum of two self-adjoint terms
	\[
	\pi_0(\Phi^{(2)}(x,m,s)) = \Theta_1(x,m,s) + \Theta_2(x,m,s)
	\]
	each of which is annihilated by the ground state projection $\tilde{P}_m$. Fix $k\geq R$, and use the properties in \eqref{En_props}-\eqref{En_finite_sum} to write
	\be
	\sum_{n,m\geq R} \tilde{E}_n\tilde{\Phi}^{(1)}_\omega(x,k,s)\tilde{E}_m = (\idty - \tilde{P}_{2k})\tilde{\Phi}^{(1)}_\omega(x,k,s)(\idty - \tilde{P}_{2k}) + \sum_{m>2k} \tilde{\Phi}_{k,m},
	\ee\label{Phi2_initial_decomp}
	where $\tilde{\Phi}_{k,m}=\tilde{\Phi}_{k,m}^*$ is defined by
	\be\label{Phi_km}
	\tilde{\Phi}_{k,m} = \tilde{E}_m\tilde{\Phi}^{(1)}_\omega(x,k,s)(\idty-\tilde{P}_{m-1}) + (\idty-\tilde{P}_m)\tilde{\Phi}^{(1)}_\omega(x,k,s)\tilde{E}_m\in\pi_0(\cA_{b_x(m)}).
	\ee
	Self-adjointness follows from noting that $\idty-\tilde{P}_m = \idty-\tilde{P}_{m-1} + \tilde{E}_m$. 
	
	For each $m\geq R$, define $\Theta_1(x,m,s) \in\pi_0(\cA_{b_x(m)}),$ by
	\be\label{def:Theta_1}
	\Theta_1(x,m,s) = 
	\begin{cases}
		(\idty - \tilde{P}_{m})\tilde{\Phi}^{(1)}_\omega(x,m/2,s)(\idty - \tilde{P}_{m}), & m \mbox{  even } \\ 
		0, & m \mbox{  odd }
	\end{cases}
	\ee
	These operators are self-adjoint, satisfy $\Theta_1(x,m,s)\tilde{P}_{m}=0$, and Theorem~\ref{thm:Step1} implies that their norm is bounded from above by $2sG_\Lambda(x,m/2)$ as for $m$ even:
	\be\label{Theta_1_est}
	\|\Theta_1(x,m,s)\| \leq \|\Phi^{(1)}_\omega(x,m/2,s)\| \leq 2sG_\Lambda(x,m/2). 
	\ee
	
	For the $\Theta_2$ terms, one sums the remaining terms $\sum_{m>2k}\tilde{\Phi}_{k,m}$ over $k$, and then uses the indicator function $\chi_{m>2k}$ to exchange the summations as follows:
	\begin{align*}\label{def:Theta_2}
		\sum_{k\geq R} \sum_{m>2k}\tilde{\Phi}_{k,m} & = \sum_{k\geq R}\sum_{m>2R}\tilde{\Phi}_{k,m}\chi_{\{m>2k\}} \\ 
		& = \sum_{m>2R}\sum_{R\leq k < m/2}\tilde{\Phi}_{k,m} \\
		& = \sum_{m>2R}\Theta_2(x,m,s) 
	\end{align*}
	where, for $m>2R$ one recalls \eqref{Phi_km} and defines
	\be
	\Theta_2(x,m,s) = \sum_{k=R}^{\lceil m/2 \rceil -1} \tilde{E}_m\tilde{\Phi}^{(1)}_\omega(x,k,s)(\idty-\tilde{P}_{m-1}) + (\idty-\tilde{P}_m)\tilde{\Phi}^{(1)}_\omega(x,k,s)\tilde{E}_m.
	\ee
	This definition is extended by setting $\Theta_2(x,m,s)=0$ for $R\leq m \leq 2R$. One sees that $\Theta_2(x,m,s)\tilde{P}_m=0$ by applying $\tilde{P}_m\tilde{P}_{m-1}=\tilde{P}_m$ to verify 
	\[\tilde{E}_m\tilde{P}_m = (\tilde{P}_{m}-\tilde{P}_{m-1})\tilde{P}_m=0.\]
	Finally, applying Lemma~\ref{lem:LTQO_Est},
	\begin{align}
		\|\Theta_2(x,m,s)\| & \leq 2\left\| \sum_{k=R}^{\lceil m/2 \rceil -1} \tilde{\Phi}^{(1)}_\omega(x,k,s)(\idty-\tilde{P}_{m-1}) \right\| \nonumber\\
		& \leq 4s\left(G_\Lambda^{(1)}(x,\lceil m/2 \rceil )+G_\Lambda^{(1)}(x,R)\sqrt{\lceil m/2 \rceil ^{\nu}G_0(m/2)}\right) \label{Theta_2_est}
	\end{align}
	where $G_\Lambda^{(1)}(x,m) = \sum_{n\geq m} G_\Lambda(x,n)$.
	
	Therefore, setting $\tilde{\Phi}^{(2)}(x,m,s) = \Theta_1(x,m,s)+\Theta_2(x,m,s)$ for all $m$, we have constructed self-adjoint operators such that
	\be\label{phi2_decomp}
	W(\Lambda,s) = \sum_{x\in\Gamma}\sum_{m\geq R} \tilde{\Phi}^{(2)}(x,m,s).
	\ee
	Moreover, these operators satisfy that for all $m\geq R$ and $x\in\Gamma$,
	\[
	\tilde{\Phi}^{(2)}(x,m,s)\tilde{P}_m = \tilde{P}_m\tilde{\Phi}^{(2)}(x,m,s) = 0
	\]
	and, combining \eqref{Theta_1_est} and \eqref{Theta_2_est}, their norms can be bounded above by
	\be \label{Phi2_term_bound}
	\|\tilde{\Phi}^{(2)}(x,m,s)\| \leq 2sG_\Lambda^{(2)}(x,m)
	\ee
	where
	\be
	G_\Lambda^{(2)}(x,m) = G_\Lambda(x,m/2)+2G_\Lambda^{(1)}(x,\lceil m/2 \rceil )+2G_\Lambda^{(1)}(x,R)\sqrt{\lceil m/2 \rceil^{\nu}G_0(m/2)}.
	\ee
	
	The absolute summability of the series in \eqref{phi2_decomp} is a direct consequence of $G_0$ being summable as well as that both $G_{\Lambda}$ and $G_{\Lambda}^{(1)}$ satisfy \eqref{G_summable}. For $G_\Lambda^{(1)}$ this can easily be seen from the fact that $G_\Lambda$ is a combination of functions with finite moments of all orders, see specifically \eqref{G_Lambda}, \eqref{G_far_est_1} and \eqref{long_decay}.
\end{proof}

\section{Proving Theorems  \ref{thm:finite_volume}  and \ref{thm:infinite_volume} via a form bound for the GNS Hamiltonian}\label{sec:form_bound}

In this section, we generalize \cite[ Theorem 3.8]{nachtergaele:2022}, which was itself based off \cite[Proposition 2]{michalakis:2013}, so that it is applicable to the setting of infinite systems in their GNS representation. Afterwards, we apply the form bound in conjunction with Theorem~\ref{thm:Step2} to prove Theorem~\ref{thm:finite_volume}. We then conclude with the proof of Theorem~\ref{thm:infinite_volume}, which follows as a consequence of Theorem~\ref{thm:finite_volume}.

\subsection{The Michalakis-Zwolak relative form bound}
The form bound result is stated in the setting described in Section \ref{sec:main} and, in particular, under Assumption~\ref{ass:localgaps}. Thus, there is a family of sets \[\mathcal{S} = \left\{ \Lambda(x,n) |  x \in \Gamma, n \geq 0 \mbox{ s.t. } b_x(n)\subseteq \Lambda(x,n) \right\},\] accompanied with
a family $\mathcal{T} = \{ \mathcal{T}_n : n \geq 0\}$ of separating partitions of $(c , \zeta)$-polynomial growth,
for which the associated finite volume Hamiltonians satisfy
\be \label{gap_gam_n}
H_{\Lambda(x,n)} \geq \gamma(n) P_{\Lambda(x,n)}, \mbox{ for all } n\geq R \, .
\ee
Moreover, the local gaps are further assumed to satisfy $\gamma(n) \geq \gamma_1/n^\alpha$
for some $\gamma_1>0$ and $\alpha\geq 0$.

\begin{thm}[Michalakis-Zwolak \cite{michalakis:2013}]\label{thm:gen_form_bd}
	Let $H_0$ be the GNS Hamiltonian associated with a zero-energy ground state of an initial system satisfying Assumptions~\ref{ass:initial_model} and \ref{ass:localgaps}, and let $V\in\cA$ be a perturbation associated with an absolutely-summable, anchored interaction on $(\Gamma, d)$. That is, there exist $\Phi(x,n)^* = \Phi(x,n) \in \mathcal{A}_{b_x(n)} $ for all $x\in \Gamma$ and $n\geq R$ such that
	\begin{equation}
		V = \sum_{x \in \Gamma} \sum_{n \geq R} \Phi(x,n), \qquad  \sum_{x \in \Gamma}\sum_{n \geq R} \| \Phi(x,n) \| < \infty. \label{def_V}	
	\end{equation}
	In addition, assume that terms of $V$ annihilate the finite-volume ground state projections of the initial system, i.e.
	\begin{equation} \label{terms_annih}
		\Phi(x,n)P_{b_x(n)} = P_{b_x(n)} \Phi(x,n) = 0 \quad \mbox{for all } x \in \Gamma \mbox{ and } n \geq R \, ,
	\end{equation}
	and that there is a decay function $G$ with finite $(\zeta+\alpha)$-moment such that $	\sup_{x \in \Gamma} \| \Phi(x,n) \| \leq G(n) \, .$
	Then, for all $\psi \in \dom{H_0}$,
	\begin{equation}
		\label{V_form_bd}
		\left| \braket{\psi}{\pi_{\omega}(V) \, \psi} \right|
		\; \leq \;
		\beta
		\braket{\psi}{H_0\, \psi} \quad \text{where}\quad	\beta= c\sum_{n \geq R} \frac{n^\zeta G(n)}{\gamma(n)}\leq \frac{c}{\gamma_1}\sum_{n\geq R}n^{\zeta+\alpha}G(n). 
	\end{equation}
\end{thm}

In keeping with the notation from the previous sections, denote by $\tilde{A} = \pi_0(A)$ the image of any observable $A\in \cA$ under the GNS representation $(\caH,\pi_0,\Omega)$ of the zero-energy ground state $\omega_0$. The proof of Theorem~\ref{thm:gen_form_bd} follows closely the argument proving \cite[Theorem 3.8]{nachtergaele:2022}, with the proviso that one must check that the infinite operator sums replacing the finite operator sums from \cite{nachtergaele:2022} are well-defined. 

To this end, let $n \geq R$, and consider the $n$-th separating partition $\mathcal{T}_n = \{ \mathcal{T}_n^i : i \in \mathcal{I}_n \}$. 
For each $i \in \mathcal{I}_n$ and any choice of $x,y \in \mathcal{T}_n^i$, one has that
\begin{equation} \label{terms_commute}
	[\tilde{H}_{\Lambda(x,n)}, \tilde{H}_{\Lambda(y,n)}] = 0 \quad \mbox{and} \quad [\tilde{P}_{\Lambda(x,n)}, \tilde{P}_{\Lambda(y,n)}] = 0 \, .
\end{equation}
This follows since the corresponding algebra elements $H_{\Lambda(x,n)},H_{\Lambda(y,n)}\in \cA$ are supported on disjoint sets (and similarly for the ground state projections), which carries over to the GNS space by the homomorphism property.

With $n \geq R$ and $i \in \mathcal{I}_n$ fixed, denote by $C_n^i$ the collection of all configurations associated to
$\mathcal{T}_n^i$. More precisely, 
\begin{equation}
	C_n^i = \{0, 1\}^{\mathcal{T}_n^i} = \left\{ \sigma : \sigma = \{ \sigma_x \} \mbox{ where } \sigma_x \in \{0, 1\} \mbox{ for all } x \in \mathcal{T}_n^i \right\} \, .
\end{equation}
For each $ \sigma \in C_n^i $, we define $|\sigma|$ by
\begin{equation}
	| \sigma | = \sum_{x \in \mathcal{T}_n^i} \sigma_x .
\end{equation}

Recall that $\pi_{0}(\cA^{\rm loc}) \Omega$ is a 
dense subspace of $\mathcal{H}$. Let $\psi = \tilde{A} \Omega$ for some $A \in \mathcal{A}_X$, and set $\tilde{Q}_{\Lambda(x,n)}= \idty-\tilde{P}_{\Lambda(x,n)}$. In this case, one has that
\begin{equation}
	\tilde{P}_{\Lambda(x,n)} \psi = \psi \quad \mbox{and} \quad \tilde{Q}_{\Lambda(x,n)} \psi = 0 \, 
\end{equation}
whenever $\Lambda(x,n)$ satisfies $\Lambda(x,n) \cap X = \emptyset$. It is also clear that for 
any $X \in \mathcal{P}_0( \Gamma)$, the set of sites $\{x \in \mathcal{T}_n^i : \Lambda(x,n) \cap X \neq \emptyset \}$ is finite.
Thus, for any $\psi \in \pi_{0}(\cA^{\rm loc}) \Omega$, define the operator $S(\sigma)$ by
\begin{equation}
	S( \sigma) \psi = \left\{ \begin{array}{cl} \prod_{x \in \mathcal{T}_n^i} \left[ \sigma_x \tilde{Q}_{\Lambda(x,n)} + (1- \sigma_x) \tilde{P}_{\Lambda(x,n)} \right] \psi &
		\mbox{if } |\sigma| < \infty , \\ 0 & \mbox{otherwise}.
	\end{array} \right.
\end{equation}
Note that, if $|\sigma| < \infty$, then at most finitely many of these factors act non-trivially, and
moreover, by (\ref{terms_commute}), all factors above commute. Since $\pi_{0}(\cA^{\rm loc}) \Omega$ is
dense, there is a unique extension of $S(\sigma)$ to an element of $\mathcal{B}( \mathcal{H}_{\omega})$
for each $\sigma \in C_n^i$. One checks that these operators satisfy:
\begin{equation} \label{S_props}
	S(\sigma)^* = S( \sigma), \quad S(\sigma) S(\sigma') = \delta_{\sigma, \sigma'} S(\sigma), \quad \mbox{and} \quad
	\sum_{\sigma \in C_n^i} S(\sigma) = \idty \, .
\end{equation}

Since $\tilde{Q}_{\Lambda(x,n)}=\idty-\tilde{P}_{\Lambda(x,n)}$, by \eqref{terms_commute} one clearly has that
\begin{equation} \label{Q_P_com}
	[ \tilde{Q}_{\Lambda(x,n)}, \tilde{P}_{\Lambda(y,n)}]=0 \quad \mbox{for all } \; x,y \in \mathcal{T}_n^i \,
\end{equation}
and as a result, also 
\begin{equation} \label{Q_S_com}
	[\tilde{Q}_{\Lambda(x,n)}, S(\sigma)]=0 \quad \mbox{for each } x \in \mathcal{T}_n^i \mbox{ and all }  \;\sigma \in C_n^i.
\end{equation}
For $\sigma \in C_n^i$ with $|\sigma| < \infty$, (\ref{Q_P_com}) implies 
$0 = [ \tilde{Q}_{\Lambda(x,n)}, S(\sigma)] \psi$ for all $\psi \in \pi_{0}(\cA^{\rm loc}) \Omega$ and, hence, for all $\psi\in\cH$. If $|\sigma|=\infty$, then the commutator is zero by definition.
%
Combining (\ref{S_props}) and (\ref{Q_S_com}), we find that 
\begin{equation} \label{SQS}
	S(\sigma) \tilde{Q}_{\Lambda(x,n)} S(\sigma') = \delta_{\sigma, \sigma'} \tilde{Q}_{\Lambda(x,n)} S(\sigma) =  \delta_{\sigma, \sigma'} ( \idtyty - \tilde{P}_{\Lambda(x,n)}) S(\sigma)  =  \delta_{\sigma, \sigma'} \sigma_x S(\sigma)
\end{equation}
for all $\sigma, \sigma' \in C_n^i$ and each $x \in \mathcal{T}_n^i$.

We use these families of orthogonal projections in the following proof.

\begin{proof}[Proof of Theorem~\ref{thm:gen_form_bd}]
	In the GNS representation, the terms of $\tilde{V}$ can be rearranged using the family of separating partitions from Assumption~\ref{ass:localgaps} as
	\begin{equation}
		\tilde{V} = \sum_{x \in \Gamma} \sum_{n \geq R} \tilde{\Phi}(x,n) = \sum_{n \geq R} \sum_{i \in \mathcal{I}_n} \tilde{V}_n^i, \qquad 	\tilde{V}_n^i := \sum_{x \in \mathcal{T}_n^i} \tilde{\Phi}(x,n) 
	\end{equation}
	since $V$ is absolutely summable.	As a result, for any $\psi \in \mathcal{H}$, one has the bound
	\begin{equation} \label{basic_V_bd}
		|\langle \psi, \tilde{V} \psi \rangle | \leq \sum_{n \geq R} \sum_{i \in \mathcal{I}_n} |\langle \psi, \tilde{V}_n^i \psi \rangle | \quad \mbox{with} \quad |\langle \psi, \tilde{V}_n^i \psi \rangle | \leq \sum_{x \in \mathcal{T}_n^i} |\langle \psi, \tilde{\Phi}(x,n) \psi \rangle | \,.
	\end{equation}
	
	Now, since $b_x(n) \subset \Lambda(x,n)$ for all $\Lambda(x,n) \in \mathcal{S}$, 
	the frustration free property implies
	\begin{equation}
		P_{\Lambda(x,n)} = P_{b_x(n)} P_{\Lambda(x,n)} = P_{\Lambda(x,n)} P_{b_x(n)},
	\end{equation}
	and so by (\ref{terms_annih}) it follows that that for each $x\in\mathcal{T}_n^i$, the term $\tilde{\Phi}(x,n)$ satisfies 
	\begin{equation} \label{bits_commute}
		[ \tilde{\Phi}(x,n), \tilde{P}_{\Lambda(y,n)}]=0 \quad \mbox{for all } y \in \mathcal{T}_n^i. 
	\end{equation} 
	Arguing as in \eqref{Q_S_com}-\eqref{SQS} above, one then finds that for all $x \in \mathcal{T}_n^i$ and $\sigma,\sigma' \in C_n^i$
	\begin{equation} 
		[\tilde{\Phi}(x,n), S(\sigma)] =0 \quad \label{P_S_com}
	\end{equation}
	and, moreover,
	\begin{align}
		S(\sigma) \tilde{\Phi}(x,n) S(\sigma') &= \delta_{\sigma, \sigma'} S(\sigma) \tilde{\Phi}(x,n)  \nonumber\\
		& =  
		\delta_{\sigma, \sigma'} \sigma_x S(\sigma) \tilde{\Phi}(x,n)  \nonumber\\
		& = \delta_{\sigma, \sigma'} S(\sigma) \tilde{Q}_{\Lambda(x,n)} \tilde{\Phi}(x,n) \tilde{Q}_{\Lambda(x,n)} S(\sigma) \,. \label{Phi_to_Q}
	\end{align}
	As a consequence, one can use (\ref{S_props}) and (\ref{Phi_to_Q}) to bound
	\begin{eqnarray}
		| \langle \psi, \tilde{\Phi}(x,n) \psi \rangle | \leq \sum_{\sigma, \sigma'} | \langle \psi, S(\sigma) \tilde{\Phi}(x,n) S(\sigma') \psi \rangle | & = & 
		\sum_\sigma | \langle \psi, S(\sigma) \tilde{Q}_{\Lambda(x,n)} \tilde{\Phi}(x,n) \tilde{Q}_{\Lambda(x,n)} S(\sigma) \psi \rangle | \nonumber \\
		& \leq & G(n)  \sum_\sigma \| \tilde{Q}_{\Lambda(x,n)} S(\sigma) \psi \|^2 .
	\end{eqnarray}
	Summing these orthogonal projections again, one obtains
	\begin{equation}
		\sum_\sigma \| \tilde{Q}_{\Lambda(x,n)} S(\sigma) \psi \|^2 = \sum_\sigma \langle \psi,  \tilde{Q}_{\Lambda(x,n)} S(\sigma) \psi \rangle=
		\langle \psi,  \tilde{Q}_{\Lambda(x,n)}  \psi \rangle \leq \frac{1}{\gamma(n)} \langle \psi,  \tilde{H}_{\Lambda(x,n)}  \psi \rangle
	\end{equation}
	where, the final bound follows since  the representation 
	preserves positivity and the operator inequality $\gamma(n) Q_{\Lambda(x,n)} \leq H_{\Lambda(x,n)}$ holds in the algebra.  
	
	Given the above, one concludes that for any $\psi \in {\rm dom}(H_0)$,
	\begin{equation}
		| \langle \psi, \tilde{V}_n^i \psi \rangle| \leq \sum_{x \in \mathcal{T}_n^i} | \langle \psi, \tilde{\Phi}(x,n) \psi \rangle| \leq  \frac{G(n)}{\gamma(n)}  \sum_{x \in \mathcal{T}_n^i}  \langle \psi, \tilde{H}_{\Lambda(x,n)} \psi \rangle \leq \frac{G(n)}{\gamma(n)} \langle \psi, H_0\psi \rangle 
	\end{equation}
	where the last inequality uses that $\Lambda(x,n) \cap \Lambda(y,n) = \emptyset$ for distinct $x, y \in \mathcal{T}_n^i$. Returning to (\ref{basic_V_bd}), since these partitions satisfy the $(c,\zeta)$-polynomial growth bound it is clear that
	\begin{equation}
		| \langle \psi, \tilde{V} \psi \rangle| \leq \sum_{n \geq R} \sum_{i \in \mathcal{I}_n} | \langle \psi, \tilde{V}_n^i \psi \rangle| \leq \beta  \langle \psi, H_0 \psi \rangle \, , 
	\end{equation}
	with $\beta$ as in (\ref{V_form_bd}). This completes the proof.  
\end{proof}

\subsection{Proofs of Theorems \ref{thm:finite_volume} and \ref{thm:infinite_volume}}\label{sec:final_summary}

Theorem \ref{thm:Step2} showed that for every finite $\Lambda\subset\Gamma$, the Hamiltonian 
$$
H(\Lambda,s) = H_0 + s\pi_0(V_\Lambda) = H_0 + s \sum_{b_x(n)\subset\Lambda} \pi_0(\Phi(x,n))
$$
transforms under the spectral flow unitary as follows
$$
\tilde{U}(\Lambda,s)^*H(\Lambda,s)\tilde{U}(\Lambda,s)\psi - E(\Lambda,s)\psi = H_0\psi + \sum_{x\in\Gamma}\sum_{m\geq R}\pi_0(\Phi_\Lambda^{(2)}(x,m,s))\psi,
$$
where $E(\Lambda,s)$ is the ground state energy of $H(\Lambda,s)$, and $\Phi_\Lambda^{(2)}(x,m,s)$  is a balled interaction satisfying the conditions of Theorem~\ref{thm:gen_form_bd} with norm bounds that are linear in $|s|$ and given in terms of a $\Lambda$-dependent decay function. 
However, we will show below that the constant $\beta$ from Theorem~\ref{thm:gen_form_bd} can be taken independent of $\Lambda$. Theorem~\ref{thm:finite_volume} will then follow from applying \cite[Corollary 3.3]{nachtergaele:2022}. In our context, the latter result states the following. Suppose that $H_0$ is a self-adjoint, positive operator on a Hilbert space $\cH$ with $\min\spec H_0 = 0$ and $(0,\gamma_0)\cap \spec(H_0) = \emptyset$. Then, for any $V=V^*\in \cB(\cH)$ such that there exists $0\leq \beta<1$ for which
\[|\braket{\psi}{V\psi}|\leq \beta\braket{\psi}{H_0\psi} \quad \forall \psi \in \dom(H_0),\]
one has that
\begin{equation}\label{perturbed_gap}
	\spec(H_0+V) \cap \left(0,(1-\beta)\gamma_0\right) =\emptyset.
\end{equation}

\begin{proof}[Proof of Theorem \ref{thm:finite_volume}]
	Fix $\gamma \in (0,\gamma_0)$ where $\gamma_0$ is as in Assumption~\ref{ass:bulkgap}, and let $\Lambda \in \cP_0$ be arbitrary. Given Theorem~\ref{thm:Step2}, it is clear that Theorem~\ref{thm:gen_form_bd} applies and produces a non-trivial form bound (see \eqref{V_form_bd}) so long as
	\[
	G_\Lambda^{(2)}(m) := \sup_{x\in \Gamma} G_\Lambda^{(2)}(x,m)
	\] 
	has a finite $(\zeta+\alpha)$-moment where, as stated in Assumption~\ref{ass:localgaps}, $\zeta$ is the polynomial growth of the separating partitions and $\gamma(n) \geq \gamma_1 n^{-\alpha}$ is the lower bound on the local gaps. Moreover, to show that 
	\be\label{uniform_s}
	s_0(\gamma):=\inf_{\Lambda\in \cP_0} s_0^\Lambda(\gamma) >0,
	\ee
	see \eqref{gap_constraint}, we wish to show this form bound is uniform in $\Lambda$. 
	
	To this end, recall that $G_\Lambda$ from \eqref{G_Lambda} is defined in terms of two functions $G_1$, $G_2$ that are independent of $\Lambda$ and decay faster than any polynomial. It is then clear that for all $\Lambda\in\cP_0$, $x\in\Gamma$ and $m\geq R$,
	\[
	G_\Lambda(x,m) \leq G_1(m) + G_2(0,m) =: G(m),
	\]
	and similarly, $G_\Lambda^{(1)}(x,m)\leq G^{(1)}(m)$ where $G^{(1)}(m) = \sum_{m\geq R}G(m)$. It follows immediately that
	\[
	\sup_{\Lambda\in\cP_0} G_\Lambda^{(2)}(m) \leq G^{(2)}(m) := G(m/2)+2G^{(1)}(m/2) +2G^{(1)}(R)\sqrt{(1+m)^\nu G_0(m/2)}
	\]
	and that $G^{(2)}$ has a finite $(\zeta+\alpha)$-moment as long as $G_0$ satisfies \eqref{LTQO_moment}.
	
	Given the norm bound from Theorem~\ref{thm:Step2}, it follows from Theorem~\ref{thm:gen_form_bd} that for all $\psi\in \dom(H_0)$
	\[
	\braket{\psi}{W(\Lambda,s)\psi} \leq s\beta\braket{\psi}{H_0\psi} \quad \text{where} \quad \beta = 2c\sum_{m\geq R} \frac{n^{\zeta}G^{(2)}(m)}{\gamma(m)}.
	\]
	Thus, ${\rm gap}(H(\Lambda,s)) \geq \gamma_0 -s\beta \gamma_0$ by \eqref{perturbed_gap} and, in particular, \eqref{uniform_s} holds since this implies
	$$
	s_0^\Lambda(\gamma)\geq \frac{\gamma_0 - \gamma}{\beta \gamma_0}.
	$$
\end{proof}

We conclude with using the uniform estimate from Theorem~\ref{thm:finite_volume} to establish the claimed lower bound estimate on the gap of the extensively perturbed system from Theorem~\ref{thm:infinite_volume}.

\begin{proof}[Proof of Theorem \ref{thm:infinite_volume}]
	Let $0<\gamma < \gamma_0$ and consider $|s|\leq s_0(\gamma)$. Recall that for any IAS $(\Lambda_n)$, the following limits hold in $\cA$ as $n\to\infty$:
	\begin{align}
		\alpha_s^{\Lambda_n}(A) &\to \alpha_s(A), \quad \text{for all} \quad A\in\cA \\
		\delta_s^{\Lambda_n}(A) &\to \delta_s(A), \quad \text{for all} \quad A\in\cA_{\rm loc},\label{delta_s}
	\end{align}
	see \eqref{limiting_der} and \eqref{limiting_spec_flow}. As a consequence, $\omega_s^{\Lambda_n}(A) = \omega_0\circ \alpha_s^{\Lambda_n}(A) \to \omega_s(A)$ for all $A\in\cA$ and, moreover, $\omega_s$ is a ground state of $\delta_s$.
	
	
	Now consider $A \in \cA^{\rm loc}$ for which $\omega_s(A) =0$. 
	Given Theorem \ref{thm:finite_volume}, the GNS Hamiltonian $H(\Lambda_n, s)$ along any IAS $(\Lambda_n)$ has a gap above its unique ground state lower bounded by $\gamma$ for all $|s|\leq s_0(\gamma)$. Therefore, $\omega_s^{\Lambda_n}$ satisfies \eqref{gap_check} and, in particular, the inequality
	\be
	\omega_s^{\Lambda_n}(B_n^*\delta_s^{\Lambda_n}(B_n)) \geq \gamma  \omega_s^{\Lambda_n}(B_n^*B_n)
	\label{n-gap}\ee
	holds for the observable $B_n = A - \omega_s^{\Lambda_n}(A) \idty \in \cA^{\rm loc}$. 
	Combining (\ref{delta_s}) and the local uniform convergence of $\alpha_s^{\Lambda_n}$, the limit $n \to \infty$
	can be taken on both sides of \eq{n-gap} to obtain \eq{perturbedGNSgap}. The remaining claims follow as in  (\ref{gap_check}).
\end{proof}


 \section*{Acknowledgements}

Based upon work supported by the National Science Foundation under grants DMS-1813149 and DMS-21083901 (BN), and the DFG under EXC-2111--390814868 (AY). BN and AY acknowledge support of the Erwin Schrödinger International Institute for Mathematics and Physics (ESI), where part of this work was carried out during the ``Tensor Networks: Mathematical Structures and Novel Algorithms'' workshop. BN also gratefully acknowledges kind hospitality at the Technical University Munich during the final stages of this work and the Alexander von Humboldt Foundation for support provided through a Carl Friedrich von Siemens Research Award.

\providecommand{\bysame}{\leavevmode\hbox to3em{\hrulefill}\thinspace}
\providecommand{\MR}{\relax\ifhmode\unskip\space\fi MR }
\providecommand{\MRhref}[2]{%
	\href{http://www.ams.org/mathscinet-getitem?mr=#1}{#2}
}
\providecommand{\href}[2]{#2}

\end{document}